\newcommand\version{October 14, 2022}
\newtheorem{theorem}{Theorem}[section]
\newtheorem{proposition}[theorem]{Proposition}
\newtheorem{lemma}[theorem]{Lemma}
\newtheorem{corollary}[theorem]{Corollary}
\theoremstyle{definition}
\newtheorem{definition}[theorem]{Definition}
\newtheorem{example}[theorem]{Example}
\theoremstyle{remark}
\newtheorem{remark}[theorem]{Remark}
\numberwithin{equation}{section}
\renewcommand{\epsilon}{\varepsilon}
\begin{document}

\title[Delta Beam Operators--- \version]{Surgery transformations and spectral estimates of $\delta$ beam operators}

\author{Aftab Ali and Muhammad Usman}
\address{Aftab Ali, Department of Mathematics, Lahore University of Management Sciences (LUMS), Lahore}
\email{18070008@lums.edu.pk}
\address{Muhammad Usman, Department of Mathematics, Lahore University of Management Sciences (LUMS), Lahore}
\email{m.usman08@alumni.imperial.ac.uk}

\keywords{Beam operators; Metric graphs; Surgery Principles; Bounds on eigenvalues}

\subjclass[2010]{Primary: 34B45, 35P15; Secondary: 05C50}
\begin{abstract} 
We introduce $\delta$ type vertex conditions for beam operators, the fourth derivative operator, on metric graphs and study the effect of certain geometrical alterations (graph surgery) of the graph on the spectra of beam operators on compact metric graphs. Results are obtained for a class of vertex conditions which can be seen as an analogue of $\delta$ vertex conditions for quantum graphs. There are a number of possible candidates of $\delta$ type conditions for beam operators. We develop surgery principles and record the monotonicity properties of the spectrum, keeping in view the possibility that vertex conditions may change within the same class after certain graph alterations. We also demonstrate the applications of surgery principles by obtaining several lower and upper estimates on the eigenvalues.      
\end{abstract}

\maketitle
\tableofcontents
\section{Introduction}
The study of spectral estimates of quantum graphs has attracted significant attention in the last couple of decades. One of the major objectives of these studies is to shed light on the relationships between spectrum and topological properties of quantum networks and, by doing so, develop methods and techniques to estimate the eigenvalues in terms of parameters associated with differential operators and graph topology or connectivity. Most of the recent studies have been focused on investigating the effect of surgical operations on the spectrum of a given graph and how can one use a set of surgical operations for spectral estimations by transforming a graph into a simpler one by precisely predicting the corresponding change in spectrum. This leads to certain spectral estimates on the original graph in terms of quantities such as total length, Betti number and number of edges and vertices. These techniques were applied successfully to derive lower and upper bounds on the spectral gap and higher eigenvalues for graph Laplacian equipped with Kirchhoff, $\delta$ and Dirichlet vertex conditions, see \cite{BKKM} and references therein. \\

In the current paper, motivated by \cite{BKKM}  and \cite{KM}, we discuss the surgery principles and spectral estimates of fourth order operator $\frac{d^4}{dx^4}$ on metric graphs equipped with $\delta-$type conditions. We will refer to this operator as $\delta$ beam operator. 
Several possible vertex conditions for beam operators correspond to analogous $\delta$-type conditions for Laplacian. Among all others, we only consider four of their kinds, see vertex condition \eqref{vc1}, \eqref{gvc1},  \eqref{gvc2} and \eqref{vc2}. The special cases of these conditions have already appeared at several places in connection with the studies of Euler-Bernoulli beam operators. For instance, in \cite{DN} Dekoninck and Nicaise study the characteristic equation for the spectrum of the Euler-Bernoulli beam operator on networks equipped with vertex conditions \eqref{vc1}  and \eqref{gvc2} with zero strengths and showed the dependence of spectrum  on the structure of the graph. In \cite{DN1} they study the exact controllability problem of hyperbolic systems of networks of beams subject to the special cases of vertex conditions \eqref{gvc1} with zero strengths. Gregorio and Mungnolo in \cite{GM} have discussed the qualitative properties of one-dimensional parabolic
evolution equations associated with the fourth order differential operator, defined on metric graphs equipped with these vertex conditions, and have investigated how different transmission conditions in the vertices of a network can arouse rather different behaviors of solutions to such partial differential equations. In \cite{KKU}, based on physical-geometrical considerations of beams embedded in $\mathbb{R}^2$, self-adjoint vertex conditions were derived, which take into account the geometry of the underlying graph and depend on angles between the edges. The obtained vertex conditions is a particular case of \eqref{gvc2}. In three dimension, the vertex conditions and corresponding self-adjoint beam operators for rigid joints of elastic beam frames have recently appeared in \cite{SBE, GBE}.\\

Surgery principles for beam operators were recently studied by Kurasov and Muller in \cite{KM}. Vertex condition \eqref{vc2} with zero strengths were considered and certain lower and upper estimates were obtained on the spectral gap. For Laplacian, on the other hand, there has been significant progress on the topic of spectral estimation using surgery principles and other methods, see \cite{A, BKKM1,KKK,KKMM,F,R,RS, KN, KMN}.  In \cite{N} Nicaise showed that among all graphs with a fixed length, the minimal spectral gap is obtained for the single-edge graph.
In \cite{F} Friedlander proved a more general result, showing that the minimum of the $k$-th eigenvalue is uniquely obtained for a star graph with $k$ edges. Exner and Jex \cite{EJ} showed how the change of graph edge lengths may increase or decrease the spectral gap, depending on the graph's topology. In the last couple of years, a series of works on the subject came to light. Kurasov and Naboko \cite{PK} treated the spectral gap minimization and together with Malenov\'a they explored how spectral gap changes with various modifications of graph connectivity \cite{KMN}. Kennedy, Kurasov, Malenov\'a and Mugnolo provided a broad survey on bounding the spectral gap in terms of various geometric quantities of the graph \cite{KKMM}. Karreskog, Kurasov and Kupersmidt generalized the minimization results mentioned above to Schr\"odinger operators with potentials and $\delta$-type vertex conditions. Del Pezzo and Rossi proved upper and lower bounds for the spectral gap of the $p$-Laplacian and evaluated its derivatives with respect to change of edge lengths
\cite{PR}. Rohleder solved the spectral gap maximization problem for all eigenvalues
of tree graphs \cite{R}. Ariturk provides some improved upper bounds for all graph eigenvalues \cite{A}. Berkolaiko, Kennedy, Kurasov and
Mugnolo further generalize lower and upper bounds of the spectral gap in terms of the edge connectivity \cite{BKKM1} and intensively studied the surgery principles for Laplacian in \cite{BKKM}.
\\

The paper is organized as follows. In Section $2$ we introduce the so called $\delta$ beam operators. The quadratic forms and their domains associated with different vertex conditions are discussed.  This is followed by some preliminary results in Section $3$. Section $4$ deals with the eigenvalues dependence on the vertex conditions, these results co-relate the eigenvalues of same underlying metric graph subject to different vertex conditions. In particular, \textcolor{blue}{Theorem} \eqref{strength} shows that the eigenvalues depend positively on the vertex interaction strengths. Section $5$ and $6$ are regarding the surgery principles for the metric graphs with beam operators. In Section $5$, we consider surgical transformations which changes  graph's connectivity but total length remains the same. These operations include the gluing and splitting of vertices.  This surgical operation is classified into two major parts depending on vertex conditions. Depending on the choice of vertex conditions, the eigenvalues of glued graph are interlaced by degrees one or two with the eigenvalues of unglued graph. 
The operation of splitting a vertex, in general, does not behave as a converse of gluing. Section $6$ deals with the surgical operation that increases the total length of a graph. This includes attaching a pendant graph and inserting a graph at some vertex of another graph. Both of these operations lower the eigenvalues of a metric graph. Section $7$  and $8$ are devoted to applying surgery principles to obtain the eigenvalues estimates. Some of the estimates on the lowest non-zero eigenvalue are obtained using some trial functions from the domain of the quadratic form.  The bounds on general eigenvalues are derived by using topological alterations of graphs affecting the eigenvalues in a predictable manner; with these techniques, certain classes of metric graphs permit stronger upper estimates (bipartite graphs), whilst others permit stronger lower estimates (Eulerian graphs). To obtain improved spectral estimates for beam operators appearing in \cite{SBE, GBE}, we plan to extend the surgery principles to such formulation of beam operators in our future work.
\section{The $\delta$ beam operators}
Let $\Gamma$ be a finite compact and connected metric graph with edge set $E=\{e_i\}_{i=1}^{|E|}$ and vertex set $V=\{v_j\}_{j=1}^{|V|}$. Each edge $e_i$ is identified with an interval $[x_{2i-1}, x_{2i}]$ of the real-line and each vertex $v_j$ can be considered as a partition of the set of all end points $\{x_k\}_{k=1}^{2|E|}$.\\

We define the \textit{$\delta$ beam operators} via the closed quadratic form
\begin{equation}\label{QF}
h[\varphi]:=\int_\Gamma |\varphi''(x)|^2\,dx+\sum_{m=1}^{|V|}\alpha_m|\varphi(v_m)|^2
\end{equation}
with $\varphi\in W^2_2(\Gamma\backslash V)$ and $\alpha_m\in\mathbb{R}$. It is natural to assume that the beams are connected at vertices. This means that the functions satisfy continuity condition 
\begin{equation}\label{continuity}
\varphi(x_i)=\varphi(x_j)\equiv\varphi(v_m), \quad x_i, x_j \in v_m
\end{equation}
at the vertices.
If no further conditions are assumed on the domain of the quadratic form $h$ then \eqref{QF} corresponds to the fourth order differential operator 
$\frac{d^4}{dx^4}$
 defined on the functions $\varphi\in W_2^4(\Gamma \backslash V)$ satisfying vertex conditions
\begin{equation}
\tag{C1}
\label{vc1} \begin{cases} \varphi(x_i)=\varphi(x_j)\equiv\varphi(v_m),\quad x_i,x_j\in v_m,\\ \partial^2\varphi(x_j)=0, \\
\sum\limits_{x_j\in v_m}\partial^3\varphi(x_j)=-\alpha_m\varphi(v_m).
\end{cases}\end{equation}
In addition to the continuity condition \eqref{continuity} if we further assume that normal derivatives of functions from the domain of the quadratic form $h$ satisfies the condition
\begin{equation}\label{gfirstderivative}
\sum_{x_j\in v_m}\overline{ \sigma^{v_m}_{x_j}}\partial\varphi(x_j)=0, \quad \sigma^{v_m}_{x_j}\in \mathbb{C}\backslash \{0\},
\end{equation}
then, the corresponding self-adjoint fourth order differential operator
$ 
\frac{d^4}{dx^4}$
is defined on functions from $W_2^4(\Gamma\backslash V)$ satisfying the following conditions at vertices
\begin{equation}
\tag{C2}
\label{gvc1} \begin{cases} \varphi(x_i)=\varphi(x_j)\equiv\varphi(v_m),\quad x_i,x_j\in v_m,\\ \sum\limits_{x_j\in v_m}\overline{\sigma^{v_m}_{x_j}}\partial\varphi(x_j)=0, \quad \sigma^{v_m}_{x_j}\in \mathbb{C}\backslash \{0\}, \\
\frac{\partial^2\varphi(x_i)}{\sigma^{v_m}_{x_i}}=\frac{\partial^2\varphi(x_j)}{\sigma^{v_m}_{x_j}},\\
\sum\limits_{x_j\in v_m}\partial^3\varphi(x_j)=-\alpha_m\varphi(v_m).
\end{cases}\end{equation}
If $\sigma^{v_m}_{x_j}=0$, we set $\partial^2\varphi(x_j)=0$ and impose no further condition on $\partial\varphi(x_j)$. Vertex conditions \eqref{vc1} corresponds to the case when  $\sigma^{v_m}_{x_j}$ are equal to zero in \eqref{gvc1}.\\
\begin{remark} \label{remarkk}
 \noindent For condition \eqref{gvc1} with $\alpha_m=0, m=1,2,\cdots,|V|$,
 and $\sigma^{v_m}_{x_i}=\sigma^{v_m}_{x_j}$, any interior point of an edge can be considered as a vertex $v_m$ of degree two equipped with \eqref{gvc1}  with $\alpha_m=0$. Since at any interior point functions and their first, second and third derivatives are continuous. Similarly, any vertex $v_m$ of degree two equipped with vertex condition \eqref{gvc1} with $\alpha_m=0$ can be regarded as an interior point and two edges are replaced by a single long edge. 
\end{remark}


A special set of vertex conditions of the form \eqref{gvc1} for planar graphs was discussed in \cite{KKU}. The angle dependent vertex conditions were obtained by assuming that functions from the domain of the quadratic form \eqref{QF} are continuous at vertices and that the tangential lines to the graph of functions at a vertex lie in the same plane. For instance,  if we consider any point $v$ on an edge $e$ as a degree two vertex, that is $v=\{x_1,x_2\}$,  then the continuity and  coplaner condition for tangent lines are given by
\begin{equation}\label{degreetwo}
\begin{cases}\varphi(x_1)=\varphi(x_2)=\varphi(v), \\
 \partial\varphi(x_1)+\partial\varphi(x_2)=0.\\
\end{cases}
\end{equation}
The corresponding self-adjoint operator $\frac{d^4}{dx^4}$ defined on $W_2^4(E\backslash \{v\})$ satisfying the conditions \eqref{degreetwo} and
$$
\begin{cases}\partial^2\varphi(x_1)=\partial^2\varphi(x_2),\\
\partial^3\varphi(x_1)+\partial^3\varphi(x_2)=-\alpha\varphi(v).\\
\end{cases}
$$
Generally, if $v_m$ is a vertex of degree $d_m$ such that angles between edges are $\{\gamma_j\}_{j=1}^{d_m}$, none of which equal to $0$ or $\pi$  then we have the following angle dependent vertex conditions
\begin{equation}\label{angleconditions}
\begin{cases} \varphi(x_i)=\varphi(x_j)\equiv\varphi(v_m),\quad x_i,x_j\in v_m,\\ \sum\limits_{x_j\in v_m}\sin(\gamma_j)\cdot \partial\varphi(x_j)=0, \\
\frac{\partial^2\varphi(x_i)}{\sin(\gamma_i)}=\frac{\partial^2\varphi(x_j)}{\sin(\gamma_j)},\\
\sum\limits_{x_j\in v_m}\partial^3\varphi(x_j)=-\alpha_m\varphi(v_m).
\end{cases}
\end{equation}
The conditions when one of the angles is equal to $0$ or $\pi$ are given in Section 4 of \cite{KKU}.\\
Note that,  changing the role of the first and second normal derivatives in the vertex conditions \eqref{gvc1} does not effect the self-adjointness of the operator. This allows us to define the \textit{$\delta$ beam operator} given by the same differential expression $\frac{d^4}{d x^4}$ on the functions from $W_2^4(\Gamma\backslash V)$ satisfying the vertex conditions 
\begin{equation}
\tag{C3}
\label{gvc2} \begin{cases} \varphi(x_i)=\varphi(x_j)\equiv\varphi(v_m),\quad x_i,x_j\in v_m,\\
\frac{\partial\varphi(x_i)}{\sigma^{v_m}_{x_i}}=\frac{\partial\varphi(x_j)}{\sigma^{v_m}_{x_j}}, \quad \sigma^{v_m}_{x_j}\in \mathbb{C}\backslash \{0\},
\\
\sum\limits_{x_j\in v_m}\overline{\sigma^{v_m}_{x_j}}\partial^2\varphi(x_j)=0,\\
\sum\limits_{x_j\in v_m}\partial^3\varphi(x_j)=-\alpha_m\varphi(v_m).
\end{cases}\end{equation}
If $\sigma^{v_m}_{x_j}=0$, we set  $\partial\varphi(x_j)=0$ and do not assume any condition on $\partial^2\varphi(x_j)$. In this case the vertex conditions \eqref{gvc2} become
\begin{equation}
\tag{C4}
\label{vc2} \begin{cases} \varphi(x_i)=\varphi(x_j)\equiv\varphi(v_m),\quad x_i,x_j\in v_m,\\ \partial\varphi(x_j)=0, \\
\sum\limits_{x_j\in v_m}\partial^3\varphi(x_j)=-\alpha_m\varphi(v_m).
\end{cases}\end{equation}
\\

When $\alpha_0=\infty$ at certain vertex $v_0$ then functions vanish at all endpoints of intervals incident to $v_0$. We shall call vertex conditions \eqref{vc1}-\eqref{vc2} when $\alpha_0=\infty$ the extended vertex conditions. In this case, the continuity condition is replaced by $\varphi(x_j)=0$, and the last equation in all four conditions becomes redundant. Note that the assumption $\alpha_0=\infty$ in \eqref{vc1} and \eqref{vc2} has the effect of disconnecting edges at vertex $v_0$, and the spectrum of a graph is equal to the union of the spectrum of disjoints intervals. However, we can not decouple a graph into disjoints intervals for conditions \eqref{gvc1} and \eqref{gvc2}. Because functions living on incident edges of a vertex are connected at vertices due to the vanishing of a scalar combination of normalized derivative in \eqref{gvc1} and normalized continuity on the first derivative in \eqref{gvc2}. For a boundary vertex, vertex of degree one, the conditions \eqref{gvc1} and \eqref{vc2} are equivalent, and the conditions \eqref{vc1} and \eqref{gvc2} also coincides. \\

\section{Preliminary results}
A standard method in obtaining eigenvalue estimates is based on variational arguments comparing the Rayleigh quotients on suitable finite dimensional subspaces. In this Section we collect results, without proofs, that will be needed in subsequent sections. We will follow the strategy of \cite{BKKM}, where Laplacian is considered on compact graphs with $\delta$ and Dirichlet vertex conditions. We will need the following minimax characterisation of eigenvalues. 
 \begin{proposition}\label{minmax}
Let $H$ be a self-adjoint operator on metric graph $\Gamma$ with discrete spectrum and $h$ be its quadratic  form which is semi-bounded from below. Then, the eigenvalues $\lambda_k(\Gamma) $ satisfy
\begin{align}\label{minmax1}
\lambda_k(\Gamma) & = \min_{\underset{\dim(X_k) =k }{X_k \subset D\left(h\right) }} M(h,X_k) \\
&= \max_{\underset{\dim(X_{k-1}) =k-1 }{X_{k-1} \subset D\left(h\right) }} m(h,X_{k-1}^{\perp})
\label{minmax2}
\end{align}  

where, $M(h,X_k)=\underset{0\neq\varphi\in X_k}{ \mathrm{max}} \frac{h[\varphi]}{||\varphi||^2}$ and $m(h,X_k)=\underset{0\neq\varphi\in X_{k}}{ \mathrm{min}} \frac{h[\varphi]}{||\varphi||^2}$.
\end{proposition}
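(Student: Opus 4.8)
The plan is to deduce the two variational formulas from the spectral decomposition of $H$. Since $H$ is self-adjoint, semi-bounded from below, and has discrete spectrum, there is an orthonormal basis $\{\psi_j\}_{j\geq 1}$ of $L^2(\Gamma)$ consisting of eigenfunctions, with eigenvalues $\lambda_1\leq\lambda_2\leq\cdots\to+\infty$ repeated according to multiplicity. Writing $\varphi=\sum_j c_j\psi_j$ with $c_j=\langle\varphi,\psi_j\rangle$, one has the representation that $\varphi\in D(h)$ if and only if $\sum_j\lambda_j|c_j|^2$ converges (after shifting $H$ by a constant so that the form is positive), and in that case $h[\varphi]=\sum_j\lambda_j|c_j|^2$ while $\|\varphi\|^2=\sum_j|c_j|^2$. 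Establishing this representation is the one genuinely analytic ingredient; everything afterwards is linear algebra combined with the monotonicity of the sequence $(\lambda_j)$.

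For formula \eqref{minmax1} I would argue by two inequalities. To obtain ``$\leq$'', I test with the trial space $X_k=\mathrm{span}\{\psi_1,\dots,\psi_k\}\subset D(h)$: every $0\neq\varphi\in X_k$ satisfies $h[\varphi]=\sum_{j=1}^k\lambda_j|c_j|^2\leq\lambda_k\|\varphi\|^2$, so $M(h,X_k)\leq\lambda_k$ and hence the minimum over $k$-dimensional subspaces is at most $\lambda_k$. For ``$\geq$'', let $X_k\subset D(h)$ be arbitrary with $\dim X_k=k$. The closed subspace $Y=\overline{\mathrm{span}}\{\psi_j:j\geq k\}$ has codimension $k-1$, so a dimension count forces $X_k\cap Y\neq\{0\}$; choosing $0\neq\varphi\in X_k\cap Y$ gives $h[\varphi]=\sum_{j\geq k}\lambda_j|c_j|^2\geq\lambda_k\|\varphi\|^2$, whence $M(h,X_k)\geq\lambda_k$. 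Combining the two bounds yields \eqref{minmax1}, with the minimum attained on the span of the first $k$ eigenfunctions.

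Formula \eqref{minmax2} is handled symmetrically. Given any $(k-1)$-dimensional $X_{k-1}\subset D(h)$, the space $\mathrm{span}\{\psi_1,\dots,\psi_k\}$ is $k$-dimensional while $X_{k-1}^{\perp}$ has codimension $k-1$, so they meet nontrivially; a vector $\varphi$ in the intersection satisfies $h[\varphi]\leq\lambda_k\|\varphi\|^2$, giving $m(h,X_{k-1}^{\perp})\leq\lambda_k$ and therefore $\max_{X_{k-1}}m(h,X_{k-1}^{\perp})\leq\lambda_k$. The reverse inequality follows by choosing the specific subspace $X_{k-1}=\mathrm{span}\{\psi_1,\dots,\psi_{k-1}\}$, for which the relevant part of $X_{k-1}^{\perp}$ is $\overline{\mathrm{span}}\{\psi_j:j\geq k\}$ and every admissible $\varphi$ there obeys $h[\varphi]\geq\lambda_k\|\varphi\|^2$, so $m(h,X_{k-1}^{\perp})\geq\lambda_k$.

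The only real obstacle is the spectral representation of the form together with the attendant care about whether the extrema are genuinely attained rather than merely approached: one must verify that $D(h)$ is exactly the set of $\varphi$ with $\sum_j\lambda_j|c_j|^2<\infty$, and that restricting the infimum defining $m$ to $X_{k-1}^{\perp}\cap D(h)$ (rather than to all of $L^2$) does not alter its value on the subspaces in play. Both points are standard consequences of $H$ being semi-bounded with discrete spectrum, so the substance of the argument reduces to the elementary intersection-dimension bookkeeping above.
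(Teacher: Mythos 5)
Your proof is correct, but there is no paper proof to compare it against: Section 3 states explicitly that its results are collected \emph{without proofs}, and this proposition is the classical Courant--Fischer min--max principle for a semi-bounded self-adjoint operator with discrete spectrum, quoted as a known preliminary (following \cite{BKKM}). Your argument is the standard one and is sound in its details: after the spectral representation $h[\varphi]=\sum_j\lambda_j|c_j|^2$ on $D(h)=\{\varphi:\sum_j(\lambda_j+C)|c_j|^2<\infty\}$ (the second representation theorem for semi-bounded forms, which you correctly isolate as the only analytic input), the inequality $\min\le\lambda_k$ in \eqref{minmax1} follows by testing on $\mathrm{span}\{\psi_1,\dots,\psi_k\}$, the reverse inequality from the fact that any $k$-dimensional $X_k\subset D(h)$ meets the codimension-$(k-1)$ space $\overline{\mathrm{span}}\{\psi_j:j\ge k\}$ nontrivially, and \eqref{minmax2} follows by the symmetric pairing of $\mathrm{span}\{\psi_1,\dots,\psi_k\}$ against $X_{k-1}^{\perp}$ together with the specific choice $X_{k-1}=\mathrm{span}\{\psi_1,\dots,\psi_{k-1}\}$. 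The dimension counts are legitimate in the infinite-dimensional setting (finite-dimensional subspaces are closed, so $(Y^{\perp})^{\perp}=Y$ applies to the relevant $Y$), the intersecting vectors lie in $D(h)$ as required, and both extrema are attained at the eigenfunction spans, which justifies writing $\min$ and $\max$ rather than $\inf$ and $\sup$. In short, you have supplied a correct proof of a statement the paper deliberately left unproved.
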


If a $k$-dimensional subspace realises the minimum $\lambda_k(\Gamma)$ in \eqref{minmax1} we call it a minimising subspace for $\lambda_k(\Gamma)$.
The following Lemma characterizes the equality.  
\begin{lemma}\cite[\textcolor{red}{Lemma 4.1}]{BKKM} \label{lem}. Any minimising subspace $X_k$ for eigenvalue $\lambda_k(\Gamma)$ contains an eigenvector of $\lambda_k(\Gamma)$. Moreover, if $\lambda_k(\Gamma)<\lambda_{k+1}(\Gamma)$, then the eigenspace of $\lambda_k(\Gamma)$ is the   intersection of its all possible $k$-dimensional minimising subspaces.
\end{lemma}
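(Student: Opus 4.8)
The plan is to recast the minimisation property as the non-positivity of a shifted quadratic form and then read off everything from its signature. Fix an orthonormal basis $\{\psi_j\}_{j\ge1}$ of eigenvectors, $H\psi_j=\lambda_j(\Gamma)\psi_j$ with $\lambda_1(\Gamma)\le\lambda_2(\Gamma)\le\cdots$, abbreviate $\lambda_j=\lambda_j(\Gamma)$, and set $q[\varphi]:=h[\varphi]-\lambda_k\|\varphi\|^2$, so that $q[\varphi]=\sum_j(\lambda_j-\lambda_k)|c_j|^2$ when $\varphi=\sum_j c_j\psi_j$. By the minimax formula \eqref{minmax1}, a $k$-dimensional $X_k\subset D(h)$ is minimising for $\lambda_k$ exactly when $q\le0$ on $X_k$; thus the minimising subspaces are precisely the $k$-dimensional non-positive subspaces of $q$, and this reformulation drives both assertions.

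For the first assertion I would use a dimension count. Let $M$ be the orthogonal complement of $\mathrm{span}\{\psi_1,\dots,\psi_{k-1}\}$, of codimension $k-1$; since $\dim X_k=k$ the intersection $X_k\cap M$ is nontrivial, so I choose $0\neq\varphi\in X_k\cap M$. Orthogonality to $\psi_1,\dots,\psi_{k-1}$ gives $h[\varphi]\ge\lambda_k\|\varphi\|^2$, i.e. $q[\varphi]\ge0$, while minimality gives $q[\varphi]\le0$; hence $q[\varphi]=0$. Writing $\varphi=\sum_{j\ge k}c_j\psi_j$, the relation $\sum_{j\ge k}(\lambda_j-\lambda_k)|c_j|^2=0$ forces $c_j=0$ whenever $\lambda_j>\lambda_k$, so $\varphi$ is an eigenvector of $\lambda_k$ lying in $X_k$.

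For the second assertion I would assume $\lambda_k<\lambda_{k+1}$ and let $m$ be the multiplicity of $\lambda_k$, so the eigenspace is $\mathcal{E}=\mathrm{span}\{\psi_{k-m+1},\dots,\psi_k\}$ and $q$ has negative index $k-m$ and nullity $m$. Then $k=(k-m)+m$ is the largest dimension of a non-positive subspace of $q$, so every minimising $X_k$ is in fact a \emph{maximal} non-positive subspace. Because $\mathcal{E}=\ker q$ is $q$-orthogonal to all of $D(h)$ (for an eigenvector $n$ the sesquilinear form of $h$ satisfies $h(w,n)=\lambda_k\langle w,n\rangle$), for $w\in X_k$ and $n\in\mathcal{E}$ we get $q[w+n]=q[w]\le0$; hence $X_k+\mathcal{E}$ is non-positive and maximality forces $\mathcal{E}\subseteq X_k$. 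This shows $\mathcal{E}$ is contained in the intersection of all minimising subspaces.

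To close the reverse inclusion I would exhibit, for each $\varphi\notin\mathcal{E}$, a minimising subspace missing it. Decomposing $\varphi=\varphi_-+\varphi_0+\varphi_+$ along $\mathrm{span}\{\psi_1,\dots,\psi_{k-m}\}\oplus\mathcal{E}\oplus\mathrm{span}\{\psi_j:j>k\}$ with $(\varphi_-,\varphi_+)\neq0$: if $\varphi_+\neq0$ then $\mathrm{span}\{\psi_1,\dots,\psi_k\}$ is itself a minimising subspace not containing $\varphi$; if $\varphi_+=0$ but $\varphi_-\neq0$, I would tilt $V_-=\mathrm{span}\{\psi_1,\dots,\psi_{k-m}\}$ by replacing it with the graph $\tilde V_-=\{v+\epsilon\,\ell(v)\,u : v\in V_-\}$, where $u=\psi_{k+1}$ has $q[u]>0$ and $\ell$ is a functional with $\ell(\varphi_-)\neq0$. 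For small $\epsilon$ the $O(\epsilon^2)$ correction is dominated by the strict negativity of $q$ on the unit sphere of $V_-$, so $\tilde V_-$ stays negative definite (hence maximal negative) while $\varphi_-\notin\tilde V_-$; then $X_k:=\mathcal{E}\oplus\tilde V_-$ is minimising and $\varphi\notin X_k$. The signature count and the eigen-expansion are routine; the main obstacle is exactly this tilting step, where one must verify that a maximal negative subspace can be perturbed to avoid a prescribed vector of $V_-$ without losing negative definiteness.
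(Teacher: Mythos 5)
The paper itself offers no proof of this lemma: Section~3 explicitly collects such results ``without proofs,'' and the statement is quoted from \cite[Lemma 4.1]{BKKM}. Your argument is correct and is essentially the standard proof from that source: your first part (pick $0\neq\varphi\in X_k\cap\{\psi_1,\dots,\psi_{k-1}\}^{\perp}$, squeeze $q[\varphi]=0$, kill the coefficients with $\lambda_j>\lambda_k$) is the argument verbatim, and your two inclusions in the second part correspond to the two steps there --- the inclusion $\mathcal{E}\subseteq X_k$ is usually phrased by noting that $X_k+\mathrm{span}\{g\}$ for an eigenvector $g\notin X_k$ would be a $(k+1)$-dimensional subspace with $M(h,\cdot)\le\lambda_k<\lambda_{k+1}$, contradicting \eqref{minmax1}, which is exactly your maximality claim (and in infinite dimensions this minimax contradiction is the correct justification of ``$k$ is the largest dimension of a non-positive subspace,'' rather than a finite signature count). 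One remark: the ``tilting step'' you single out as the main obstacle is in fact exact rather than perturbative. Since $u=\psi_{k+1}$ is $q$-orthogonal and $L^2$-orthogonal to $V_-$ and to $\mathcal{E}$, the cross terms vanish identically and $q[v+\epsilon\,\ell(v)u]=q[v]+\epsilon^2|\ell(v)|^2q[u]$, so negativity of $\tilde V_-$ follows from the uniform bound $q[v]\le(\lambda_{k-m}-\lambda_k)\|v\|^2$ with $\epsilon^2\|\ell\|^2(\lambda_{k+1}-\lambda_k)\le\lambda_k-\lambda_{k-m}$; no domination of an error term is needed. (In \cite{BKKM} the same construction appears in the simpler basis-adapted form, replacing a single $\psi_j$ with $\lambda_j<\lambda_k$ by $\psi_j+t\,\psi_{k+1}$ for small $t\neq0$, which already suffices to cut the intersection down to the eigenspace.)
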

Let $H$ and $\tilde{H}$ be two self-adjoint operators with discrete spectrum and  $h$ and $\tilde{h}$ be their corresponding semi-bounded from below quadratic forms. We say that $\tilde{h}$ is a \textit{positive rank-n perturbation} of $h$  if $\tilde{h}=h$ on some $Y\subset_n \mathrm{dom}(h)$  and either  $Y=\mathrm{dom}(\tilde{h})\subset_n \mathrm{dom(h)}$ or $\tilde{h}\geq h$ with $\mathrm{dom}(\tilde{h})=\mathrm{dom}(h)$. Here, the symbol $\subset_n$ denotes the subspace such that the quotient space $\mathrm{dom}(h)/Y$ is $n$-dimensional. 

\begin{theorem}\cite[\textcolor{red}{Theorem 4.3}]{BKKM}\label{berk1}. Let $H$ and $\tilde{H}$ be two self-adjoint operator with discrete spectrum defined on metric graphs $\Gamma$ and $\tilde{\Gamma}$, respectively. Let $h$ and $\tilde{h}$ be their corresponding semi-bounded from below quadratic forms. If form $\tilde{h}$ is a positive rank-1 perturbation of the form $h$ then, eigenvalues of $H$ and $\tilde{H}$ satisfy
$$
\lambda_k(\Gamma)\leq \lambda_k(\tilde{\Gamma})\leq \lambda_{k+1}(\Gamma)\leq \lambda_{k+1}(\tilde{\Gamma}), \quad k\geq 1.
$$ 
\end{theorem}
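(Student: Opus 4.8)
The plan is to prove this as a clean consequence of the minimax characterisation in Proposition~\ref{minmax}, exactly in the spirit of the rank-one perturbation arguments for the Laplacian in \cite{BKKM}. The statement asserts that a positive rank-one perturbation interlaces the two spectra, and the natural tool is the two-sided minimax formula: the $\min$-over-$k$-dimensional-subspaces form \eqref{minmax1} gives one inequality, and the $\max$-over-codimension-$(k-1)$-subspaces form \eqref{minmax2} gives the complementary one. First I would unpack the definition of positive rank-one perturbation, which splits into two cases: either $\mathrm{dom}(\tilde h)\subset_1 \mathrm{dom}(h)$ with $\tilde h = h$ on the smaller domain, or $\mathrm{dom}(\tilde h)=\mathrm{dom}(h)$ with $\tilde h\ge h$ and $\tilde h = h$ on a codimension-one subspace $Y$. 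I expect to treat both cases but to observe that they are structurally parallel, so the bulk of the work is a single chain of inequalities.

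The core of the argument is the chain $\lambda_k(\Gamma)\le \lambda_k(\tilde\Gamma)\le \lambda_{k+1}(\Gamma)$; the trailing inequality $\lambda_{k+1}(\Gamma)\le\lambda_{k+1}(\tilde\Gamma)$ is then just the first inequality with $k$ replaced by $k+1$. For $\lambda_k(\Gamma)\le\lambda_k(\tilde\Gamma)$ I would use \eqref{minmax1}: any $k$-dimensional subspace $X_k\subset\mathrm{dom}(\tilde h)$ is (in either case) a subspace of $\mathrm{dom}(h)$ on which $\tilde h\ge h$, hence $M(\tilde h,X_k)\ge M(h,X_k)\ge \lambda_k(\Gamma)$; taking the minimum over such $X_k$ gives $\lambda_k(\tilde\Gamma)\ge\lambda_k(\Gamma)$. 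For the upper bound $\lambda_k(\tilde\Gamma)\le\lambda_{k+1}(\Gamma)$ the mechanism is to intersect an optimal test space with the codimension-one subspace $Y$ on which the two forms agree. Concretely, take a $(k+1)$-dimensional minimising subspace $Z\subset\mathrm{dom}(h)$ for $\lambda_{k+1}(\Gamma)$ (which exists by Proposition~\ref{minmax} and Lemma~\ref{lem}); then $Z\cap Y$ has dimension at least $k$, lies in $\mathrm{dom}(\tilde h)$, and on it $\tilde h = h$, so $M(\tilde h, Z\cap Y)\le M(h,Z)=\lambda_{k+1}(\Gamma)$. Feeding a $k$-dimensional subspace of $Z\cap Y$ into \eqref{minmax1} for $\tilde\Gamma$ yields $\lambda_k(\tilde\Gamma)\le\lambda_{k+1}(\Gamma)$, completing the chain.

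The main obstacle, and the step deserving the most care, is this dimension-counting intersection: one must ensure that $Z\cap Y$ genuinely has dimension $\ge k$ and that the maximum of the Rayleigh quotient does not increase when passing from $Z$ to the subspace $Z\cap Y$. The codimension bound $\dim(Z\cap Y)\ge \dim Z + \dim Y - \dim\mathrm{dom}(h)$ is vacuous when $\mathrm{dom}(h)$ is infinite-dimensional, so rather than count globally I would argue locally: $Y$ is defined by a single linear functional $\ell$, so $Z\cap\ker\ell$ has codimension at most one inside $Z$, giving $\dim(Z\cap Y)\ge (k+1)-1 = k$ directly. The monotonicity $M(\tilde h, Z\cap Y)\le M(h,Z)$ holds because the maximum over a smaller set of test functions can only be smaller, together with $\tilde h = h$ on $Y\supset Z\cap Y$. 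With this local codimension count in hand the two cases of the definition are handled uniformly, since in the first case one simply takes $Y=\mathrm{dom}(\tilde h)$ itself. I would finally remark that the argument uses semiboundedness and discreteness of the spectrum only through the validity of Proposition~\ref{minmax}, which is already in force by hypothesis.
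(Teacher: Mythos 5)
Your proof is correct and complete. Note that the paper contains no proof of this statement to compare against: it is quoted in Section 3 (``Preliminary results,'' which the authors explicitly collect \emph{without} proofs) from \cite[Theorem 4.3]{BKKM}, so your argument can only be measured against the standard one in that reference --- and it matches it. The lower bound $\lambda_k(\Gamma)\leq\lambda_k(\tilde{\Gamma})$ from form monotonicity together with domain inclusion, and the upper bound $\lambda_k(\tilde{\Gamma})\leq\lambda_{k+1}(\Gamma)$ obtained by intersecting a $(k+1)$-dimensional minimising subspace $Z$ with the codimension-one subspace $Y$ on which the forms agree, constitute exactly the usual variational interlacing argument; the points needing care (existence of the minimising subspace via Proposition \ref{minmax} and Lemma \ref{lem}, the local dimension count $\dim(Z\cap Y)\geq k$ via the kernel of a single linear functional rather than a global codimension formula, and the monotonicity of the Rayleigh maximum under restriction) are all handled correctly, and both cases of the definition of a positive rank-one perturbation are covered.
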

\section{Eigenvalues dependence on vertex conditions}
In this section, we present the dependence of eigenvalues of a quantum graph on vertex conditions. Even a small perturbation in the parameters of a vertex condition can shift the spectrum in some direction. We describe the relationship between eigenvalues of graphs equipped with either of four conditions. The \textcolor{blue}{Theorem} \eqref{strength} elucidates how altering a vertex condition at a single vertex affects the whole spectrum. The general monotonicity principle for eigenvalues holds when, at a certain vertex, the interaction strength is increased or at that vertex, the condition is replaced by some other vertex condition. \\ \\
Let $\frac{d^4}{dx^4}$ be the fourth-order differential operator acting on edges of finite compact metric graphs $\Gamma^i$ for $i=1,2,3,4$, and these graphs $\Gamma^i$ are identical and are equipped with same self-adjoint vertex conditions at all vertices except at $v_0$, where we impose vertex conditions \eqref{vc1}, \eqref{gvc1}, \eqref{gvc2}, and \eqref{vc2}, respectively, for $i=1,2,3,4$. Let $h^1$, $h^2$, $h^3$ and $h^4$ be the corresponding quadratic forms. In order to understand how vertex conditions influence the spectrum of a quantum graph, we need to describe the relationships between eigenvalues of each graph $\Gamma^i$. To this end, a standard method for obtaining eigenvalue estimates is based on variational arguments. In simple terms, the variational argument is based on comparing quadratic forms on suitable finite-dimensional subspaces, and the desired estimates can be obtained with the help of the min-max principle. However, the quadratic form for each graph $\Gamma^i$ has the same expression; therefore, we need to describe the domain of quadratic forms and compare the respective domains. Thus, for functions $\varphi$ from $W^2_2(\Gamma^1 \backslash V)$, which are also continuous on the whole graph, we have 
$$ D(h^1)=\{\varphi\in W^2_2(\Gamma^1\backslash V)   : \varphi \text{ is continuous on $\Gamma^1$}   \},\quad D(h^2)=\{\varphi \in  D(h^1 ): \sum\limits_{x_j\in v_m}\overline{ \sigma^{v_m}_{x_j}}\partial\varphi(x_j)=0\}, $$
$$ D(h^3)=\left \{\varphi \in  D(h^1 ): \frac{\partial\varphi(x_i)}{\sigma^{v_m}_{x_i}}=\frac{\partial\varphi(x_j)}{\sigma^{v_m}_{x_j}} \right \}, \quad D(h^4)=\{\varphi \in  D(h^1 ): \partial \varphi(x_j)=0, x_j \in v_0 \}.$$
The following two inclusion for the respective form domains holds.
\begin{equation} \label{inclusions}
    D(h^4) \subset  D(h^2) \subset  D(h^1)  \quad 
\text{and} \quad 
 D(h^4) \subset  D(h^3) \subset  D(h^1).
\end{equation}
 
{Since quadratic forms $h^i$ and  $h^j$ agree on $D(h^4)$.
 Therefore, minimizing over the smaller space makes the eigenvalue large. Thus 
 \begin{equation} \label{inequalities}
     \lambda_k(\Gamma^1) \leq \lambda_k(\Gamma^2) \leq \lambda_k(\Gamma^4)  \quad \text{and} \quad
\lambda_k(\Gamma^1) \leq \lambda_k(\Gamma^3) \leq \lambda_k(\Gamma^4) 
 \end{equation}
 Moreover, the domains $D(h^2)$, $D(h^3)$ and $D(h^4)$ are co-dimension one, $d-1$ and $d$  subspaces of $D(h^1)$, respectively, where $d$ is the degree of vertex $v_0$. Thus, the quadratic forms' monotonicity and the rank-$n$ nature of the perturbation along with \eqref{inclusions} leads to the following inequalities.
 \begin{equation} 
     \lambda_k(\Gamma^1) \leq \lambda_k(\Gamma^2) \leq \lambda_{k+1}(\Gamma^1), \quad
\lambda_k(\Gamma^1) \leq \lambda_k(\Gamma^4) \leq \lambda_{k+d}(\Gamma^1), \quad \lambda_k(\Gamma^1) \leq \lambda_k(\Gamma^3) \leq \lambda_{k+d-1}(\Gamma^1). 
 \end{equation}
The above set of inequalities also describes a relationship between eigenvalues of $\Gamma^2$ and $\Gamma^3$. The above discussion is useful and helps us prove the following proposition which predicts the influence of vertex conditions on the spectrum of a graph. We found that there is exactly one eigenvalue of the graph $\Gamma^1$ between any two consecutive eigenvalues of its rank one perturbations.}
\begin{proposition} \label{changeofcondition}
\begin{enumerate}
    {\item Let $\Gamma$ be a finite compact metric graph and arbitrary self-adjoint vertex conditions are imposed at each vertex of $\Gamma$, except $v_0$, where conditions (\ref{vc1}) are imposed. Let   $\tilde{\Gamma}$ be a metric graph obtained from $\Gamma$ by imposing vertex conditions \eqref{gvc1},  (preserving the strengths $\alpha_0$) at vertex $v_0$, then
\begin{equation*}
    \lambda_k(\Gamma) \leq \lambda_k (\tilde{\Gamma}) \leq \lambda_{k+1}(\Gamma ).
\end{equation*}
Furthermore, if the graph $\Gamma$ has $|V|$ number of vertices, and vertex conditions at each vertex are changed from \eqref{vc1} to \eqref{gvc1}, then
$$\lambda_k(\Gamma) \leq \lambda_k (\tilde{\Gamma}) \leq \lambda_{k+|V|}(\Gamma ).$$
\item  If the vertex conditions (\ref{vc1}) are imposed at $v_0$ in $\Gamma$, and let $\tilde{\Gamma}$ be graph obtained from $\Gamma$ by imposing vertex conditions (\ref{gvc2}) at $v_0$. Then
 \begin{equation*}
     \lambda_k(\Gamma) \leq \lambda_k (\tilde{\Gamma}) \leq \lambda_{k+d-1}({\Gamma}).
 \end{equation*}
 \item If the vertex conditions (\ref{vc1}) are imposed at $v_0$ in $\Gamma$, and let $\tilde{\Gamma}$ be graph obtained from $\Gamma$ by imposing vertex conditions (\ref{vc2}) at $v_0$. Then
 \begin{equation*}
     \lambda_k(\Gamma) \leq \lambda_k (\tilde{\Gamma}) \leq \lambda_{k+d}(\Gamma).
 \end{equation*}
 \item If the vertex conditions (\ref{gvc1}) are imposed at $v_0$ in $\Gamma$, and let $\tilde{\Gamma}$ be graph obtained from $\Gamma$ by imposing vertex conditions \eqref{gvc2} at $v_0$. Then
 \begin{equation*}
     \lambda_{k}( \Gamma) \leq \lambda_{k+1} (\tilde{\Gamma}) \leq \lambda_{k+d}(\Gamma).
 \end{equation*}
 \item If the vertex conditions (\ref{gvc1}) are imposed at $v_0$ in $\Gamma$, and let $\tilde{\Gamma}$ be graph obtained from $\Gamma$ by imposing vertex conditions \eqref{vc2} at $v_0$. Then
 \begin{equation*}
     \lambda_{k}( \Gamma) \leq \lambda_{k} (\tilde{\Gamma}) \leq \lambda_{k+d-1}(\Gamma).
 \end{equation*}
 \item If the vertex conditions (\ref{gvc2}) are imposed at $v_0$ in $\Gamma$, and let $\tilde{\Gamma}$ be graph obtained from $\Gamma$ by imposing vertex conditions \eqref{vc2} at $v_0$. Then
 \begin{equation*}
     \lambda_{k}( \Gamma) \leq \lambda_{k} (\tilde{\Gamma}) \leq \lambda_{k+1}(\Gamma).
 \end{equation*}}
\end{enumerate}
\end{proposition}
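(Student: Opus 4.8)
The plan is to express all six estimates through the four form domains recorded above, namely $D(h^1)\supset D(h^2),\,D(h^3)\supset D(h^4)$, where $h^1,\dots,h^4$ denote the forms of the operators on the common graph carrying \eqref{vc1}, \eqref{gvc1}, \eqref{gvc2}, \eqref{vc2} at $v_0$ (so the associated graphs are $\Gamma^1,\dots,\Gamma^4$), and to apply the rank-$n$ form of Theorem~\ref{berk1} throughout. First I would set $d:=\deg v_0$ and observe that all four forms carry the \emph{same} expression $\int_\Gamma|\varphi''|^2\,dx+\sum_m\alpha_m|\varphi(v_m)|^2$ with $\alpha_0$ held fixed, hence coincide on any common part of their domains. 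The constraints at $v_0$ fix the codimensions inside $D(h^1)$: the single relation $\sum_{x_j\in v_0}\overline{\sigma^{v_0}_{x_j}}\partial\varphi(x_j)=0$ makes $D(h^2)$ codimension one; confining the derivative vector $(\partial\varphi(x_j))_{x_j\in v_0}$ to the one-dimensional span of $(\sigma^{v_0}_{x_j})$ makes $D(h^3)$ codimension $d-1$; and $\partial\varphi(x_j)=0$ at all $d$ endpoints makes $D(h^4)$ codimension $d$. Each inclusion thus presents the smaller form as a positive rank-$n$ perturbation of the larger, and I would first record that such a perturbation satisfies $\lambda_k(\Gamma)\le\lambda_k(\tilde\Gamma)\le\lambda_{k+n}(\Gamma)$, obtained by iterating the rank-one interlacing of Theorem~\ref{berk1} along a flag of codimension-one subspaces.

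Parts (1)--(3) are then exactly the three chains already displayed before the statement, comparing $D(h^1)$ against $D(h^2)$, $D(h^3)$, $D(h^4)$ of codimensions $1$, $d-1$, $d$. For part (6) I would use $D(h^4)\subset D(h^3)$: its codimension there is $d-(d-1)=1$, since replacing ``derivative vector proportional to $(\sigma^{v_0}_{x_j})$'' by ``derivative vector zero'' removes exactly one dimension; hence $h^4$ is a positive rank-one perturbation of $h^3$ and Theorem~\ref{berk1} gives $\lambda_k(\Gamma^3)\le\lambda_k(\Gamma^4)\le\lambda_{k+1}(\Gamma^3)$, which is part (6). Part (5) is the identical argument with $D(h^4)\subset D(h^2)$ of codimension $d-1$, giving $\lambda_k(\Gamma^2)\le\lambda_k(\Gamma^4)\le\lambda_{k+d-1}(\Gamma^2)$.

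Part (4) is the genuinely obstructed case and is where I expect the real work, because $\Gamma^2$ (condition \eqref{gvc1}) and $\Gamma^3$ (condition \eqref{gvc2}) have \emph{non-nested} form domains: a derivative vector proportional to $(\sigma^{v_0}_{x_j})$ that also satisfies $\sum_{x_j\in v_0}\overline{\sigma^{v_0}_{x_j}}\partial\varphi(x_j)=0$ must vanish, so $D(h^2)\cap D(h^3)=D(h^4)$ and no direct comparison of $h^2$ with $h^3$ is available. The plan is to route both bounds through $h^4$, which also explains the index shift to $\lambda_{k+1}(\Gamma^3)$ in the middle term. For the lower bound I would combine the monotonicity $\lambda_k(\Gamma^2)\le\lambda_k(\Gamma^4)$ (from the rank-$(d-1)$ inclusion $D(h^4)\subset D(h^2)$) with the rank-one bound $\lambda_k(\Gamma^4)\le\lambda_{k+1}(\Gamma^3)$ (from $D(h^4)\subset D(h^3)$), giving $\lambda_k(\Gamma^2)\le\lambda_{k+1}(\Gamma^3)$. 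For the upper bound I would combine the monotonicity $\lambda_{k+1}(\Gamma^3)\le\lambda_{k+1}(\Gamma^4)$ with the rank-$(d-1)$ bound $\lambda_{k+1}(\Gamma^4)\le\lambda_{k+d}(\Gamma^2)$, giving $\lambda_{k+1}(\Gamma^3)\le\lambda_{k+d}(\Gamma^2)$; together these are part (4).

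The points needing care are the codimension counts at $v_0$ --- in particular that $\sum_{x_j\in v_0}|\sigma^{v_0}_{x_j}|^2\neq0$, which is exactly what forces $D(h^2)\cap D(h^3)=D(h^4)$ --- and that the strength $\alpha_0$ is kept equal across each comparison so that the forms truly agree on the shared domain. The degenerate case $d=1$ is covered by the coincidences among \eqref{vc1}--\eqref{vc2} at a boundary vertex noted at the end of Section~2, and no analytic input beyond Theorem~\ref{berk1} and the form-domain descriptions is required.
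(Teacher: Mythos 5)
Your proposal is correct and follows essentially the same route as the paper: both compare the common form expression on the nested domains $D(h^4)\subset D(h^2),D(h^3)\subset D(h^1)$, count the codimensions $1$, $d-1$, $d$ at $v_0$, and invoke the rank-$n$ interlacing built on Theorem~\ref{berk1}. Your explicit handling of part (4) by chaining through $\Gamma^4$ (using $D(h^2)\cap D(h^3)=D(h^4)$) merely fleshes out what the paper leaves implicit in its remark that the displayed inequalities ``also describe a relationship between eigenvalues of $\Gamma^2$ and $\Gamma^3$,'' and your codimension counts and index bookkeeping are all accurate.
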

\begin{proof}
\begin{enumerate}
    \item Changing the vertex conditions from (\ref{vc1}) to either (\ref{gvc1}),(\ref{gvc2}), or (\ref{vc2}) cut down on domain of the quadratic form. Since,  $D(\tilde{h})$ is a co-dimension one subspace of $D(h)$. Rest of proof follows from rank one nature of the perturbation. The other parts of the proof can be proved similarly.
\end{enumerate}
\end{proof}
The above interlacing inequalities are very useful in obtaining estimates on eigenvalues, especially when comparing eigenvalues of the same underlying graph equipped with different vertex conditions. If the spectrum of one of the graphs is known, then the spectrum of another graph can be estimated accordingly. To have a clearer view of its application, we provide the following example to illustrate the above inequality.
\begin{example}
Let $\mathcal{C}$ be a loop of length $\ell$, parameterized by $[0,\ell]$, and the vertex $v_0$ is equipped with condition \eqref{gvc1}  with $\sigma_0=\sigma_{\ell}$, or \eqref{gvc2} with $\sigma_0=-\sigma_{\ell}$, with strength zero, then the non-zero eigenvalues are $\lambda_{k}(\mathcal{C})=\left(\frac{ 2 k \pi}{\ell} \right)^4$, each having multiplicity two.  Let $\mathcal{C}'$ be the loop obtained from $\mathcal{C}$ by imposing  vertex conditions  \eqref{vc2} at $v_0$ with $\alpha_0=0$, then $\left(\frac{2 \pi}{\ell} \right)^4$ is the first non-zero eigenvalue. Moreover, other eigenvalues are zeros of the following equation.
\begin{equation} \label{secular}
    \sin\left (\sqrt[4]{\lambda} \ell \right) \left(1-\cosh \left({\sqrt[4]{\lambda}}\ell \right) \right)+\sinh \left ({\sqrt[4]{\lambda}} \ell \right ) \left(1-\cos \left(\sqrt[4]{\lambda} \ell \right) \right)=0.
\end{equation}
Furthermore, one can obtain a subsequence of eigenvalues of $\mathcal{C}'$ by using \textcolor{blue}{Proposition} \eqref{changeofcondition}. Since eigenvalues of $\mathcal{C}$ are 
$$0, \left(\frac{2 \pi}{\ell} \right)^4,\left(\frac{2 \pi}{\ell} \right)^4, \left(\frac{4 \pi}{\ell} \right)^4,\left(\frac{4 \pi}{\ell} \right)^4,\left(\frac{6 \pi}{\ell} \right)^4,\left(\frac{6 \pi}{\ell} \right)^4, \cdots$$
and $\lambda_k(\mathcal{C}) \leq \lambda_k(\mathcal{C}') \leq \lambda_{k+1}(\mathcal{C})$. Therefore, 
\begin{align*}
    \left(\frac{2 \pi}{\ell} \right)^4 &\leq \lambda_1(\mathcal{C}') \leq \left(\frac{2 \pi}{\ell} \right)^4,\quad
    \left(\frac{2 \pi}{\ell} \right)^4 \leq \lambda_2(\mathcal{C}') \leq \left(\frac{4 \pi}{\ell} \right)^4, \quad \left(\frac{4 \pi}{\ell} \right)^4 \leq \lambda_3(\mathcal{C}') \leq \left(\frac{4 \pi}{\ell} \right)^4,\\
    \left(\frac{4 \pi}{\ell} \right)^4 &\leq \lambda_4(\mathcal{C}') \leq \left(\frac{6 \pi}{\ell} \right)^4,\quad
    \left(\frac{6 \pi}{\ell} \right)^4 \leq \lambda_5(\mathcal{C}') \leq \left(\frac{6 \pi}{\ell} \right)^4, \quad \left(\frac{6 \pi}{\ell} \right)^4 \leq \lambda_6(\mathcal{C}') \leq \left(\frac{8 \pi}{\ell} \right)^4.
\end{align*}
\end{example}
It is evident from \cite{BKKM, BK} that, for Laplacian, the operation of changing vertex conditions by varying the $\delta$ potential at a vertex affects the spectrum of the graph. Following their ideas, we show that a similar result holds for the fourth-order differential operator.
\begin{theorem} \label{strength}
Let $H$, and $\tilde{H}$ be two self-adjoint differential operators given by the differential expression $\frac{d^4}{dx^4}$ sharing the same underlying, compact and finite metric graph $\Gamma$ and the same type of delta interactions at all vertices. Assume $\alpha_m$ and $\tilde{\alpha}_m$ are the strengths of delta interactions at vertex $v_m$ associated with the operators $H$ and $\tilde{H}$, respectively. If
\begin{equation}\label{strengthineq}
\alpha_m < \tilde{\alpha}_m,\quad m=1,\cdots,|V|,
\end{equation}
then the conclusions of \textcolor{blue}{Theorem} \eqref{berk1} hold for the eigenvalues of operators $H$ and $\tilde{H}$.\\
Furthermore, if the $k$-th eigenvalue of $\Gamma_ {\tilde{\alpha}}$ is simple and the corresponding eigenfunction is non-zero at $v$, then
$$\lambda_k(\Gamma_{\alpha}) < \lambda_k(\Gamma_{\tilde{\alpha}}).$$
\end{theorem}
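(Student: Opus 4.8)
The plan is to treat the monotonicity and the strict inequality separately. Since $H$ and $\tilde H$ carry the \emph{same type} of $\delta$ interaction at every vertex, the two quadratic forms share a common domain, $D(h)=D(\tilde h)$: this domain is fixed by the continuity and derivative constraints of the chosen vertex condition \eqref{vc1}--\eqref{vc2} and is independent of the strengths. On this common domain I would compute the difference directly from \eqref{QF},
$$\tilde h[\varphi]-h[\varphi]=\sum_{m=1}^{|V|}(\tilde\alpha_m-\alpha_m)\,|\varphi(v_m)|^2,$$
which by \eqref{strengthineq} is non-negative, so $\tilde h\geq h$, and the two forms coincide exactly on $Y=\{\varphi\in D(h):\varphi(v_m)=0\text{ whenever }\tilde\alpha_m>\alpha_m\}$. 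When the strengths differ at a single vertex, $Y$ has codimension one, so $\tilde h$ is a positive rank-$1$ perturbation of $h$ in the sense defined above and \textcolor{blue}{Theorem}~\eqref{berk1} applies verbatim; when several strengths increase, iterating this rank-$1$ step one vertex at a time yields the monotonicity $\lambda_k(\Gamma_\alpha)\leq\lambda_k(\Gamma_{\tilde\alpha})$.

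For the strict inequality I would argue by contradiction, assuming $\lambda_k(\Gamma_\alpha)=\lambda_k(\Gamma_{\tilde\alpha})=:\lambda$. Let $\psi$ be the eigenfunction of $\tilde H$ for $\lambda_k(\Gamma_{\tilde\alpha})$, unique up to scaling by simplicity, and set $\tilde X_k=\operatorname{span}(\tilde\psi_1,\dots,\tilde\psi_{k-1},\psi)$, the span of the first $k$ eigenfunctions of $\tilde H$, which is a minimising subspace for $\lambda_k(\Gamma_{\tilde\alpha})$. Because $h\leq\tilde h$ pointwise on $D(h)$, every $\varphi\in\tilde X_k$ satisfies $h[\varphi]\leq\tilde h[\varphi]\leq\lambda\|\varphi\|^2$, whence $M(h,\tilde X_k)\leq\lambda$; combined with the min-max bound $\lambda_k(\Gamma_\alpha)\leq M(h,\tilde X_k)$ from \textcolor{blue}{Proposition}~\eqref{minmax} and the equality assumption, this forces $\tilde X_k$ to be a minimising subspace for $\lambda_k(\Gamma_\alpha)$ as well.

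By \textcolor{blue}{Lemma}~\eqref{lem}, the minimising subspace $\tilde X_k$ then contains an eigenfunction $\phi$ of $H$ with $H\phi=\lambda\phi$, so $h[\phi]=\lambda\|\phi\|^2$. Since $\phi\in\tilde X_k$ we also have $\tilde h[\phi]\leq\lambda\|\phi\|^2$, while $\tilde h[\phi]\geq h[\phi]=\lambda\|\phi\|^2$; hence $\tilde h[\phi]=h[\phi]$. Reading this through the form difference gives $\sum_m(\tilde\alpha_m-\alpha_m)|\phi(v_m)|^2=0$, so $\phi$ vanishes at every vertex where the strength was raised, in particular $\phi(v)=0$. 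On the other hand, expanding $\phi=\sum_{i=1}^k c_i\tilde\psi_i$ and using $\tilde h[\phi]/\|\phi\|^2=\lambda=\lambda_k(\Gamma_{\tilde\alpha})$ together with the simplicity $\lambda_{k-1}(\Gamma_{\tilde\alpha})<\lambda_k(\Gamma_{\tilde\alpha})$ forces $c_i=0$ for $i<k$, so $\phi$ is a non-zero multiple of $\psi$. But $\psi(v)\neq 0$ by hypothesis, contradicting $\phi(v)=0$, and the contradiction yields $\lambda_k(\Gamma_\alpha)<\lambda_k(\Gamma_{\tilde\alpha})$.

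The routine part is the monotonicity, which is immediate once the common form domain and the sign of the form difference are identified. The main obstacle is the strict inequality: the delicate point is to promote weak monotonicity into strictness by extracting, via \textcolor{blue}{Lemma}~\eqref{lem}, an actual $H$-eigenfunction lying inside the $\tilde H$-minimising subspace, and then to exploit the two competing consequences of $\tilde h[\phi]=h[\phi]$---vanishing of $\phi$ at $v$ versus its proportionality to $\psi$, where simplicity is essential---to close the argument.
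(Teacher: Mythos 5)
Your proposal is correct. For the monotonicity part you follow the same strategy as the paper---identical form domains, the form difference $\tilde h[\varphi]-h[\varphi]=\sum_{m}(\tilde\alpha_m-\alpha_m)|\varphi(v_m)|^2\ge 0$, and the perturbation machinery of Theorem \eqref{berk1}---but you are in fact more careful: the paper declares $h_{\tilde\alpha}$ a positive rank-$1$ perturbation of $h_\alpha$ even though the two forms agree only on the codimension-$|V|$ subspace $\{\varphi:\varphi(v_m)=0,\ m=1,\dots,|V|\}$, whereas you correctly isolate the single-vertex increase as the genuine rank-$1$ step and iterate over vertices. The honest price of the iteration is that it delivers the monotonicity $\lambda_k(\Gamma_\alpha)\le\lambda_k(\Gamma_{\tilde\alpha})$ together with the interlacing bound $\lambda_k(\Gamma_{\tilde\alpha})\le\lambda_{k+|V|}(\Gamma_\alpha)$ rather than the shift-one interlacing literally promised by the conclusions of Theorem \eqref{berk1}; note, however, that the paper's own proof does not validly establish that stronger claim either when strengths increase at more than one vertex, and the paper's subsequent generalization states exactly the $\lambda_{k+|V|}$ bound. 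For the strict inequality your route is genuinely different from, and tighter than, the paper's. The paper takes the span of the first $k$ eigenfunctions of $\tilde H$, tacitly assumes that \emph{all} of the first $k$ eigenvalues are simple and that all their eigenfunctions are non-zero at $v$ (strictly more than the theorem hypothesizes), asserts that $h<\tilde h$ on each eigenfunction makes the min-max value strictly smaller (leaving implicit the passage from the basis eigenfunctions to the whole subspace, which requires a compactness argument on the unit sphere of $X_k$), and then finishes with an $\epsilon$-interpolation $\lambda_k(\Gamma_\alpha)<\lambda_k(\Gamma_{\alpha+\epsilon})\le\lambda_k(\Gamma_{\tilde\alpha})$. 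Your contradiction argument instead extracts, via Lemma \eqref{lem}, an actual $H$-eigenfunction $\phi$ inside the $\tilde H$-minimising subspace and plays off the two consequences of $h[\phi]=\tilde h[\phi]=\lambda\|\phi\|^2$: vanishing of $\phi$ at every vertex where the strength increased (in particular at $v$) on the one hand, and $\phi\propto\psi$ (using only the simplicity of $\lambda_k(\Gamma_{\tilde\alpha})$) on the other. This closes the gap the paper leaves open, uses exactly the stated hypotheses, and makes the $\epsilon$ step unnecessary.
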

\begin{proof}
Let $h_\alpha$ and $h_{\tilde{\alpha}}$ denote quadratic forms of operator $H$ and $\tilde H$, respectively. As both operators share the same underlying metric graph and type of delta interactions at all vertices, therefore,  the domains, $D(h_\alpha)$ and $D(h_{\tilde{\alpha}})$, of both quadratic forms, are equal. Inequality \eqref{strengthineq} implies that quadratic forms $h_{\alpha}$ and $ h_{\tilde{\alpha}}$ satisfy
$$
h_\alpha[\varphi]=\int_\Gamma |\varphi''(x)|^2\,dx+\sum_{m=1}^{|V|} \alpha_m|\varphi(v_m)|^2\leq \int_\Gamma |\varphi''(x)|^2\,dx+\sum_{m=1}^{|V|} \tilde \alpha_m|\varphi(v_m)|^2= h_{\tilde{\alpha}}[\varphi].
$$
Moreover, 
$$
h_{\tilde{\alpha}}=h_\alpha \quad \mbox{if} \quad \varphi(v_m)=0,\quad m=1,\cdots,|V|.
$$
This implies that $h_{\tilde{\alpha}}$ is a positive rank-1 perturbation of $h_{\alpha}$ and hence conclusions of \textcolor{blue}{Theorem} \eqref{berk1} hold for the eigenvalues of operators $H$ and $\tilde H$.

{
To prove that inequality is strict, assume that corresponding to each first $k$ eigenvalues of a graph $\Gamma_{\tilde{\alpha}}$, there is a $k$-dimensional subspace $X_k$ of $D(\tilde{h})$. Since each eigenvalue is simple, thus every corresponding eigenfunction belongs to subspace $X_k$. One this subspace, we have
$$h[\varphi]-\tilde{h}[\varphi]=(\alpha-\tilde{\alpha})|\varphi(v)|^2.$$ \label{strict inequality}  Since each eigenfunction is non-zero at vertex $v$, and $\alpha-\tilde{\alpha} < 0.$ Therefore, for every eigenfunction, one has $h < \tilde{h}$, and the strict inequality follows from the min-max description of eigenvalues. Thus, for any $\epsilon >0$, no matter how small, if we let $\tilde{\alpha}=\alpha+\epsilon$. we have 
$$\lambda_k(\Gamma_{\alpha}) < \lambda_k(\Gamma_{\alpha+\epsilon}).$$
consolidating this inequality with 
$$\lambda_k(\Gamma_{\alpha+\epsilon}) \leq \lambda_k(\Gamma_{\tilde{\alpha}}),$$
 then for $\alpha+\epsilon < \tilde{\alpha}$, we obtain
 $$\lambda_k(\Gamma_\alpha) < \lambda_k(\Gamma_{\tilde{\alpha}}).$$}
\end{proof}

This shows that if one increases the total interaction strength (sum of all $\alpha_m $), then spectrum also increases. The above theorem also shows that whenever $\alpha < \tilde\alpha$, there is exactly one eigenvalue of a graph $\Gamma_{\tilde{\alpha}}$ between two consecutive eigenvalues of a graph $\Gamma_{\alpha}$. This useful observations leads to the following generalization of \textcolor{blue}{Theorem} \eqref{strength}.
\begin{theorem}
Let $\Gamma_{\tilde{\alpha}}$ be a finite compact graph obtained from $\Gamma_\alpha$ by replacing strengths from $\alpha_m$ to $\tilde{\alpha}_m$ at each vertex of $\Gamma$, if $|V|$ are the number of vertices in $\Gamma_\alpha$, then 
\begin{equation}
    \lambda_k(\Gamma_\alpha) \leq \lambda_k(\Gamma_{\tilde{\alpha}}) \leq \lambda_{k+|V|}(\Gamma_\alpha).
\end{equation}
\end{theorem}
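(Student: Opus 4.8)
The plan is to iterate the rank-one perturbation result of \textcolor{blue}{Theorem} \eqref{strength} across the $|V|$ vertices one at a time, accumulating the index shift. First I would set up an intermediate family of graphs that interpolates between $\Gamma_\alpha$ and $\Gamma_{\tilde\alpha}$. Concretely, define $\Gamma^{(0)}:=\Gamma_\alpha$ and, for $m=1,\dots,|V|$, let $\Gamma^{(m)}$ be the graph obtained from $\Gamma^{(m-1)}$ by raising the interaction strength at vertex $v_m$ from $\alpha_m$ to $\tilde\alpha_m$ while leaving the strengths at all other vertices fixed (so $\Gamma^{(|V|)}=\Gamma_{\tilde\alpha}$). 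Each single step changes the strength at exactly one vertex, so by the argument in \textcolor{blue}{Theorem} \eqref{strength} the associated quadratic form $h^{(m)}$ is a positive rank-one perturbation of $h^{(m-1)}$: the two forms share the same domain, differ only by the term $(\tilde\alpha_m-\alpha_m)|\varphi(v_m)|^2\ge 0$, and agree on the codimension-one subspace $\{\varphi:\varphi(v_m)=0\}$.

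Next I would apply \textcolor{blue}{Theorem} \eqref{berk1} to each consecutive pair $\Gamma^{(m-1)},\Gamma^{(m)}$, which yields the two-sided interlacing
\begin{equation*}
\lambda_k\big(\Gamma^{(m-1)}\big)\le \lambda_k\big(\Gamma^{(m)}\big)\le \lambda_{k+1}\big(\Gamma^{(m-1)}\big),\qquad k\ge 1.
\end{equation*}
The left inequality, chained over all $m$, immediately gives the lower bound $\lambda_k(\Gamma_\alpha)\le\lambda_k(\Gamma_{\tilde\alpha})$. For the upper bound the key step is to telescope the right-hand inequalities while tracking the index shift: combining $\lambda_k(\Gamma^{(m)})\le\lambda_{k+1}(\Gamma^{(m-1)})$ with monotonicity in $k$ and the lower bounds at the remaining stages, one shows by induction on $m$ that $\lambda_k(\Gamma^{(m)})\le\lambda_{k+m}(\Gamma^{(0)})$. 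At the final stage $m=|V|$ this is exactly $\lambda_k(\Gamma_{\tilde\alpha})\le\lambda_{k+|V|}(\Gamma_\alpha)$.

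The main obstacle, and the only genuinely nonroutine part, is making the induction for the upper bound airtight, since naively telescoping $\lambda_k(\Gamma^{(m)})\le\lambda_{k+1}(\Gamma^{(m-1)})$ threatens to leave each successive inequality expressed on the \emph{previous} graph rather than on $\Gamma^{(0)}$. The fix is to interleave both halves of the interlacing: apply the induction hypothesis $\lambda_{k+1}(\Gamma^{(m-1)})\le\lambda_{(k+1)+(m-1)}(\Gamma^{(0)})=\lambda_{k+m}(\Gamma^{(0)})$ to the shifted index after using the single-step bound, so that the bound is always pushed back to the base graph $\Gamma_\alpha$. I would therefore phrase the inductive claim precisely as ``$\lambda_k(\Gamma^{(m)})\le\lambda_{k+m}(\Gamma_\alpha)$ for all $k\ge 1$,'' verify the base case $m=0$ trivially, and carry out the step exactly as above. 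This is the natural higher-rank analogue of the codimension-$|V|$ perturbation recorded earlier in the discussion following \eqref{inclusions}, and no further analytic input beyond \textcolor{blue}{Theorem} \eqref{berk1} is needed.
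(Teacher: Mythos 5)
Your proof is correct and takes essentially the approach the paper intends: the paper states this theorem without a separate proof, presenting it as the immediate consequence of iterating the single-vertex strength monotonicity (Theorem 4.3, which rests on the rank-one interlacing of Theorem 3.3) over the $|V|$ vertices one at a time, which is exactly your interpolation through the intermediate graphs $\Gamma^{(m)}$. The inductive bookkeeping you make explicit for the upper bound, $\lambda_k(\Gamma^{(m)})\le\lambda_{k+m}(\Gamma_\alpha)$, is precisely the detail the paper leaves implicit.
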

Let $\Gamma^i$ for $i=1,2,3,4$ be metric graphs as discussed earlier, but the strength at the vertex $v_0$ in each graph is $\alpha_0=\infty$. We shall name them extended vertex conditions, and such conditions require that the function be zero at vertex $v_0$.
Let $h^i_\infty$ and $D(h^i_\infty)$ be corresponding quadratic forms and respective domains, and the inclusion \eqref{inclusions} and the inequalities \eqref{inequalities} still holds.
The extended conditions also imply the continuity of $\varphi$ in the form domain, moving to general conditions  with $\alpha < \infty$ simply means weakening the condition on $\varphi$ in form's domain, since now we only require continuity of $\varphi$ at $v_0$. Therefore, the domains $D(h^i_\infty)$ are smaller than the form domains of $h_{\alpha}$ and $h_{\tilde{\alpha}}$.
Let $\Gamma_{\alpha}$ and $\Gamma_{\tilde{\alpha}}$ be as in \textcolor{blue}{Theorem} \eqref{strength}, then for the functions $\varphi$ from $W^2_2(\Gamma\backslash V)$, which are also continuous on the whole graph, we have 
$$h^i_\alpha[\varphi]=h^i_\infty[\varphi]+\alpha |\varphi(v_0)|^2, \quad \text{and} \quad  h^i_{\tilde{\alpha}}[\varphi]=h^i_\infty[\varphi]+\tilde{\alpha} |\varphi(v_0)|^2,$$ 
Since the interaction strengths, $\alpha_m$ does not enter the form domain. Therefore, $D(h_\alpha)=D(h_{\tilde{\alpha}})$, and $D(h^1_\infty)=\{ \varphi \in D(h_{\alpha}): \varphi(v)=0 \}$, moreover, the quadratic forms $h^i_\infty$ and $h^i_{\tilde{\alpha}}$ agree on $D(h^i_{\infty}).$
Therefore, minimizing over smaller space makes the eigenvalue large. Thus, we obtain the following inequalities. $$\lambda_k(\Gamma_{\alpha}) \leq \lambda_k(\Gamma_{\tilde{\alpha}}) \leq \lambda_k(\Gamma^1_{\infty}) \leq \lambda_{k+1}(\Gamma_{\alpha}), \quad \quad \lambda_k(\Gamma_{\alpha}) \leq   \lambda_k(\Gamma_{\tilde{\alpha}}) \leq \lambda_k(\Gamma^2_{\infty}) \leq \lambda_{k+2}(\Gamma_{\alpha}),  $$
   $$\lambda_k(\Gamma_{\alpha}) \leq \lambda_k(\Gamma_{\tilde{\alpha}}) \leq \lambda_k(\Gamma^3_{\infty}) \leq \lambda_{k+d}(\Gamma_{\alpha}), \quad  \quad \lambda_k(\Gamma_{\alpha}) \leq   \lambda_k(\Gamma_{\tilde{\alpha}}) \leq \lambda_k(\Gamma^4_{\infty}) \leq \lambda_{k+d+1}(\Gamma_{\alpha}).  $$

\section{Joining and splitting of vertices}
In this section, we aim to examine the effect on the graph's spectrum, mainly due to the changes in the connectivity of the underlying graph. We discuss the behaviour of the spectrum of a graph when two of its vertices are joined (not necessarily of the same type) and integrate their coupling conditions befittingly. This procedure does not affect the total number of edges and therefore also preserves the total length of the graph; however, the graph's connectivity is increased.
\\ \\
For a finite and compact metric graph $\Gamma$ consider the operator $H=\frac{d^4}{dx^4}$ in $L^2(\Gamma)$ with any self-adjoint vertex conditions at all vertices except at two distinct vertices $v_1$  and $v_2$. Assume that the vertices $v_1$ and $v_2$ are equipped with conditions either \eqref{vc1}, \eqref{gvc1}, \eqref{gvc2} or \eqref{vc2} (not necessarily of the same type at $v_1$ and $v_2$) with interaction strengths $\alpha_1$ and $\alpha_2$, respectively. Let  $\tilde\Gamma$ is obtained by identifying $v_1$ and $v_2$ to obtain a new vertex $v_0$ with strength $\alpha_0=\alpha_1+\alpha_2$. Based on the types of conditions imposed at $v_1$ and $v_2$, these two vertices can be glued together in several different ways. Below, we list all the possibilities and divide them into three classes. The first two correspond to the rank one and two perturbations, and the third corresponds to the rank $d_2$ perturbation, where $d_2$ is the degree of vertex $v_2$.
\begin{enumerate}[I]
  \item \begin{enumerate}[1.]
  
\item Conditions \eqref{vc1} are imposed at $v_1$ and $v_2$. The glued vertex $v_0$ is also equipped with  conditions \eqref{vc1}.

\item Conditions \eqref{vc1} and \eqref{gvc1} are imposed at $v_1$ and $v_2$, respectively. The glued vertex $v_0$ is equipped with conditions \eqref{vc1}.

\item Conditions \eqref{vc1} and \eqref{gvc2} are imposed at $v_1$ and $v_2$, respectively. The glued vertex $v_0$ is equipped with conditions \eqref{vc1}.

\item Conditions \eqref{vc1} and \eqref{vc2} are imposed at $v_1$ and $v_2$, respectively. The glued vertex $v_0$ is equipped with conditions \eqref{vc1}.

\item Conditions \eqref{gvc1} are imposed at $v_1$ and $v_2$. The glued vertex $v_0$ is also equipped with conditions \eqref{gvc1}.

\item Conditions \eqref{gvc1} and \eqref{vc2} are imposed at $v_1$ and $v_2$, respectively. The glued vertex $v_0$ is equipped with conditions \eqref{gvc1}.

\item Conditions \eqref{vc2} are imposed at $v_1$ and $v_2$. The glued vertex $v_0$ is also equipped with conditions \eqref{vc2}.
\\
\end{enumerate}
\item \begin{enumerate}[1.]
  \item Conditions \eqref{gvc1} and \eqref{vc1} are imposed at $v_1$ and $v_2$, respectively. The glued vertex $v_0$  is equipped with conditions \eqref{gvc1}.
\item Conditions \eqref{gvc2} are imposed at $v_1$ and $v_2$. The glued vertex $v_0$ is also equipped with conditions \eqref{gvc2}.
\item Conditions \eqref{gvc2} and \eqref{vc2} are imposed at $v_1$ and $v_2$, respectively. The glued vertex $v_0$ is equipped with conditions \eqref{gvc2}.\\
\end{enumerate}

\item \begin{enumerate}[1.]
\item Conditions \eqref{gvc2} and \eqref{vc1} are imposed at $v_1$ and $v_2$, respectively. The glued vertex $v_0$ is equipped with conditions \eqref{gvc2}.
\item Conditions \eqref{vc2} and \eqref{vc1} are imposed at $v_1$ and $v_2$, respectively. The glued vertex $v_0$ is equipped with conditions \eqref{vc2}.
\item Conditions \eqref{vc2} and \eqref{gvc1} are imposed at $v_1$ and $v_2$, respectively. The glued vertex $v_0$ is equipped with conditions \eqref{vc2}.
\item Conditions \eqref{vc2} and \eqref{gvc2} are imposed at $v_1$ and $v_2$, respectively. The glued vertex $v_0$ is equipped with conditions \eqref{vc2}.
\\
\end{enumerate}

\end{enumerate}
\begin{figure}[H]
\centering 
\includegraphics[scale=1.0]{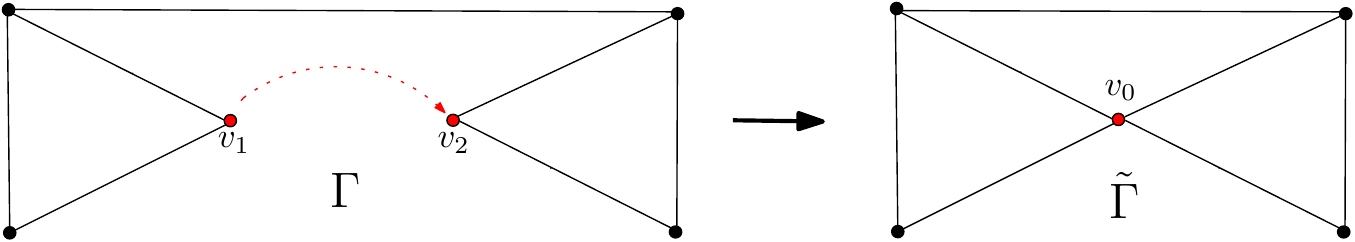}
\caption{The graph $\tilde\Gamma$ is formed by gluing the
vertices $v_1$ and $v_2$. Conversely, the graph  $\Gamma$ is one of the possible graphs obtainable from $\tilde\Gamma$ by splitting $v_0$, and producing the vertices $v_1, v_2$. }
\end{figure}
Let $\tilde H=\frac{d^4}{dx^4}$ be the self-adjoint operator in $L^2(\tilde\Gamma)$ subject to the same vertex conditions as  $H$ at all vertices $v\in V\backslash\{v_1,v_2\}$ and based on conditions at $v_1$ and $v_2,$  the glued vertex $v_0$ satisfy vertex conditions according to any one of the choices mentioned in the above three cases. The following theorem inspired by \cite{BKKM} shows that the gluing of vertices leads to a non-decrease of eigenvalues and specific interlacing properties held between the eigenvalues of $H$ and $\tilde H$.

\begin{theorem}\label{gluing}
\begin{enumerate}
\item If the vertices $v_1$ and $v_2$ of graph $\Gamma$ and the glued vertex $v_0$ of graph $\tilde \Gamma$ are equipped with vertex conditions according to the classifications I . Then the conclusions of \textcolor{blue}{Theorem} \eqref{berk1} hold for the eigenvalues of operators $H$ and $\tilde H$.
\\
\item If the vertices $v_1$ and $v_2$ of graph $\Gamma$ and the glued vertex $v_0$ of graph $\tilde \Gamma$ are equipped with vertex conditions according to the classifications II. Then,
$$
\lambda_k(\Gamma)\leq \lambda_k(\tilde{\Gamma})\leq \lambda_{k+2}(\Gamma)\leq \lambda_{k+2}(\tilde{\Gamma}), \quad k\geq 1.
$$ 

\item If the vertices $v_1$ and $v_2$ of graph $\Gamma$ and the glued vertex $v_0$ of graph $\tilde \Gamma$ are equipped with vertex conditions according to the classifications III. Then,
$$
\lambda_k(\Gamma)\leq \lambda_k(\tilde{\Gamma})\leq \lambda_{k+d_2+1}(\Gamma)\leq \lambda_{k+d_2+1}(\tilde{\Gamma}), \quad k\geq 1.$$
Where $d_2$ is the degree of vertex $v_2.$
\end{enumerate}
\end{theorem}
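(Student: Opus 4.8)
The plan is to recast the whole statement as a comparison of the two quadratic forms on the \emph{common} Hilbert space $L^2(\Gamma)=L^2(\tilde\Gamma)=\bigoplus_{e\in E}L^2(e)$. Both $h$ and $\tilde h$ carry the identical differential expression, and their vertex terms differ only at the glued site: $\alpha_1|\varphi(v_1)|^2+\alpha_2|\varphi(v_2)|^2$ for $h$ versus $(\alpha_1+\alpha_2)|\varphi(v_0)|^2$ for $\tilde h$. Since $\alpha_0=\alpha_1+\alpha_2$, these coincide on every function with $\varphi(v_1)=\varphi(v_2)$, so $h$ and $\tilde h$ are just the restrictions of one fixed expression $E[\varphi]=\int_\Gamma|\varphi''|^2\,dx+\sum_m\alpha_m|\varphi(v_m)|^2$ to the form domains $D(h),D(\tilde h)\subset W^2_2(\Gamma\backslash V)$. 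By Proposition~\ref{minmax} the theorem is therefore a statement about how the min--max levels of $E$ react to replacing the admissible subspace $D(h)$ by $D(\tilde h)$, and each asserted chain $\lambda_k(\Gamma)\le\lambda_k(\tilde\Gamma)\le\lambda_{k+r}(\Gamma)\le\lambda_{k+r}(\tilde\Gamma)$ (with $r=1,2,d_2+1$) splits into two independent parts: the \emph{monotonicity} $\lambda_j(\Gamma)\le\lambda_j(\tilde\Gamma)$ for all $j$, and the \emph{rank bound} $\lambda_k(\tilde\Gamma)\le\lambda_{k+r}(\Gamma)$.

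First I would pin down $r$ and dispose of the rank bound uniformly. Gluing always imposes the single scalar constraint $\varphi(v_1)=\varphi(v_2)$ and trades the pair of derivative conditions at $v_1,v_2$ for one condition at $v_0$. Counting free boundary data (one function value plus, respectively, $d$, $d-1$, $1$, or $0$ free first derivatives at a degree-$d$ vertex carrying \eqref{vc1}, \eqref{gvc1}, \eqref{gvc2}, \eqref{vc2}), a short computation gives $r:=\operatorname{codim}_{D(h)}\!\big(D(h)\cap D(\tilde h)\big)=1$ for Class~I, $=2$ for Class~II, and $=d_2+1$ (with equality in III.1--III.2 and $\le d_2+1$ in the borderline subcases III.3--III.4) for Class~III. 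Writing $g$ for $E$ restricted to $Y:=D(h)\cap D(\tilde h)$, the same count shows $g$ is a positive rank-$r$ restriction of $h$, so $r$-fold iteration of Theorem~\ref{berk1} yields $\lambda_k(\Gamma)\le\lambda_k(g)\le\lambda_{k+r}(\Gamma)$; since $Y\subset D(\tilde h)$ forces $\lambda_k(\tilde\Gamma)\le\lambda_k(g)$, chaining gives $\lambda_k(\tilde\Gamma)\le\lambda_{k+r}(\Gamma)$ for all three classes at once.

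It remains to establish the monotonicity $\lambda_j(\Gamma)\le\lambda_j(\tilde\Gamma)$. Whenever the merged condition C at $v_0$ \emph{implies} both constituent conditions A at $v_1$ and B at $v_2$ --- this is exactly the list I.1, I.7, II.2 and all of III, since there C is the stronger condition (e.g.\ \eqref{vc2} forces every first derivative to vanish, and \eqref{gvc2} at $v_0$ restricts to \eqref{gvc2} on each half) --- one has the genuine inclusion $D(\tilde h)\subset D(h)$, and monotonicity is immediate from Proposition~\ref{minmax}, the minimisation for $\tilde\Gamma$ running over fewer subspaces. The remaining subcases I.2--I.6, II.1, II.3 are the delicate ones: there the single balance imposed at $v_0$, e.g.\ $\sum_{x_j\in v_0}\overline{\sigma^{v_0}_{x_j}}\,\partial\varphi(x_j)=0$, does \emph{not} decouple into the separate balances demanded at $v_1$ and $v_2$, so $D(\tilde h)$ is not contained in $D(h)$.

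The hard part will be precisely this non-nested monotonicity. The diamond $Y\subset D(h),D(\tilde h)\subset D(h)+D(\tilde h)$ alone is insufficient: $h$ and $\tilde h$ appear there as two \emph{distinct} finite-rank restrictions of the common master form $E|_{D(h)+D(\tilde h)}$, and the abstract perturbation theory of Theorem~\ref{berk1} imposes no ordering between $\lambda_j(h)$ and $\lambda_j(\tilde h)$; it yields only the shifted estimate $\lambda_j(\Gamma)\le\lambda_{j+1}(\tilde\Gamma)$. To recover the sharp $\lambda_j(\Gamma)\le\lambda_j(\tilde\Gamma)$ I would exploit the asymmetry of the operation --- gluing \emph{adds} a continuity constraint while only \emph{merging}, never freely releasing, the derivative balances. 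Concretely, starting from a minimising subspace $X_j\subset D(\tilde h)$ for $\lambda_j(\tilde\Gamma)$ furnished by Lemma~\ref{lem}, I would map each trial function to a function in $D(h)$ by adding a correction supported in arbitrarily small neighbourhoods of $v_1,v_2$ that reinstates the separate balances, and show that this correction can be chosen so as not to raise the Rayleigh quotient of $E$ while keeping the image $j$-dimensional. Verifying that such a quotient-nonincreasing, dimension-preserving correction always exists for each of the non-nested couplings is the technical crux on which the full interlacing rests.
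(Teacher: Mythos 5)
Your reduction of each chain to (a) the rank bound $\lambda_k(\tilde\Gamma)\le\lambda_{k+r}(\Gamma)$ and (b) the monotonicity $\lambda_j(\Gamma)\le\lambda_j(\tilde\Gamma)$, and your proof of (a) via the intersection $Y=D(h)\cap D(\tilde h)$ (iterating Theorem~\ref{berk1} along a chain $Y\subset_1\cdots\subset_1 D(h)$ and using $Y\subset D(\tilde h)$), is essentially the paper's argument carried out more carefully: the paper's proof consists of the single assertion that gluing restricts the form domain, $D(\tilde h)\subset_2 D(h)$, from which it reads off (b) by min--max and (a) by the rank-$n$ perturbation. Your version of (a) is valid in every case. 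Your further observation that the inclusion $D(\tilde h)\subset D(h)$ genuinely fails for the couplings I.2--I.6, II.1, II.3 is also correct: there the glued condition no longer enforces the first-derivative balance or clamping that \eqref{gvc1}, \eqref{gvc2} or \eqref{vc2} imposed at $v_1$ or $v_2$ before gluing, so neither form domain contains the other. In the nested subcases I.1, I.7, II.2 and III your proof is complete and agrees with the paper's.

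However, exactly in those non-nested cases your proposal stops short, and this is a genuine gap: the inequalities $\lambda_k(\Gamma)\le\lambda_k(\tilde\Gamma)$ and $\lambda_{k+r}(\Gamma)\le\lambda_{k+r}(\tilde\Gamma)$ are half of every asserted chain, and the correction scheme you sketch for them cannot work. To move a trial function $\varphi\in D(\tilde h)$ into $D(h)$ you must change the endpoint derivatives $\partial\varphi(x_j)$ by amounts $c_j$ fixed by $\varphi$; any corrector $\psi\in W_2^2$ supported in an $\epsilon$-neighbourhood of the vertex with $\psi=0$ at the vertex and $\partial\psi(x_j)=c_j$ satisfies, by scaling, $\int|\psi''|^2\,dx\gtrsim \sum_j|c_j|^2/\epsilon$, so shrinking the support inflates the numerator of the Rayleigh quotient without bound, while enlarging the support trades locality for an energy contribution of order $\sum_j|c_j|^2$ that you have no mechanism to absorb. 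Hence quotient-nonincreasing, dimension-preserving corrections do not exist in general, and monotonicity in cases I.2--I.6, II.1, II.3 remains unproved. Nor can it be recovered from the paper's other results: relaxing the condition at $v_2$ via Proposition~\ref{changeofcondition} and then performing the nested gluing I.1 yields only the shifted bound $\lambda_k(\Gamma)\le\lambda_{k+1}(\tilde\Gamma)$, exactly as you note. As submitted, your argument establishes the theorem only in the nested subcases; the remaining couplings need a new idea, which neither your correction scheme nor the paper's one-line inclusion argument (whose hypothesis fails there, as you yourself show) supplies.
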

\begin{proof}
Let $D(h)$ and $D(\tilde{h})$ be domains of quadratic forms $h$ and $\tilde{h}$ corresponding to metric graphs $\Gamma$ and $\tilde{\Gamma}$, respectively. Since the gluing of two vertices impose conditions on domain $D(\tilde{h})$, thus the domain is restricted and $D(\tilde{h}) \subset_{2} D(h)$. The inequality $\lambda_k(\Gamma) \leq \lambda_k(\tilde{\Gamma})$ follows from the fact that minimising over  smaller set $D(\tilde{h})$ results in larger eigenvalues.  The inequality $\lambda_k(\tilde{\Gamma}) \leq \lambda_{k+2}(\Gamma)$ follows from rank two nature of perturbation.   
\end{proof}

\begin{corollary} \label{joiningg}
Let the vertices $v_1$ and $v_2$ of graph $\Gamma$ are glued together to obtain a new graph $\tilde\Gamma$. Denote the glued vertex by $v_0$. Suppose that for some $k\geq 1$ there exist eigenfunctions $\varphi_1,\cdots,\varphi_k$ corresponding to eigenvalues $\lambda_1(\Gamma),\cdots,\lambda_k(\Gamma)$, respectively, such that
\begin{equation}\label{levelpoints1}
\varphi_1(v_1)=\varphi_1(v_2), \cdots, \varphi_k(v_1)=\varphi_k(v_2).
\end{equation}
\begin{enumerate}
 \item If gluing is performed according to the classifications, $I$ then
\begin{equation}\label{levelpoints2}
\lambda_1(\tilde \Gamma)=\lambda_1(\Gamma),\cdots, \lambda_k(\tilde \Gamma)=\lambda_k(\Gamma).
\end{equation}
\item If gluing is performed according to the classification $II(1)$ and, in addition to \eqref{levelpoints1}, eigenfunctions also satisfy
\begin{equation}
    \sum\limits_{x_{j} \in v_{2}} \overline{\sigma^{v_{2}}_{x_{j}}} \partial \varphi(x_j)=0,
\end{equation}
then \eqref{levelpoints2} holds.
\item If gluing is performed according to the classification $II(2)$ and, in addition to \eqref{levelpoints1}, eigenfunctions also satisfy
\begin{equation} \label{levelpoints3}
\frac{\partial\varphi_1(x_i)}{\sigma_{x_i}^{v_1}}=\frac{\partial\varphi_1(x_j)}{\sigma_{x_j}^{v_2}},\cdots,\frac{\partial\varphi_k(x_i)}{\sigma_{x_i}^{v_1}}=\frac{\partial\varphi_k(x_j)}{\sigma_{x_j}^{v_2}}, \quad x_i\in v_1,x_j\in v_2,    
\end{equation}

then \eqref{levelpoints2} holds.

\item If gluing is performed according to the classification $II(3)$ and, in addition to \eqref{levelpoints1}, eigenfunctions also satisfy
\begin{equation}
     \partial \varphi(x_1)=0, \quad x_1 \in v_1
\end{equation}
then \eqref{levelpoints2} holds.

\item If gluing is performed according to the classification $III(2,3, 4)$ and, in addition to \eqref{levelpoints1}, eigenfunctions also satisfy
\begin{equation}
     \partial \varphi(x_j)=0, \quad x_j \in v_2
\end{equation}
then \eqref{levelpoints2} holds.
\end{enumerate}
\end{corollary}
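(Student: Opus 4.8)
The plan is to pair the lower bound already furnished by Theorem~\ref{gluing} with a matching upper bound obtained by feeding the eigenfunctions of $\Gamma$ into the min-max principle for $\tilde\Gamma$. Because gluing only shrinks the form domain, Theorem~\ref{gluing} yields $\lambda_j(\Gamma)\le\lambda_j(\tilde\Gamma)$ for every $j$ in each of its three cases. It therefore suffices, for each $j\le k$, to exhibit a $j$-dimensional subspace of $D(\tilde h)$ on which the Rayleigh quotient does not exceed $\lambda_j(\Gamma)$; the natural candidate is $X_j:=\operatorname{span}(\varphi_1,\dots,\varphi_j)$, so the whole difficulty is concentrated in verifying that each $\varphi_i$ belongs to $D(\tilde h)$.

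To organise that verification I would first record the form domain of $\tilde h$ at the glued vertex $v_0$, which by the computation of Section~4 is: continuity alone when $v_0$ carries \eqref{vc1}; continuity together with $\sum_{x_j\in v_0}\overline{\sigma^{v_0}_{x_j}}\,\partial\varphi(x_j)=0$ for \eqref{gvc1}; continuity together with proportionality $\partial\varphi(x_i)/\sigma^{v_0}_{x_i}=\partial\varphi(x_j)/\sigma^{v_0}_{x_j}$ of all first derivatives for \eqref{gvc2}; and continuity together with $\partial\varphi(x_j)=0$ at every endpoint for \eqref{vc2}. In every case the continuity requirement across $v_0$ is precisely $\varphi(v_1)=\varphi(v_2)$, so hypothesis \eqref{levelpoints1} always secures it, and the extra hypotheses in parts $2$--$5$ are designed to supply exactly the remaining derivative constraint.

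The membership check then proceeds classification by classification. For classification $I$ the derivative part of the $v_0$-condition is already implied by the constraints the eigenfunctions inherit at $v_1$ and at $v_2$ --- two \eqref{gvc1}-sums add to the single sum at $v_0$, two \eqref{vc2}-conditions combine to the \eqref{vc2}-condition at $v_0$, and so on --- so \eqref{levelpoints1} alone places each $\varphi_i$ in $D(\tilde h)$. For $II(1)$ the eigenfunctions already satisfy $\sum_{x_j\in v_1}\overline{\sigma^{v_1}_{x_j}}\,\partial\varphi(x_j)=0$, so the glued \eqref{gvc1}-constraint collapses to the vanishing of the $v_2$-sum, which is the added hypothesis. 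For $II(2)$ proportionality holds within the $v_1$-edges and within the $v_2$-edges, and \eqref{levelpoints3} splices these into proportionality across all of $v_0$. For $II(3)$ the derivatives vanish on the $v_2$-edges by \eqref{vc2} there while the hypothesis forces them to vanish on the $v_1$-edges, so the \eqref{gvc2}-proportionality at $v_0$ holds trivially; the cases $III(2,3,4)$ are the mirror image, with \eqref{vc2} at $v_1$ killing the $v_1$-derivatives and the hypothesis killing the $v_2$-derivatives. I expect this matching of each hypothesis to the one missing form-domain condition to be the only genuinely delicate step.

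Once membership is established the conclusion is routine. The $\varphi_i$, which we take orthonormal, are linearly independent, so $\dim X_j=j$; since the two forms agree on $D(\tilde h)$ and $h(\varphi_i,\varphi_{i'})=\lambda_i(\Gamma)\,\delta_{ii'}$, every $\varphi=\sum_{i\le j}c_i\varphi_i$ obeys $\tilde h[\varphi]=\sum_{i\le j}|c_i|^2\lambda_i(\Gamma)\le\lambda_j(\Gamma)\,\|\varphi\|^2$, whence $M(\tilde h,X_j)=\lambda_j(\Gamma)$. Proposition~\ref{minmax} then gives $\lambda_j(\tilde\Gamma)\le\lambda_j(\Gamma)$ for all $j\le k$, and combining this with the reverse inequality from Theorem~\ref{gluing} produces the asserted equalities \eqref{levelpoints2}.
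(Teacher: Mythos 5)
Your proposal is correct and takes essentially the same approach as the paper: the paper likewise observes that, under \eqref{levelpoints1} and the extra derivative hypotheses, each $\varphi_i$ lies in the form domain of $\tilde h$ and hence serves as a minimizer (trial function) for $\tilde\Gamma$, which together with the opposite inequality $\lambda_j(\Gamma)\le\lambda_j(\tilde\Gamma)$ from Theorem~\ref{gluing} forces the eigenvalues to coincide. Your write-up simply makes explicit the case-by-case form-domain verification and the min-max computation that the paper compresses into a one-line argument for case (1).
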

\begin{proof}
\begin{enumerate}
    \item We will prove it for $(1)$ only; the same argument holds in other cases. 
    Since, the domain of quadratic form for $\tilde{\Gamma}$ only involves continuity of functions at vertices, so $\varphi_k(x)$ is also in domain of quadratic form for $\tilde\Gamma$, and if in addition, $\varphi_k(x)$ satisfies (\ref{levelpoints1}), then, $\varphi_k(x)$ is also a minimizer for $\tilde{\Gamma}$, and corresponding eigenvalues coincide.
\end{enumerate}

\end{proof}
{
\begin{remark}\noindent
In above theorem, we have only considered case in which glued vertex $v_0$ is equipped with condition that is imposed on either $v_1$ or $v_2$. However, one can consider the case when glued vertex is equipped with condition, which is neither imposed on $v_1$ nor $v_2$. To this end, we describe the relation between eigenvalues of graphs $\Gamma$ and $\tilde{\Gamma}$ when $v_1$ and $v_2$ are equipped with \eqref{vc1} and $v_0$ is equipped with  \eqref{gvc1}. Let $\Gamma'$ be another graph obtained from $\Gamma$ by gluing $v_1$ and $v_2$, in accordance to \textcolor{blue}{Theorem} \eqref{gluing} when glued vertex $v_0^*$ is equipped with condition  \eqref{vc1}. Then $$\lambda_k(\Gamma) \leq \lambda_k(\Gamma')\leq \lambda_{k+1}(\Gamma).$$ 
Now, create a graph $\tilde{ \Gamma}$ from $\Gamma'$ by replacing the condition at $v_0^*$  from \eqref{vc1} to \eqref{gvc1} using \textcolor{blue}{Theorem} \eqref{changeofcondition}. Thus, we get
$$\lambda_k(\Gamma') \leq \lambda_k (\tilde{\Gamma}) \leq \lambda_{k+1}(\Gamma' ).$$
using the above two inequalities, we obtain
$$\lambda_k(\Gamma) \leq \lambda_k(\tilde{ \Gamma})\leq \lambda_{k+2}(\Gamma).$$
Similarly, one can list all possible interlacing inequalities for other cases also.
\end{remark}}
One can observe that if the form $\tilde h$ is a positive rank-m perturbation of the form $h$, that it, if $D(\tilde h)$ is co-dimension $m$ subspace of $D(h)$, then we obtain following inequality by applying the same arguments.
$$\lambda_k(\Gamma) \leq \lambda_k(\tilde{\Gamma}) \leq \lambda_{k+m}(\Gamma).$$
This observation gives rise to the following generalization of Theorem \eqref{gluing}. 
\begin{theorem} \label{joining}
\begin{enumerate}
    \item If the vertices $v_0,v_1, v_2,\cdots,v_m$ of graph $\Gamma$ and the glued vertex $v^*_0$ of graph $\tilde\Gamma$ are equipped with vertex conditions according to the classification I. Then
\begin{equation}
    \lambda_k(\Gamma) \leq \lambda_k(\tilde{\Gamma}) \leq \lambda_{k+m}(\Gamma).
\end{equation}
\item If the vertices $v_0,v_1, v_2,\cdots,v_m$ of graph $\Gamma$ and the glued vertex $v^*_0$ of graph $\tilde\Gamma$ are equipped with vertex conditions according to the classification II. Then
\begin{equation}
    \lambda_k(\Gamma) \leq \lambda_k(\tilde{\Gamma}) \leq \lambda_{k+2m}(\Gamma).
\end{equation} 
\end{enumerate}
\end{theorem}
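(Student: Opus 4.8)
The plan is to reduce Theorem~\ref{joining} to an iterated application of the single-gluing result, Theorem~\ref{gluing}, combined with the codimension bookkeeping observed just above the statement. The key conceptual point is that gluing $m+1$ vertices $v_0, v_1, \dots, v_m$ into one vertex $v_0^*$ can be realized as a \emph{single} positive rank-$n$ perturbation, where $n$ is the codimension of $D(\tilde h)$ in $D(h)$. I would therefore first argue that $\tilde h$ is the restriction of $h$ to the subspace $D(\tilde h)$ cut out by the continuity (and, in Class II, derivative-matching) constraints linking the $m+1$ vertices, and that the forms agree on this subspace since the differential expression $\frac{d^4}{dx^4}$ and the sum $\sum_m \alpha_m|\varphi(v_m)|^2$ are unaffected by the identification when $\alpha_0^* = \sum_j \alpha_{v_j}$. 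This places us exactly in the positive rank-$n$ framework preceding Theorem~\ref{berk1}, so the inequality $\lambda_k(\Gamma)\le\lambda_k(\tilde\Gamma)\le\lambda_{k+n}(\Gamma)$ follows at once from the generalized rank-$n$ remark stated just before the theorem.

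The work then splits into computing $n$ in each class. For Class~I, I would verify that gluing each additional vertex of type \eqref{vc1}-flavored conditions into $v_0^*$ imposes exactly one scalar continuity constraint per vertex beyond the first; identifying $m+1$ vertices with a common value therefore costs $m$ dimensions, giving $n=m$ and hence $\lambda_k(\tilde\Gamma)\le\lambda_{k+m}(\Gamma)$. This is consistent with part~(1) of Theorem~\ref{gluing}, where gluing two Class-I vertices is a rank-one perturbation ($m=1$). For Class~II, each pairwise gluing is a rank-two perturbation by part~(2) of Theorem~\ref{gluing}—one constraint from continuity and one from the normalized first-derivative matching in \eqref{gvc1} or \eqref{gvc2}—so identifying $m+1$ vertices costs $2m$ dimensions, yielding $n=2m$ and $\lambda_k(\tilde\Gamma)\le\lambda_{k+2m}(\Gamma)$.

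To make the counting rigorous I would proceed inductively rather than all at once: glue $v_0$ and $v_1$ first (invoking Theorem~\ref{gluing} directly for the base case), then glue the result with $v_2$, and so on, at each stage applying Theorem~\ref{gluing} to the intermediate graph. The interlacing $\lambda_k(\Gamma)\le\lambda_k(\Gamma^{(1)})\le\lambda_{k+c}(\Gamma)$ at each step (with $c=1$ for Class~I and $c=2$ for Class~II) then composes: after $m$ steps one obtains $\lambda_k(\Gamma)\le\lambda_k(\tilde\Gamma)\le\lambda_{k+cm}(\Gamma)$ by telescoping the shifts. The composition of interlacing bounds is routine once one notes that the intermediate vertex $v_0^*$ retains a condition of the same class at each stage (Class~I glued to Class~I stays Class~I; Class~II stays Class~II), so that Theorem~\ref{gluing} remains applicable throughout.

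The main obstacle I anticipate is not the variational estimate itself but justifying that the successive gluings stay \emph{within the same class} and that the cumulative codimension is exactly $cm$ with no overcounting or degeneracy—in particular, that the $m$ (or $2m$) linear functionals defining $D(\tilde h)$ inside $D(h)$ are genuinely linearly independent, so that the perturbation has rank exactly $n$ and not less. For Class~II this requires checking that the first-derivative matching functionals at distinct vertices are independent of the continuity functionals; since these involve values of $\varphi$ versus values of $\partial\varphi$ at disjoint sets of endpoints, independence is clear, but it should be stated explicitly. Once independence is confirmed, the bound is sharp in the sense that the rank equals the codimension, and the iterated application of Theorem~\ref{gluing} closes the argument.
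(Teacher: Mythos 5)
Your proposal is correct and follows essentially the same route as the paper: the paper's proof is precisely the inductive argument you describe, realizing the gluing of $m+1$ vertices as $m$ successive pairwise gluings and telescoping the interlacing inequalities from Theorem \ref{gluing}. Your additional remarks on rank bookkeeping and linear independence of the constraint functionals are sound refinements but not a different method.
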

\begin{proof}
The gluing of $m+1$ vertices at one vertex is the same as pairwise gluing of $m$ pair of vertices. At each step, we identify two vertices and add parameters $\alpha_m$. By the repeated application of \textcolor{blue}{Theorem} \eqref{gluing}, we can achieve the desired inequality.
\end{proof}
\begin{definition}
Let $\Gamma$ be finite, compact connected metric graph. A \textbf{flower graph} $\Gamma_f$ is constructed from $\Gamma$ by attaching together both endpoints of all edges at a single vertex and  assign total strength $\alpha=\sum\limits_{m=1}^{|V|}\alpha_m$ at that vertex. 

\end{definition}
\begin{figure}[H]
\centering 
\includegraphics[scale=1.0]{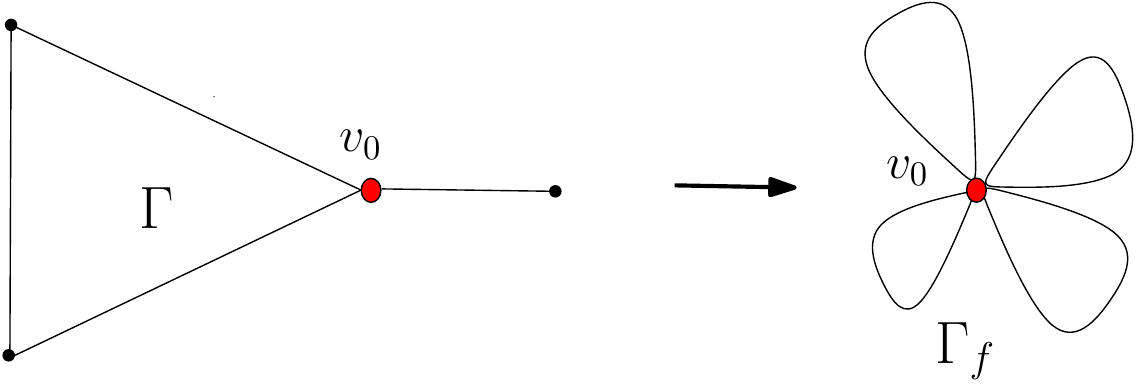}
\caption{A flower graph $\Gamma_f$ is obtained by identifying all vertices of the graph $\Gamma$ at $v_0$.}
\end{figure}

\begin{corollary} \label{flower}
 Let $\Gamma$ be a finite compact connected metric graph with vertex set $V$ and let $\Gamma_{f}$ be the corresponding flower graph.
\begin{enumerate}
    \item If each vertex of $\Gamma$ is equipped with vertex conditions either \eqref{vc1}, \eqref{gvc1}, or \eqref{vc2}.  Then
    \begin{equation*}
        \lambda_k(\Gamma) \leq \lambda_k(\Gamma_{f})\leq \lambda_{k+|V|-1}(\Gamma).
    \end{equation*} 
    \item If each vertex of $\Gamma$ is equipped with vertex conditions \eqref{gvc2}.  Then
    \begin{equation*}
        \lambda_k(\Gamma) \leq \lambda_k(\Gamma_{f})\leq \lambda_{k+2|V|-2}(\Gamma).
    \end{equation*} 
\end{enumerate} 

\end{corollary}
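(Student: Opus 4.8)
\noindent The plan is to recognize the flower graph $\Gamma_f$ as the result of gluing all $|V|$ vertices of $\Gamma$ into the single vertex $v_0$, and then to read off both bounds directly from the generalized gluing estimate of Theorem \ref{joining}. Merging $|V|$ vertices into one can be performed as a chain of $|V|-1$ elementary pairwise gluings of the type analyzed in Theorem \ref{gluing}; since the strengths add at each identification, the terminal vertex $v_0$ acquires strength $\alpha=\sum_{m=1}^{|V|}\alpha_m$, exactly as in the definition of $\Gamma_f$. In the language of Theorem \ref{joining} this is the gluing of $m+1=|V|$ vertices, so I would apply that theorem with $m=|V|-1$.

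\noindent For part (1), I would first verify that the admissible family consisting of conditions \eqref{vc1}, \eqref{gvc1}, and \eqref{vc2} is closed under pairwise gluing within Classification~I. Indeed, the relevant entries are I(1), I(2), I(4) (which output \eqref{vc1}), I(5), I(6) (which output \eqref{gvc1}), and I(7) (which outputs \eqref{vc2}); these realize a dominance ordering in which \eqref{vc1} dominates \eqref{gvc1}, which in turn dominates \eqref{vc2}, and the glued vertex inherits the dominant of the two incident conditions. Consequently every intermediate merged vertex again carries a condition from the same family, so all $|V|-1$ gluings remain rank-one perturbations whose codimensions accumulate to $|V|-1$. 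Theorem \ref{joining}(1) with $m=|V|-1$ then gives
$$\lambda_k(\Gamma)\leq\lambda_k(\Gamma_f)\leq\lambda_{k+|V|-1}(\Gamma).$$

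\noindent For part (2), every vertex carries condition \eqref{gvc2}; each pairwise gluing is then case II(2), whose output is again \eqref{gvc2}, so closure holds trivially and all $|V|-1$ merges stay within Classification~II as rank-two perturbations, giving a cumulative codimension $2(|V|-1)$. Theorem \ref{joining}(2) with $m=|V|-1$ yields
$$\lambda_k(\Gamma)\leq\lambda_k(\Gamma_f)\leq\lambda_{k+2|V|-2}(\Gamma).$$
The lower bound $\lambda_k(\Gamma)\leq\lambda_k(\Gamma_f)$ is automatic in both parts, since gluing only restricts the form domain and minimizing the unchanged Rayleigh quotient over a smaller space cannot decrease the eigenvalues. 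The one point I would write out carefully is precisely this closure bookkeeping in part (1): one must confirm that the condition assigned to each intermediate vertex never leaves the admissible family, as this is exactly what guarantees that each of the $|V|-1$ steps contributes a fixed codimension and that Theorem \ref{joining} applies verbatim.
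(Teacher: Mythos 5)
Your proposal is correct and follows essentially the same route as the paper: the flower graph is realized as $|V|-1$ successive pairwise gluings, and the bounds are read off from Theorem \ref{joining} with $m=|V|-1$ (shift $m$ in Classification~I, shift $2m$ in Classification~II), with the lower bound coming from the restriction of the form domain. Your explicit verification that the family $\{$\eqref{vc1}, \eqref{gvc1}, \eqref{vc2}$\}$ is closed under Classification~I gluing via the dominance ordering \eqref{vc1} $\succ$ \eqref{gvc1} $\succ$ \eqref{vc2} is a detail the paper's very brief proof leaves implicit, and it is exactly the bookkeeping needed for the iteration to stay rank-one at every step.
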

\begin{proof}
 Since, in each step, we glue two points of an edge, so at each step, we restrict the domain of the quadratic form, and this operation lowers all eigenvalues.
\end{proof}
\begin{figure}[H]
\centering 
\includegraphics[scale=1.2]{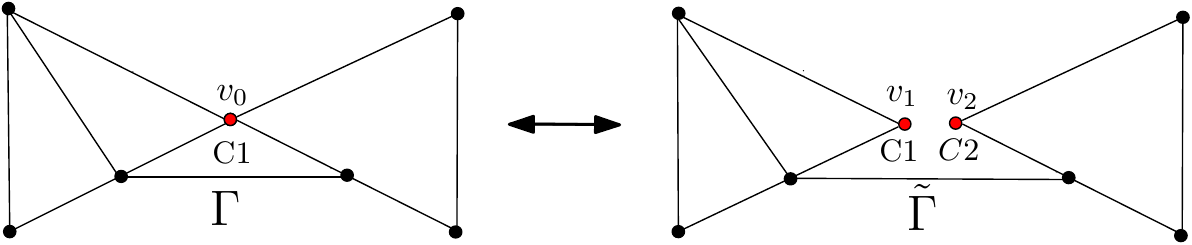}
\caption{Spilitting the vertex $v_0$ into two vertices $v_1$ and $v_2$, we obtain   $\tilde{\Gamma}$. Gluing the vertices $v_1$ and $v_2$ to form single vertex $v_0$, we obtain  $\Gamma$.}
\label{cut}
\end{figure}
The converse of gluing operation is splitting a vertex into two or more vertices. Let $\Gamma$ and $\tilde{\Gamma}$ be two finite and compact metric graphs, as shown in figure \ref{cut}. Let $v_0, v_1$ be vertices equipped with vertex conditions \eqref{vc1}, and $v_2$ be equipped with conditions \eqref{gvc1}. Since the expressions for quadratic form for both graphs are the same. Therefore, we need to compare their respective domains. A graph with a larger domain will have smaller eigenvalues. We can view figure in two ways. First, the graph $\Gamma$ is obtained from $\tilde{\Gamma}$ by gluing vertices $v_1$ and $v_2$. Second, the graph $\tilde{\Gamma}$ is obtained from $\Gamma$ by splitting vertex $v_0$, producing two vertices $v_1$ and $v_2$. In the former case, when gluing vertices together, the domain is restricted, and eigenvalues become large. Since splitting is converse to gluing, in the latter case, splitting should make the domain of quadratic form for $\tilde{\Gamma}$ larger and hence smaller eigenvalues. However, any continuous function on $\Gamma$ is also continuous on $\tilde{\Gamma}$, but in addition to continuity, we require this function to satisfy second equation in \eqref{gvc1} at $v_2$.
\begin{figure}[H]
\centering 
\includegraphics[scale=1.2]{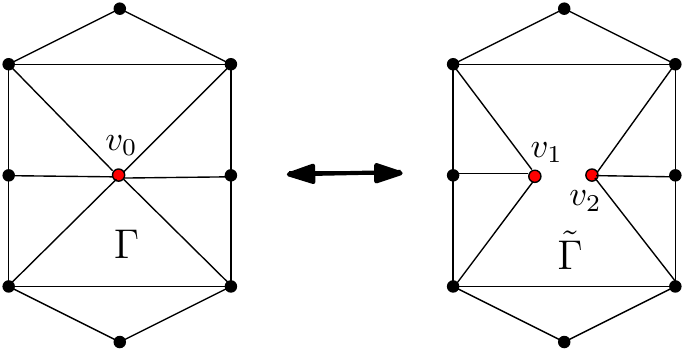}
\caption{Spilitting the vertex $v_0$ into two vertices $v_1$ and $v_2$, we obtain the graph  $\Gamma'$. Gluing the vertices $v_1$ and $v_2$ to form single vertex $v_0$, we obtain the graph $\Gamma$.}
\label{split}
\end{figure}

We compare  Laplacian to Bi-Laplacian to have a more precise idea of why splitting a vertex for Bi-Laplacian with specific conditions does not behave as a converse of gluing of a vertex. Let $\Gamma$ be  finite and compact quantum graph obtained from  $\tilde{\Gamma}$ by gluing $v_1$ and $v_2$, and conversely, let $\tilde{\Gamma}$ be retrieved from $\Gamma$ by splitting vertex $v_0$ (as shown in the \textcolor{blue}{Figure} \ref{split}). Assume that the vertices $v_0$, $v_1$ and $v_2$ are equipped with $\delta$ conditions, and the Laplace operator acts on each edge of $\Gamma$ and $\tilde{\Gamma}$.  The quadratic form's domain contains continuous functions on the whole graph. Therefore, any function from $D(\tilde{h})$ with condition $\varphi(v_1)=\varphi(v_2)$ is suitable for trial function on $\Gamma$. Conversely, every eigenfunction on $\Gamma$ can be lifted on $\tilde{\Gamma}$, and either this function will be an eigenfunction for $\tilde{\Gamma}$, or there will be some other eigenfunction. In both cases, we have following inequality and corresponding domains. 
\begin{align*}
\lambda_k(\tilde{\Gamma}) &\leq \lambda_k(\Gamma), \\ 
    D(h)&=\{\varphi \in D(\tilde{h}) : \varphi(v_1)=\varphi(v_2)\}.     
\end{align*}
Now assume that vertices $v_0, v_1$ and $ v_2$ are equipped with conditions \eqref{gvc1}, and Bi-Laplace operator acts on each edge. The  form domain consists of continuous functions on the whole graph and satisfies condition $\sum_{x_j\in v_m}\overline{\sigma^{v_m}_{x_j}}\partial\varphi(x_j)=0$. Therefore, any function on $\tilde{\Gamma}$ satisfying $\varphi(v_1)=\varphi(v_2)$ can be used as a test function on $\Gamma$, and the domain $D(h)$ is a co-dimension one subspace of $D(\tilde{h})$.  Now, consider the case when splitting is performed. Every eigenfunction on $\Gamma$ is also continuous on $\tilde{\Gamma}$. But to be a test function on $\tilde{\Gamma}$, we require this function to satisfy second equation in \eqref{gvc1} at both vertices $v_1$ and $v_2$. However, the eigenfunction satisfying equation $\sum_{x_j\in v_0}\overline{\sigma^{v_0}_{x_j}}\partial\varphi(x_j)=0$ at $v_0$  is unfitted to satisfy $\sum_{x_j\in v_1}\overline{\sigma^{v_1}_{x_j}}\partial\varphi(x_j)=0$ at $v_1$  and $\sum_{x_j\in v_2}\overline{\sigma^{v_2}_{x_j}}\partial\varphi(x_j)=0$ at $v_2$ . Therefore, we can not lift any function from $\Gamma$ to $\tilde{\Gamma}$. The converse of \textcolor{blue}{Theorem} \eqref{gluing} is in general not true. Instead we have the following result:

\begin{theorem}\label{splittingtheorem}
 Consider a vertex $v_0$ of a compact and finite metric graph $\Gamma$ with delta interaction of type either \eqref{vc1}, \eqref{gvc1}, \eqref{gvc2} or \eqref{vc2} with strength $\alpha_0$ and let another finite and compact metric graph $\tilde{\Gamma}$ is obtained by splitting the vertex $v_0$ into two vertices $v_1$ and $v_2$. The two new vertices can be equipped with either of the vertex conditions \eqref{vc1}, \eqref{gvc1}, \eqref{gvc2} or \eqref{vc2} (not necessarily of the same type as $v_0$) with corresponding strengths $\alpha_1$ and $\alpha_2$ such that $\alpha_0=\alpha_1+\alpha_2$.
Let us denote by $H$ and $\tilde{H}$ the self-adjoint differential operators $\frac{d^4}{dx^4}$  on the graph $\Gamma$ and $\tilde{\Gamma}$ with the same types and strengths of the vertex conditions at all preserved vertices. Let the vertex $v_0$ is splitted into $v_1$ and $v_2$ in any one of the following ways
\begin{enumerate}
\item Conditions \eqref{vc1} are imposed at $v_0$ and its descendants $v_1$ and $v_2$  are also equipped with conditions \eqref{vc1}.
\item Conditions \eqref{gvc2} are imposed at $v_0$ and its descendant vertices $v_1$ and $v_2$ are also equipped with conditions \eqref{gvc2}.
\item Conditions \eqref{vc2} are imposed at $v_0$ and its descendant vertices $v_1$ and $v_2$ are also equipped with conditions \eqref{vc2}.
\item Conditions \eqref{vc2} are imposed at $v_0$ and its descendant vertices $v_1$ and $v_2$ are equipped with conditions \eqref{vc1} and \eqref{vc2}, respectively.
\item Conditions \eqref{gvc2} are imposed at $v_0$ and its descendant vertices $v_1$ and $v_2$ are equipped with conditions \eqref{vc1} and \eqref{gvc2}, respectively.
\item Conditions \eqref{vc2}  are imposed at $v_0$ and its descendant vertices $v_1$ and $v_2$ are equipped with conditions \eqref{gvc1} and \eqref{vc2}, respectively.
\item Conditions \eqref{vc2} are imposed at $v_0$ and its descendant vertices $v_1$ and $v_2$ are equipped with conditions \eqref{gvc2} and \eqref{vc2}, respectively.

\end{enumerate}
Then the eigenvalues of $\Gamma$ and $\tilde{\Gamma}$ satisfy
\begin{equation}
\lambda_k(\Gamma)\geq\lambda_k(\tilde{\Gamma}).
\end{equation}
\end{theorem}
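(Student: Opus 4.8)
The plan is to prove this via a subspace-inclusion argument for the quadratic forms, which is precisely the strategy used throughout this section. The essential point is that splitting a vertex $v_0$ into $v_1$ and $v_2$ \emph{enlarges} the form domain (or keeps it equal), because gluing imposes continuity-type constraints that are relaxed upon splitting. Since the differential expression $\frac{d^4}{dx^4}$ and hence the quadratic form $h[\varphi]=\int_\Gamma|\varphi''|^2\,dx+\sum_m\alpha_m|\varphi(v_m)|^2$ has the same analytic expression on $\Gamma$ and $\tilde{\Gamma}$, and since $\alpha_0=\alpha_1+\alpha_2$ guarantees the vertex contribution is unchanged (as $\varphi(v_1)=\varphi(v_2)=\varphi(v_0)$ whenever continuity is preserved across the split), the whole matter reduces to verifying the single containment $D(h)\subset D(\tilde h)$ for each of the seven configurations. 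Once this inclusion is established, minimizing the Rayleigh quotient over the larger domain $D(\tilde h)$ yields smaller eigenvalues, and $\lambda_k(\Gamma)\geq\lambda_k(\tilde\Gamma)$ follows immediately from the min-max characterization \eqref{minmax1} of Proposition~\ref{minmax}.

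The main work, then, is to check case by case that every $\varphi\in D(h)$ already lies in $D(\tilde h)$. First I would record the form domains explicitly using the descriptions introduced at the start of Section~4, namely that each $D(h^i)$ is cut out of $W^2_2(\Gamma\backslash V)$ by continuity together with the first-order vertex constraint appropriate to the condition type (conditions \eqref{vc1} impose only continuity at the form level; \eqref{vc2} imposes $\partial\varphi=0$; \eqref{gvc1} imposes the scalar relation $\sum\overline{\sigma^{v_m}_{x_j}}\partial\varphi(x_j)=0$; and \eqref{gvc2} imposes normalized continuity $\partial\varphi(x_i)/\sigma^{v_m}_{x_i}=\partial\varphi(x_j)/\sigma^{v_m}_{x_j}$). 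For each of the seven splittings, I would then take a function in $D(h)$ satisfying the vertex condition imposed at $v_0$ and verify that its restriction satisfies \emph{both} conditions newly imposed at $v_1$ and $v_2$. For instance, in case (3) the condition \eqref{vc2} at $v_0$ forces $\partial\varphi(x_j)=0$ at every endpoint $x_j\in v_0$; after splitting, the endpoints are partitioned into those at $v_1$ and those at $v_2$, and since the derivative already vanishes at \emph{all} of them, the condition \eqref{vc2} holds at both $v_1$ and $v_2$ separately, giving $\varphi\in D(\tilde h)$. The cases (1),(4),(5),(6),(7) where $v_0$ carries \eqref{vc1} or \eqref{vc2} are analogous: continuity or the vanishing of the derivative at all endpoints is inherited by any partition of those endpoints.

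The subtle case is (2), where $v_0$ carries \eqref{gvc2} and both descendants $v_1,v_2$ carry \eqref{gvc2}. Here the condition at $v_0$ is normalized continuity of the first derivative, $\partial\varphi(x_i)/\sigma^{v_0}_{x_i}$ is the same common value for all $x_i\in v_0$; after splitting, the descendants inherit a \emph{subset} of these endpoints, and normalized continuity across the full set $v_0$ certainly implies normalized continuity across each sub-collection $v_1$ and $v_2$ (one simply takes the same common ratio), provided the coefficients $\sigma^{v_1}_{x},\sigma^{v_2}_{x}$ are chosen to match $\sigma^{v_0}_{x}$, which is the natural convention when splitting. This is the step I expect to be the main obstacle, because one must confirm that the first-order constraint of the weaker descendant type is genuinely \emph{implied} by—rather than merely compatible with—the constraint at $v_0$; if the descendant conditions were of a type strictly stronger than that at $v_0$ (as happens in the converse-of-gluing discussion preceding the theorem, where \eqref{gvc1} at $v_0$ fails to yield the pair of \eqref{gvc1} conditions at $v_1,v_2$), the inclusion would fail and the inequality would reverse. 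I would therefore emphasize that the seven listed configurations are exactly those for which the descendant constraints are logical consequences of the parent constraint, so that no function is lost in passing from $D(h)$ to $D(\tilde h)$, and the inequality $\lambda_k(\Gamma)\geq\lambda_k(\tilde\Gamma)$ holds in each.
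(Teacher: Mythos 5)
Your proposal is correct and follows essentially the same route as the paper's proof: the paper likewise observes that the quadratic forms share the same expression, that splitting enlarges the form domain (writing $D(h)=\{\varphi\in D(\tilde h):\varphi(v_1)=\varphi(v_2)\}$ for case (1)), and concludes by min-max, dismissing the remaining cases with ``the same argument holds in all other cases.'' Your case-by-case verification of the domain inclusions, including the observation that the seven listed configurations are exactly those where the descendant constraints are implied by the parent constraint (and why, e.g., \eqref{gvc1} descendants would fail), is simply a more careful write-up of what the paper leaves implicit.
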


\begin{proof}
  We consider quadratic forms corresponding to $\Gamma$ and $\tilde{\Gamma}$ and perceive that they are defined precisely by the same expression, and the splitting of a vertex makes the domain of the quadratic form of the graph $\tilde{\Gamma}$ larger.  
Let's prove the case when $v_0$ and its descendant vertices $v_1, v_2$ are equipped with conditions \eqref{vc1}. Let $D(h)$ and $D(\tilde{h})$ denotes the domain of the quadratic forms of $\Gamma$ and $\tilde{\Gamma}$, respectively.
$$D(h)=\{\varphi \in D(\tilde{h}) : \varphi(v_1)=\varphi(v_2) \}$$
 Therefore, $D(h)$ is a  subspace of $D(\tilde{h})$. The same argument holds in all other cases.
\end{proof}
\section{Increasing volume of a graph}
In this section, we focus our study on understanding the precise nature of the relationships between eigenvalues of a graph and  the operation that increase the total volume of a graph, either by attaching a new subgraph to it or by scaling a part of a graph. These basic changes help to transform a graph into another graph such that the effect on one or several eigenvalues is predictable. The impact of surgery operation (increasing length of an edge, attaching pendant edge, insertion of a graph) on eigenvalues is different for a different set of vertex conditions. Moreover, the results and idea of proof are borrowed from \cite{BKKM}.

\subsection{Attaching a pendant graph}
Consider a vertex $v$ of a  graph $\Gamma$ and a vertex $w$ of a graph $\hat\Gamma$, both are equipped with any of the vertex conditions \eqref{vc1}, \eqref{gvc1}, \eqref{gvc2} or \eqref{vc2} with interaction strengths $\alpha_v$ and $\alpha_w$, respectively. All other vertices of $\Gamma$ and $\hat\Gamma$ are equipped with arbitrary  self-adjoint vertex conditions. Let graph $\tilde\Gamma$ is obtained by gluing together $v$ and $w$, according to either of the classifications I, II or III of section 4. The new vertex, say $v_0$, is equipped with any of the vertex conditions \eqref{vc1}, \eqref{gvc1}, \eqref{gvc2} or \eqref{vc2} with interaction strength  $\alpha_{v_0}=\alpha_v+\alpha_w$. We now say that the graph $\tilde\Gamma$ is obtained by attaching a  pendant graph $\hat\Gamma$ to the graph $\Gamma$. 
\begin{figure}[H]
\centering 
\includegraphics[scale=0.88]{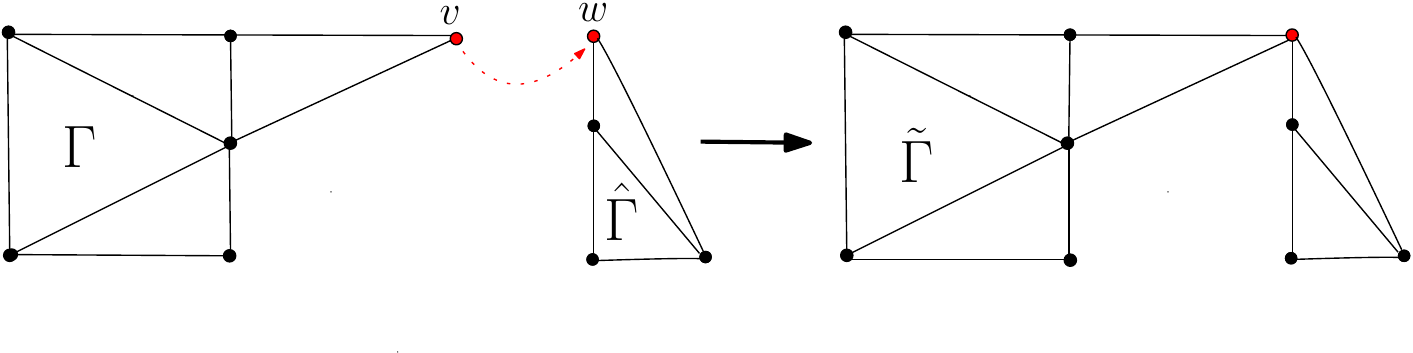}
\caption{Identifying $v, w$ together, we can glue the graph $\hat{\Gamma}$ to $\Gamma$, thus obtaining the graph $\tilde\Gamma$.}
\end{figure}

\begin{theorem}\label{pendant}
Let $\tilde\Gamma$ is formed from $\Gamma$ by attaching a vertex $w$ of a  pendant graph $\hat\Gamma$ at a vertex $v$ of $\Gamma$ and consider self-adjoint operators $H$, $\hat H$ and $\tilde H$ all given by the differential expression $\frac{d^4}{dx^4}$ defined in $L^2(\Gamma)$, $L^2(\hat \Gamma)$ and $L^2(\tilde\Gamma)$, respectively. Assume that for some $r$ and $k_{0}$
$$
\lambda_r(\hat \Gamma)\leq \lambda_{k_{0}}(\Gamma).
$$
\begin{enumerate} 
\item If the vertices $v$ and $w$ are glued according to classifications I then
$$
\lambda_{k+r-1}(\tilde \Gamma)\leq\lambda_k(\Gamma), \quad k\geq k_0.
$$
\item If the vertices $v$ and $w$ are glued according to classifications II, then
$$
\lambda_{k+r-2}(\tilde \Gamma)\leq\lambda_k(\Gamma), \quad k\geq k_0.
$$
\item If the vertices $v$ and $w$ are glued according to classifications III, then
$$
\lambda_{k+r-d-1}(\tilde \Gamma)\leq\lambda_k(\Gamma), \quad k\geq k_0.
$$
Where $d$ is the degree of vertex $v$. Moreover, if $\lambda_{k-1}( \Gamma) < \lambda_k(\Gamma)$ and $\lambda_r(\hat \Gamma) <  \lambda_{k_{0}}(\Gamma)$ and the eigenvalue $\lambda_k(\Gamma)$ has an eigenfunction which is non-zero at $v_0$, then the above three inequalities are strict.
\end{enumerate}

\end{theorem}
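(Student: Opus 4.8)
The plan is to realize $\tilde\Gamma$ as a ``combined'' operator built from the decoupled pieces $\Gamma$ and $\hat\Gamma$, and then to apply the interlacing machinery of \textcolor{blue}{Theorem} \eqref{berk1} together with a dimension-counting argument on trial subspaces. First I would consider the disjoint union $\Gamma \sqcup \hat\Gamma$, whose spectrum is the union (with multiplicity) of the spectra of $\Gamma$ and $\hat\Gamma$. Gluing the vertex $w$ of $\hat\Gamma$ to the vertex $v$ of $\Gamma$ to form $v_0$ is exactly a vertex-gluing operation of the type analyzed in \textcolor{blue}{Theorem} \eqref{gluing}: it restricts the form domain of the decoupled operator by imposing the continuity (and, in classes II and III, derivative-matching) conditions at $v_0$. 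Thus $\tilde h$ is a positive rank-$p$ perturbation of the decoupled form $h \oplus \hat h$, where $p=1$ for classification I, $p=2$ for classification II, and $p=d+1$ for classification III (here $d=d_2$ is the degree of $v$, with the extra $+1$ coming from the identification itself). The key inequality from \textcolor{blue}{Theorem} \eqref{berk1}, applied $p$ times, is then
\begin{equation*}
\lambda_n(\Gamma\sqcup\hat\Gamma)\leq \lambda_n(\tilde\Gamma)\leq \lambda_{n+p}(\Gamma\sqcup\hat\Gamma),\quad n\geq 1.
\end{equation*}

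Next I would exploit the hypothesis $\lambda_r(\hat\Gamma)\leq\lambda_{k_0}(\Gamma)$ to control where the eigenvalues of the disjoint union fall relative to those of $\Gamma$ alone. The point is a counting estimate: for $k\geq k_0$, among the eigenvalues of $\Gamma\sqcup\hat\Gamma$ that lie at or below $\lambda_k(\Gamma)$, at least $r$ of them come from $\hat\Gamma$ (namely $\lambda_1(\hat\Gamma),\dots,\lambda_r(\hat\Gamma)$, all of which are $\leq\lambda_{k_0}(\Gamma)\leq\lambda_k(\Gamma)$) in addition to the $k$ eigenvalues $\lambda_1(\Gamma),\dots,\lambda_k(\Gamma)$ of $\Gamma$. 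This forces
\begin{equation*}
\lambda_{k+r}(\Gamma\sqcup\hat\Gamma)\leq \lambda_k(\Gamma),\quad k\geq k_0.
\end{equation*}
Combining with the lower interlacing bound $\lambda_{n}(\tilde\Gamma)\leq\lambda_{n+p}(\Gamma\sqcup\hat\Gamma)$ evaluated at $n=k+r-p$ gives
\begin{equation*}
\lambda_{k+r-p}(\tilde\Gamma)\leq \lambda_{k+r}(\Gamma\sqcup\hat\Gamma)\leq \lambda_k(\Gamma),
\end{equation*}
which is precisely the claimed inequality once $p$ is specialized to $1$, $2$, and $d+1$ in the three cases. I would present each of the three parts by simply substituting the appropriate value of $p$.

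For the strictness claim in part (3), the plan is to upgrade the perturbation inequality to a strict one under the stated non-degeneracy hypotheses. The extra assumptions $\lambda_{k-1}(\Gamma)<\lambda_k(\Gamma)$ and $\lambda_r(\hat\Gamma)<\lambda_{k_0}(\Gamma)$ rule out the boundary coincidences that would allow equality in the counting step, so the relevant eigenvalue of $\Gamma\sqcup\hat\Gamma$ is strictly below $\lambda_k(\Gamma)$; and the hypothesis that the $\lambda_k(\Gamma)$-eigenfunction is nonzero at $v_0$ means the perturbing form does not vanish on the relevant eigenvector, so \textcolor{blue}{Theorem} \eqref{berk1} (via the strict min-max argument used in the proof of \textcolor{blue}{Theorem} \eqref{strength}) yields a strict inequality rather than equality.

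The main obstacle I anticipate is the bookkeeping in the counting step, specifically verifying that the rank of the perturbation is exactly $p$ in each classification (and not accidentally smaller), since the gluing conditions in classes II and III impose several scalar constraints at $v_0$ whose independence must be checked; getting $p=d+1$ right in class III, where the degree of the merged vertex and the number of derivative-matching equations both enter, is the delicate point. Once the codimension of $D(\tilde h)$ inside $D(h\oplus\hat h)$ is pinned down correctly, the remaining steps are routine applications of interlacing and the disjoint-union spectral decomposition.
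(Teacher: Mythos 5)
Your argument for the three interlacing inequalities is correct and is essentially the paper's own proof: you pass to the disjoint union $\Gamma\sqcup\hat\Gamma$, use $\lambda_r(\hat\Gamma)\leq\lambda_{k_0}(\Gamma)$ to conclude $\lambda_{k+r}(\Gamma\sqcup\hat\Gamma)\leq\lambda_k(\Gamma)$ for $k\geq k_0$, and then invoke the gluing interlacing of Theorem \ref{gluing} with rank $p=1$, $2$, $d+1$ in the three classifications. The paper formulates the counting step as $\lambda_k(\Gamma)=\lambda_m(\Gamma\cup\hat\Gamma)$ for some $m\geq k+r$ and iterates over $k=k_0+i$, but this is the same computation.

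The strictness part, however, has a genuine gap. Your first mechanism --- that the hypotheses $\lambda_{k-1}(\Gamma)<\lambda_k(\Gamma)$ and $\lambda_r(\hat\Gamma)<\lambda_{k_0}(\Gamma)$ make the counting inequality strict --- is false: they yield only $(k-1)+r$ eigenvalues of $\Gamma\sqcup\hat\Gamma$ strictly below $\lambda_k(\Gamma)$, and since $\lambda_k(\Gamma)$ itself belongs to the spectrum of the union, one can perfectly well have $\lambda_{k+r}(\Gamma\sqcup\hat\Gamma)=\lambda_k(\Gamma)$ (for instance when $\lambda_{r+1}(\hat\Gamma)\geq\lambda_k(\Gamma)$); the hypotheses only guarantee that the first occurrence $m$ of $\lambda_k(\Gamma)$ in the union's spectrum satisfies $m\geq k+r$. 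Your second mechanism also does not apply as stated: gluing is a restriction of the form domain, with $\tilde h$ and $h\oplus\hat h$ agreeing wherever both are defined, so there is no ``perturbing form'' whose nonvanishing on an eigenvector could be invoked (the strict min-max argument in the proof of Theorem \ref{strength} concerns perturbations sharing a common domain). The strictness must instead be extracted at the gluing step, which is what the paper does: if $\lambda_{m-1}(\tilde\Gamma)=\lambda_m(\Gamma\cup\hat\Gamma)=\lambda_k(\Gamma)$, then (via Lemma \ref{lem}) the eigenspace of $\lambda_m(\Gamma\cup\hat\Gamma)$ would have to supply an eigenfunction of $\tilde\Gamma$; but the relevant eigenfunction equals the $\lambda_k(\Gamma)$-eigenfunction on $\Gamma$, which is nonzero at $v_0$, and vanishes identically on $\hat\Gamma$, so it violates continuity at the glued vertex and lies outside $D(\tilde h)$. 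This contradiction forces $\lambda_{m-1}(\tilde\Gamma)<\lambda_k(\Gamma)$, hence $\lambda_{k+r-1}(\tilde\Gamma)<\lambda_k(\Gamma)$, and analogously for the other two classifications.
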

\begin{proof}
\begin{enumerate}
    \item Assume that for some $r$ and $k_0$, we have, $\lambda_r(\hat \Gamma)\leq \lambda_{k_{0}}(\Gamma)$. Let $\check{\Gamma}$ be a graph considered as a union of two disconnected components, $\Gamma$ and $\hat{\Gamma}$. Since the spectrum of the union of two graphs is equal to the union of their spectrum. Therefore, $\lambda_{k_{0}}(\Gamma)= \lambda_m(\Gamma \cup \hat\Gamma) $ for some $m \geq k_0+r$. Now, attach the vertex $w$ of $\hat\Gamma$ with the vertex $v$ of $\Gamma$ to obtain a new graph $\tilde\Gamma$. By \textcolor{blue}{Theorem} \eqref{gluing}, we obtain the following interlacing inequalities.
$$\lambda_{m-1}(\Gamma \cup \hat\Gamma) \leq \lambda_{m-1}(\tilde\Gamma) \leq \lambda_{m}(\Gamma \cup \hat\Gamma) \leq \lambda_m(\tilde \Gamma),$$
combining the estimate $m \geq k_{0}+r$ with $\lambda_{m-1}(\tilde\Gamma) \leq \lambda_{m}(\Gamma \cup \hat\Gamma)$, we get
$$\lambda_{k_{0}+r-1}(\tilde{\Gamma}) \leq \lambda_{k_0}(\Gamma).$$
Since $\lambda_r(\hat \Gamma)\leq \lambda_{k_{0}+i}(\Gamma)$, for $i=0,1,2,\cdots$, therefore, repeating the same process we obtain
$$\lambda_{k_{0}+i+r-1}(\tilde{\Gamma}) \leq \lambda_{k_0+i}(\Gamma). $$ Thus, 
$$\lambda_{k+r-1}(\tilde{\Gamma}) \leq \lambda_k(\Gamma), \quad k\geq k_0.$$

Let $\lambda_{m-1}(\Gamma \cup \hat{\Gamma}) < \lambda_m(\Gamma \cup \hat{\Gamma})=\lambda_k(\Gamma)$ be the first occurence of $\lambda_k(\Gamma)$ in the spectrum of $\Gamma \cup \hat{\Gamma}$. Since 
$\lambda_{k-1}( \Gamma) < \lambda_k(\Gamma)$ and $\lambda_r(\hat \Gamma) <  \lambda_{k_{0}}(\Gamma)$ we still have $m \geq k+r$. Suppose that $\lambda_{m-1}(\Gamma \cup \hat\Gamma) \leq \lambda_{m-1}(\tilde\Gamma)= \lambda_{m}(\Gamma \cup \hat\Gamma),$ this shows that the eigenfunction corresponding to the eigenvalue $\lambda_m(\Gamma \cup \hat{\Gamma})$ is also an eigenfunction of $\tilde{\Gamma}$. But the eigenspace of $\lambda_m(\Gamma \cup \hat{\Gamma})$ contains a function which vanishes identically on $\hat{\Gamma}$ and is non-zero at $v_0$. But this eigenfunction can not be contained in the eigenspace of $\tilde{\Gamma}$ because it does not belong to its form domain. Thus, $\lambda_{m-1}(\tilde{\Gamma}) < \lambda_m(\Gamma \cup \hat{\Gamma})=\lambda_k(\Gamma)$ and $\lambda_{k+r-1}(\tilde{\Gamma}) \leq \lambda_{m-1}(\tilde{\Gamma}) < \lambda_k(\Gamma)$.

\item Assume that for some $r$ and $k_0$, we have, $\lambda_r(\hat \Gamma)\leq \lambda_{k_0}(\Gamma)$. Let $\check{\Gamma}$ be a graph considered as a union of two disconnected components, $\Gamma$ and $\hat{\Gamma}$. Since the spectrum of the union of two graphs is equal to the union of their spectrum. Therefore, $\lambda_{k_0}(\Gamma)= \lambda_m(\Gamma \cup \hat\Gamma) $ for some $m \geq k_0+r$. Now, attach the vertex $w$ of $\hat\Gamma$ with the vertex $v$ of $\Gamma$ to obtain a new graph $\tilde\Gamma$. By \textcolor{blue}{Theorem}\eqref{gluing}, we obtain the following interlacing inequalities.
$$\lambda_{m-2}(\Gamma \cup \hat\Gamma) \leq \lambda_{m-2}(\tilde\Gamma) \leq \lambda_{m}(\Gamma \cup \hat\Gamma) \leq \lambda_{m}(\tilde \Gamma),$$
combining the estimate $m \geq k+r$ with $k\geq k_0$, and  $\lambda_{m-2}(\tilde\Gamma) \leq \lambda_{m}(\Gamma \cup \hat\Gamma)$, we get
$$\lambda_{k+r-2}(\tilde{\Gamma}) \leq \lambda_k(\Gamma).$$
\item This can be proved with the same argument.
\end{enumerate}
\end{proof}

\subsection{Inserting a graph at a vertex}
Consider a vertex $v_0$, equipped with any of the vertex conditions \eqref{vc1}, \eqref{gvc1}, \eqref{gvc2} or \eqref{vc2} with interaction strengths $\alpha_{v_0}$, of a metric graph $\Gamma$ having set of incident edges $\{e_1, e_2,\cdots,e_n\}$. Let another metric graph $\hat\Gamma$ with a subset of vertices $\{w_1,\cdots,w_k\}\subseteq V(\hat\Gamma)$, $k\leq n$. Let $\tilde\Gamma$ is a new graph formed from $\Gamma$ by first removing the vertex $v_0$ of $\Gamma$ and then attaching the incident edges $e_i$, $i=1,\cdots ,n$ to the vertices $w_j$, $j=1,\cdots,k$ of $\hat\Gamma$ in such a way that the newly formed vertices, say $w_j'$, $j=1,\cdots,m$, are equipped with either of the vertex conditions \eqref{vc1}, \eqref{gvc1}, \eqref{gvc2} or \eqref{vc2} with total interaction strength equal to $\alpha_{v_0}$. We call $w_j'$ the \textit{post insertion descendents} (PIDs) of $v_0$ and say that $\tilde\Gamma$ is formed by inserting $\hat\Gamma$ into $\Gamma$ at $v_0$ .
\begin{figure}[H]
\centering 
\includegraphics[scale=1.0]{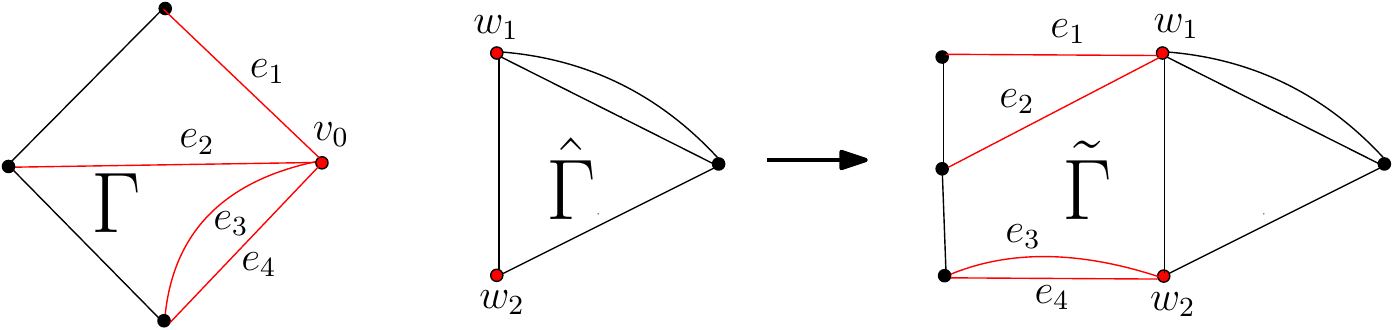}
\caption{Inserting $\hat\Gamma$ into $\Gamma$ at $v_0$, we obtain the graph $\tilde\Gamma.$}
\end{figure}

\begin{theorem}  \label{insertprop}
Let conditions \eqref{vc2} were imposed at all vertices of $\hat\Gamma$ with interaction strengths equal to zero prior to insertion. Then, for all $k$ such that $\lambda_k(\Gamma)\geq 0$ and 

\begin{center}
\begin{tabular}{ |c|c| c|c|} 
 \hline
 $v_0$ & $w'_1$ & $w'_2$ & Classification \\ 
 \hline
 $C1$ & $C1$ & $C1$ & $I$  \\
 \hline
 $C4$ & $C4$ & $C1, \ C2, \  C3,\  C4 $ & $I$  \\
 \hline
$C3$ & $C3$ & $C1, \ C3 $ & $II$ \\
\hline
$C1$ & $C4$ & $C1, \ C2, \  C3,\  C4 $ & $III$  \\
 \hline
 $C2$ & $C4$ & $C1, \ C2, \  C3,\  C4 $ & $III$  \\
 \hline
 $C3$ & $C4$ & $C1, \ C2, \  C3,\  C4 $ & $III$  \\
 \hline 
\end{tabular}
\end{center}
\begin{figure} [H]
    \centering
    \includegraphics[scale=0.7]{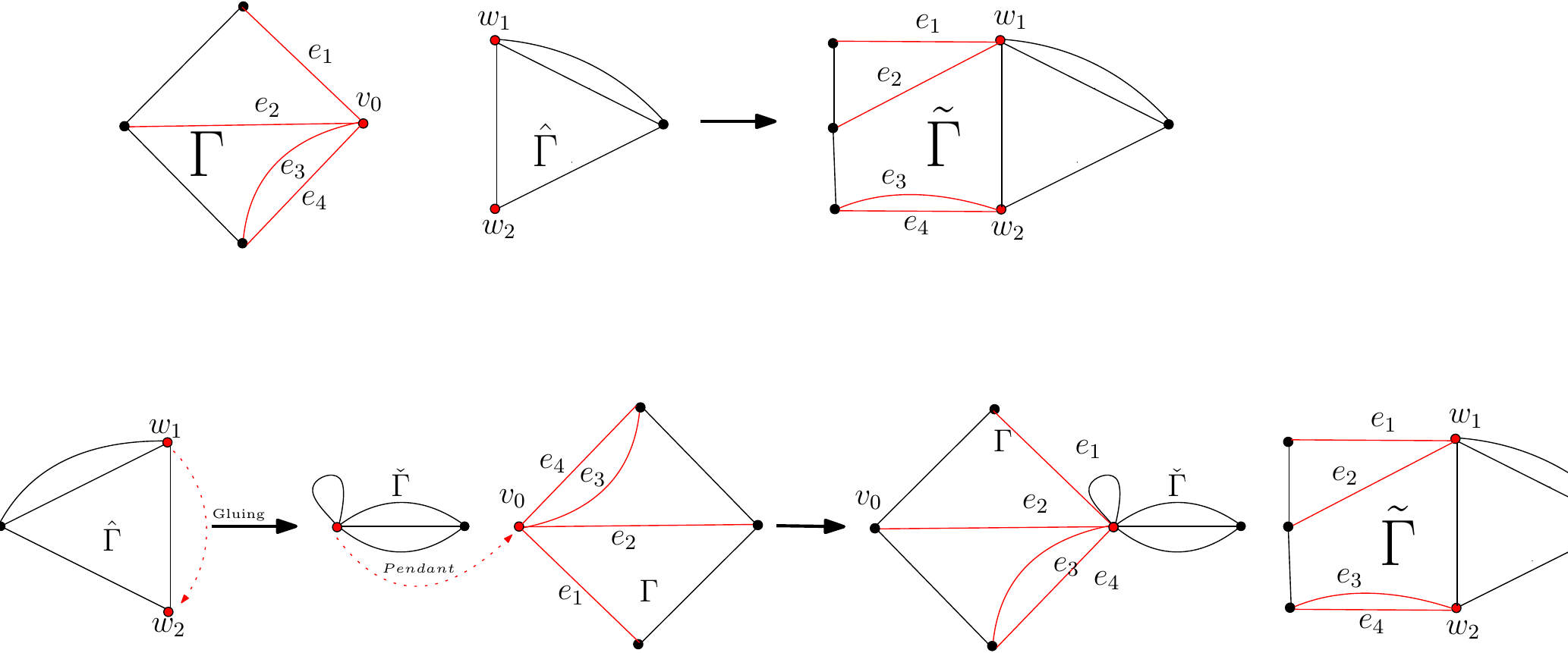}
    \caption{Steps of the proof of Theorem (6.2).}
    \label{fig:my_label}
\end{figure}
\begin{enumerate}
\item  for the cases described in classification $I$ we have
$$
\lambda_k(\tilde \Gamma)\leq \lambda_k(\Gamma).
$$
\item for the cases described in classification $II$ we have
$$
\lambda_k(\tilde \Gamma)\leq \lambda_{k+1}(\Gamma).$$
\item for the cases described in classification $III$ we have
$$
\lambda_k(\tilde \Gamma)\leq \lambda_{k+d}(\Gamma).$$
\end{enumerate}
Moreover, if the eigenfunction in non zero at $v_0$ and $\lambda_k(\Gamma) > \max (0, \lambda_{k-1}(\Gamma))$, then above three inequalities are strict.
\end{theorem}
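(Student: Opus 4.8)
The plan rests on one structural observation about the inserted graph. Because every vertex of $\hat\Gamma$ carries condition \eqref{vc2} with zero strength, its form domain forces $\partial\varphi=0$ at all vertices while the form is merely $\int_{\hat\Gamma}|\varphi''|^2$; its kernel therefore consists of the functions that are affine on each edge with vanishing endpoint slopes, i.e.\ the constants, so $\lambda_1(\hat\Gamma)=0$ and the ground state is constant. This is precisely the situation $r=1$ of the pendant Theorem~\ref{pendant}, and it explains why the insertion shifts $0,1,d$ are exactly one less than the gluing shifts $1,2,d_2+1$ of Theorem~\ref{gluing}: the zero mode of $\hat\Gamma$ always sits at or below $\lambda_k(\Gamma)$ when $\lambda_k(\Gamma)\ge0$ and absorbs one index.

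First I would set up the lift $\iota$ sending $\varphi\in D(h)$ to the function equal to $\varphi$ on the edges inherited from $\Gamma$ and equal to the constant $\varphi(v_0)$ on all of $\hat\Gamma$. This lift preserves energy, $\tilde h[\iota\varphi]=h[\varphi]$, since the constant contributes no $\int|\varphi''|^2$ on $\hat\Gamma$, the interior vertices of $\hat\Gamma$ carry zero strength, and the total strength $\alpha_{v_0}$ is redistributed among the PIDs, each carrying the value $\varphi(v_0)$; it also increases the norm, $\|\iota\varphi\|^2=\|\varphi\|^2+|\hat\Gamma|\,|\varphi(v_0)|^2$. For the rows of the table in classification $I$ the image lies in $D(\tilde h)$: a C1 PID requires only continuity, while for a C4 vertex $v_0$ the slopes on the incident $\Gamma$-edges already vanish, so every derivative condition at the PIDs is met. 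Feeding a minimising $k$-dimensional subspace $X_k$ for $\lambda_k(\Gamma)$ through the injective $\iota$ into the min-max Proposition~\ref{minmax} then gives $\lambda_k(\tilde\Gamma)\le\lambda_k(\Gamma)$: when $h[\varphi]\ge0$ the larger denominator only lowers the Rayleigh quotient, and when $h[\varphi]<0$ the quotient stays negative, hence below $\lambda_k(\Gamma)\ge0$. This is exactly where the hypothesis $\lambda_k(\Gamma)\ge0$ is used.

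For classifications $II$ and $III$ the constant lift generally violates the extra derivative conditions at the PIDs --- a C4 PID demands that the inherited $\Gamma$-slopes vanish, and a C3 PID demands that they match the vanishing $\hat\Gamma$-slopes --- so $\iota$ maps only the codimension-$c$ subspace $W=\{\varphi\in D(h):\iota\varphi\in D(\tilde h)\}$ into $D(\tilde h)$. Intersecting a minimising $(k+c)$-dimensional subspace for $\lambda_{k+c}(\Gamma)$ with $W$ leaves at least $k$ dimensions, and running the same Rayleigh-quotient estimate there yields $\lambda_k(\tilde\Gamma)\le\lambda_{k+c}(\Gamma)$; equivalently, one realises $\tilde\Gamma$ from the disjoint union $\Gamma\cup\hat\Gamma$ and applies the gluing interlacing of Theorem~\ref{gluing} with the absorbed zero mode, as in the proof of Theorem~\ref{pendant}. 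Counting the admissibility constraints gives $c=1$ for the rows of classification $II$; for classification $III$ the degree enters and the gluing decomposition yields the shift $d$, producing $\lambda_k(\tilde\Gamma)\le\lambda_{k+1}(\Gamma)$ and $\lambda_k(\tilde\Gamma)\le\lambda_{k+d}(\Gamma)$ respectively. For strictness I would take the eigenfunction $\varphi$ of $\lambda_k(\Gamma)$: if $\varphi(v_0)\neq0$ and $\lambda_k(\Gamma)>0$ its lift has strictly larger norm but unchanged positive energy, so its quotient is strictly below $\lambda_k(\Gamma)$, and the gap $\lambda_{k-1}(\Gamma)<\lambda_k(\Gamma)$ lets this strict drop be inherited by $\lambda_k(\tilde\Gamma)$, exactly as in Theorems~\ref{strength} and \ref{pendant}.

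The main obstacle is the case-by-case bookkeeping for $II$ and $III$: one must verify, row by row against the table, that $\iota$ really is admissible on a subspace of the claimed codimension and that the degree $d$ entering the $III$-bound is counted consistently with the gluing decomposition, so that the absorbed zero mode of $\hat\Gamma$ shifts the index by exactly one relative to Theorem~\ref{gluing}. Once the admissible subspace $W$ and its codimension are correctly identified in each case, the remainder is a direct application of the min-max Proposition~\ref{minmax} and the interlacing Theorems~\ref{berk1} and \ref{gluing} already at our disposal.
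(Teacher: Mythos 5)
Your proposal is correct, but it takes a genuinely different route from the paper. The paper's proof is a composition of three surgery operations it has already established: first glue the attachment vertices $w_1,w_2$ of $\hat\Gamma$ into a single vertex $w^*$ with condition \eqref{vc2}, so that the resulting graph $\check\Gamma$ has $\lambda_1(\check\Gamma)=0$; then attach $\check\Gamma$ to $\Gamma$ at $v_0$ as a pendant and invoke Theorem~\ref{pendant} with $r=1$ (this is exactly where $\lambda_k(\Gamma)\ge 0$ is consumed), giving $\lambda_k(\Gamma')\le\lambda_k(\Gamma)$ for classification I and the shifted analogues for II and III; finally split the glued vertex $v_0^*$ back into the PIDs $w_1',w_2'$ and invoke Theorem~\ref{splittingtheorem} --- this last step is precisely why \eqref{vc2} must be imposed on $\hat\Gamma$ beforehand, since splitting lowers eigenvalues only in the compatible cases listed there. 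You instead run a single self-contained variational argument: the lift $\iota\varphi$ equal to $\varphi(v_0)$ on $\hat\Gamma$ preserves the form (zero strengths inside $\hat\Gamma$, total strength $\alpha_{v_0}$ at the PIDs), enlarges the norm, and the index shifts $0,1,d$ appear as the codimension of the admissibility constraints the lift must satisfy at the PIDs, fed through Proposition~\ref{minmax}. Both proofs rest on the same structural fact --- the constant zero mode of $\hat\Gamma$ --- but you build it into the trial functions, while the paper routes it through the $r=1$ pendant bookkeeping. Your route makes the role of $\lambda_k(\Gamma)\ge 0$ transparent (a negative-energy trial function keeps a negative quotient after the norm grows), avoids the splitting theorem entirely (the delicate point of the paper's route, since splitting is not in general a converse of gluing), and in some classification III rows yields a shift sharper than $d$ (for instance $v_0$ of type \eqref{gvc2} imposes only the single constraint that the common normalized slope vanish). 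The paper's route is shorter given its toolkit, replaces your row-by-row codimension count by matching each row against the cases of Theorems~\ref{gluing} and~\ref{splittingtheorem}, and obtains strictness by citing the strict form of Theorem~\ref{pendant}; your strictness argument is fully convincing for classification I but, like the paper's, is only sketched for II and III. The bookkeeping you defer is real but routine, and your counts ($c=1$ for II, $c\le d$ for III) are the correct ones.
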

\begin{proof}
\begin{enumerate}
\item   Let $\check \Gamma$ is obtained by gluing the vertices $w_1$ and $w_2$ of $\hat \Gamma$ in such a way that the glued vertex, $w^*$, is also equipped with conditions \eqref{vc2}. As all the vertices of $\check \Gamma$ are equipped with vertex conditions \eqref{vc2} with zero delta interactions , therefore $\lambda_1(\check \Gamma)=0$. Assumption $\lambda_k(\Gamma)\geq 0$ implies $\lambda_k(\Gamma)\geq\lambda_{r=1}(\check \Gamma)$.  We obtain a new graph $\Gamma'$ by gluing vertex $v_0$ of $\Gamma$, which is equipped with conditions \eqref{vc1} with delta interaction strength $\alpha$, and $w^*$ of $\check \Gamma$ such that the new vertex  $v_0^*$ is equipped with \eqref{vc1} with delta interaction strength $\alpha$. \textcolor{blue}{Theorem} \ref{pendant}(1) with $r=1$ gives $\lambda_k(\Gamma') \leq \lambda_k(\Gamma)$. Graph $\tilde \Gamma$ is created by splitting the vertex $v_0^*$ of $\Gamma'$ and restoring the vertices $w'_1$ and $w'_2$ which, now, are equipped with conditions \eqref{vc1} and total delta interaction strength equal to $\alpha$. \textcolor{blue}{Theorem} \eqref{splittingtheorem} implies $\lambda_k(\tilde \Gamma)\leq \lambda_k(\Gamma')$.
The case of strict inequality is obtained by using the strict version of \textcolor{blue}{ Theorem} \eqref{pendant}. 

        \item Let $\check \Gamma$ be a graph obtained by gluing the vertices $w_1$ and $w_2$ of $\hat \Gamma$ in such a way that the glued vertex, $w^*$, is also equipped with conditions \eqref{vc2}. As all the vertices of $\check \Gamma$ are equipped with vertex conditions \eqref{vc2} with zero delta interactions , therefore $\lambda_1(\check \Gamma)=0$. Assumption $\lambda_k(\Gamma)\geq 0$ implies $\lambda_k(\Gamma)\geq\lambda_{r=1}(\check \Gamma)$.  We obtain a new graph $\Gamma'$ by gluing vertex $v_0$, which is equipped with conditions \eqref{gvc2} with delta interaction strength $\alpha$, of $\Gamma$ and $w^*$ of $\check \Gamma$ in such a way that new vertex  $v_0^*$ is equipped with \eqref{gvc2} and delta interaction strength $\alpha$. \textcolor{blue}{Theorem} \ref{pendant}(2) with $r=1$ gives $\lambda_{k-1}(\Gamma')\leq \lambda_k(\Gamma)$. Graph $\tilde \Gamma$ is created by splitting the vertex $v_0^*$ of $\Gamma'$ and restoring the vertices $w'_1$ and $w'_2$ which, now, are equipped with conditions \eqref{gvc2} and total delta interaction strength equal to $\alpha$. \textcolor{blue}{Theorem} \eqref{splittingtheorem} implies $\lambda_{k-1}(\tilde \Gamma)\leq \lambda_{k-1}(\Gamma')$.
        \item This can be proved similarly.
        \end{enumerate}
\end{proof}                                                          
\begin{corollary}
 Suppose there exist two vertices $v$ and $w$ of $\Gamma$ and first $n$ eigenfunctions $\varphi_1,\varphi_2,\cdots,\varphi_n$ such that  
\begin{equation}\label{conti}
\varphi_1(v)=\varphi_1(w), \cdots, \varphi_n(v)=\varphi_n(w).
\end{equation}
Let both vertices are equipped with either vertex conditions \eqref{vc1} or \eqref{vc2} (of the same type on both $v$ and $w$).  If $\lambda_k(\Gamma)\geq 0$ for all $k=1,\cdots,n$, then the graph $\tilde\Gamma$ obtained by inserting an edge of arbitrary length between $v$ and $w$, with preserved type and interaction strength of vertex conditions at both vertices, satisfies
$$
\lambda_k(\tilde \Gamma)\leq \lambda_k(\Gamma), \quad k=1,\cdots,n.
$$
If $v$ and $w$ are both equipped with vertex conditions \eqref{gvc2} then we require that the first $n$ eigenfunctions, in addition to \eqref{conti}, also satisfy
$$
\frac{\partial\varphi_1(x_i)}{\sigma_{x_i}^{v}}=\frac{\partial\varphi_1(x_j)}{\sigma_{x_j}^{w}},\cdots,\frac{\partial\varphi_n(x_i)}{\sigma_{x_i}^{v}}=\frac{\partial\varphi_n(x_j)}{\sigma_{x_j}^{w}}, \quad x_i\in v,x_j\in w.
$$
In that case, if $\lambda_k(\Gamma)\geq 0$ for all $k=1,\cdots,n$, then the graph $\tilde\Gamma$ obtained by inserting an edge of arbitrary length between $v$ and $w$, with preserved type and interaction strength of vertex conditions at both vertices, satisfies
$$
\lambda_k(\tilde \Gamma)\leq \lambda_{k+1}(\Gamma), \quad k=1,\cdots,n.
$$
\end{corollary}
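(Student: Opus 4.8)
The plan is to realize the insertion of a single edge between $v$ and $w$ as the composition of two surgery moves that have already been analyzed: a gluing, followed by an edge insertion at the resulting vertex. Concretely, I would first glue $v$ and $w$ into a single vertex $v_0$, producing an intermediate graph $\Gamma'$, and then recover $\tilde\Gamma$ from $\Gamma'$ by inserting the new edge at $v_0$ in the sense of Theorem~\ref{insertprop}, taking the inserted graph $\hat\Gamma$ to be a single edge carrying condition \eqref{vc2} with zero strength at each of its two endpoints $w_1,w_2$. Splitting $v_0$ into the two post-insertion descendants while threading this edge through returns exactly the configuration of $\tilde\Gamma$: the edges originally incident to $v$ pass to one PID $w_1'$ and those incident to $w$ to the other PID $w_2'$, with the prescribed conditions and the strengths $\alpha_v,\alpha_w$ preserved, so that the total strength is unchanged.

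For the gluing step I would invoke Corollary~\ref{joiningg}. Because the first $n$ eigenfunctions satisfy the level-point condition \eqref{conti} (and, in the \eqref{gvc2} case, additionally the normalized first-derivative matching hypothesis), the relevant part of Corollary~\ref{joiningg} applies and guarantees that the first $n$ eigenvalues are unchanged by gluing:
$$\lambda_k(\Gamma')=\lambda_k(\Gamma),\qquad k=1,\dots,n.$$
Here the type bookkeeping is essential: gluing \eqref{vc1} to \eqref{vc1} and \eqref{vc2} to \eqref{vc2} both fall under classification~I, while gluing \eqref{gvc2} to \eqref{gvc2} falls under classification~II, which is precisely why the extra derivative hypothesis is imposed in that case. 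In particular $\lambda_k(\Gamma')=\lambda_k(\Gamma)\ge 0$, so the non-negativity assumption required by Theorem~\ref{insertprop} is inherited by $\Gamma'$.

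For the insertion step I would read off the matching row of the table in Theorem~\ref{insertprop}. With $v_0$ and both PIDs carrying \eqref{vc1}, the configuration is classification~I and yields $\lambda_k(\tilde\Gamma)\le\lambda_k(\Gamma')$; with all three carrying \eqref{vc2} it is again classification~I with the same bound; and with all three carrying \eqref{gvc2} it is classification~II, giving $\lambda_k(\tilde\Gamma)\le\lambda_{k+1}(\Gamma')$. Chaining with the gluing identity then yields
$$\lambda_k(\tilde\Gamma)\le\lambda_k(\Gamma')=\lambda_k(\Gamma)$$
in the first two cases and $\lambda_k(\tilde\Gamma)\le\lambda_{k+1}(\Gamma')=\lambda_{k+1}(\Gamma)$ in the \eqref{gvc2} case, which are exactly the asserted inequalities.

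The routine parts are checking that the two moves genuinely compose to the edge insertion and that each PID inherits the correct condition and strength. The main obstacle I expect lies in the index bookkeeping for the \eqref{gvc2} case: the final chain uses the gluing identity at level $k+1$, so one must ensure the eigenvalue preservation from Corollary~\ref{joiningg} reaches across the full index range. For $k\le n-1$ this is automatic, but at the top index $k=n$ the step $\lambda_n(\tilde\Gamma)\le\lambda_{n+1}(\Gamma')$ cannot be closed to $\lambda_{n+1}(\Gamma)$ by the general inequality $\lambda_{n+1}(\Gamma)\le\lambda_{n+1}(\Gamma')$ alone; it requires either that the matching conditions hold for one further eigenfunction (so that $\lambda_{n+1}(\Gamma')=\lambda_{n+1}(\Gamma)$) or an independent argument at that index. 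Keeping the classification consistent through both the gluing and the insertion, so that the same rank of perturbation governs each side of the chain, is where the care is needed.
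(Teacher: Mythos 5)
Your proposal is essentially the paper's own proof: the paper likewise glues $v$ and $w$ into a single vertex $v^*$ to form $\Gamma'$, invokes Corollary \ref{joiningg} (part (1) for \eqref{vc1}/\eqref{vc2}, part (3) for \eqref{gvc2} with the normalized-derivative matching) to get $\lambda_k(\Gamma')=\lambda_k(\Gamma)$ for $k=1,\dots,n$, and then inserts a one-edge graph with zero-strength \eqref{vc2} endpoints at $v^*$ via Theorem \ref{insertprop} (part (1), respectively part (2)), chaining the two bounds exactly as you do. The index caveat you raise at $k=n$ in the \eqref{gvc2} case is a genuine gap, and it is equally present in the paper, whose treatment of that case consists only of the remark that the proof is ``identical'': the insertion step yields $\lambda_n(\tilde\Gamma)\le\lambda_{n+1}(\Gamma')$, but gluing moves eigenvalues upward, $\lambda_{n+1}(\Gamma)\le\lambda_{n+1}(\Gamma')\le\lambda_{n+3}(\Gamma)$ by Theorem \ref{gluing}(2), so the chain closes to $\lambda_{k+1}(\Gamma)$ only for $k\le n-1$, and at $k=n$ one needs the hypotheses \eqref{conti} and the derivative matching to hold for $n+1$ eigenfunctions, or an independent argument, precisely as you say.
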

\begin{proof}
 Let both vertices $v$ and $w$ are equipped with vertex conditions \eqref{vc1}. Let $\Gamma'$ be a graph obtained from $\Gamma$ by gluing the vertices $v$ and $w$ in accordance with definition of joining of vertex; let $v^*$ be glued vertex. Then by \textcolor{blue}{Corollary} \eqref{joiningg}(1), 
    $$\lambda_k({\Gamma'}) = \lambda_k(\Gamma), \quad k=1,2,\cdots n.$$ 
    Now, insert a one edge graph $\hat\Gamma$ satisfying the conditions stipulated in \textcolor{blue}{Theorem} \eqref{insertprop}(1)  at vertex $v^*$; this graph coincides with $\tilde\Gamma$, and we get, $\lambda_k(\tilde \Gamma) \leq \lambda_k(\Gamma')=\lambda_k(\Gamma), \quad k=1,2,\cdots n.$ \\
    When $v$ and $w$ are both equipped with vertex conditions \eqref{gvc2}, the proof is identical to the above proof; however, we use corollary \eqref{joining}(2) and  \textcolor{blue}{Theorem}\eqref{insertprop}(2).
\end{proof}
{
The reason for imposing condition \eqref{vc2} on vertices $w_1$ and $w_2$ are twofold. First, during the proof, at one time, we split the vertex $v_0^*$, and in general, we know that splitting is not behaving exactly as a converse of gluing. However, the two are compatible when conditions \eqref{vc2} are specified. Second, at every step of the surgery of a graph, the proof does not involve the choices for more conditions. To understand the second reason, we assume that $v_0$ is equipped with condition \eqref{vc1}, and vertices $w_1$ and $w_2$ are endowed with conditions \eqref{vc1} and \eqref{gvc1}, respectively. We have two choices for $w^*$ either it is equipped with condition \eqref{vc1} or \eqref{gvc1}. If the vertex $w^*$ is equipped with \eqref{vc1}, then the vertex $v^*_0$ and the vertices $w_1$ and $w_2$ after the insertions are also equipped with condition \eqref{vc1}, and we have the following inequality.
$$\lambda_{k+r-1}(\tilde \Gamma)\leq \lambda_{k}(\Gamma).$$
When $w^*$ is equipped with condition \eqref{gvc1}, now, we have two choices for $v^*_0$. It is either equipped with condition \eqref{vc1} or \eqref{gvc1}. In first choice, $w_1$ and $w_2$ after the insertion are also endowed with condition \eqref{vc1},  and we obtain the same inequality as above. In second choice, when $v^*_0$ is equipped with \eqref{gvc1}, the splitting can not be performed, and we cannot obtain the desired inequality.}

\begin{theorem}
Suppose $\tilde\Gamma$ is formed by adding an edge of length $\ell$ between two vertices $v$ and $w$ of $\Gamma$. Type and interaction strength of vertex conditions at both vertices $v$ and $w$ are preserved after adding an extra edge. Let $\lambda_{k_0}(\Gamma)\geq (\pi/\ell)^4$ for some integer $k_0$.
If
\begin{enumerate}
\item vertices $v$ and $w$ are equipped with conditions \eqref{vc1} or
\item vertices $v$ and $w$ are equipped with conditions \eqref{gvc1} or
\item vertices $v$ and $w$ are equipped with conditions \eqref{vc2} or
\item vertices $v$ and $w$ are equipped with conditions \eqref{vc1} and \eqref{gvc1}, respectively, or
\item vertices $v$ and $w$ are equipped with conditions \eqref{vc1} and \eqref{vc2}, respectively, or
\item vertices $v$ and $w$ are equipped with conditions \eqref{gvc1} and \eqref{vc2}, respectively, or
\item vertices $v$ and $w$ are equipped with conditions \eqref{vc2} and \eqref{vc2}, respectively, then
$$
\lambda_k(\tilde \Gamma)\leq \lambda_k(\Gamma),\quad k\geq k_0.
$$

If
\item vertices $v$ and $w$ are equipped with conditions \eqref{vc1} and \eqref{gvc2}, respectively, or
\item vertices $v$ and $w$ are equipped with conditions \eqref{gvc1} and \eqref{gvc2}, respectively, or
\item vertices $v$ and $w$ are equipped with conditions \eqref{gvc2} and \eqref{vc2}, respectively, then
$$
\lambda_k(\tilde \Gamma)\leq \lambda_{k+1}(\Gamma),\quad k+1\geq k_0.
$$
\item Finally, if both vertices $v$ and $w$ of $\Gamma$ are equipped with conditions \eqref{gvc2} then
$$
\lambda_k(\tilde \Gamma)\leq \lambda_{k+2}(\Gamma),\quad k+2\geq k_0.
$$
\end{enumerate}
\end{theorem}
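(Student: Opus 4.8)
The plan is to realize $\tilde\Gamma$ as the result of attaching both endpoints of a single model edge of length $\ell$ to the vertices $v$ and $w$, and then to chain the interlacing inequalities of Theorem \ref{gluing} with the disjoint-union counting used in the proof of Theorem \ref{pendant}. The number $(\pi/\ell)^4$ appearing in the hypothesis is the clue: it is the first nonzero eigenvalue of that model edge.

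First I would fix the model graph $\hat\Gamma$: the single edge $[0,\ell]$ carrying $\frac{d^4}{dx^4}$ with conditions \eqref{vc2} of zero strength at both endpoints. At a vertex of degree one these coincide with \eqref{gvc1}, so each endpoint may be read as either type, and the explicit conditions are $\varphi'=\varphi'''=0$ at each end. Solving $\varphi^{(4)}=\mu^4\varphi$ under these conditions forces the sine/hyperbolic parts to vanish, leaving the modes $\varphi(x)=\cos(n\pi x/\ell)$, $n=0,1,2,\dots$, so that $\spec(\hat\Gamma)=\{(n\pi/\ell)^4\}_{n\ge 0}$. In particular $\lambda_1(\hat\Gamma)=0$ and $\lambda_2(\hat\Gamma)=(\pi/\ell)^4$, so the hypothesis $\lambda_{k_0}(\Gamma)\ge(\pi/\ell)^4$ is exactly $\lambda_2(\hat\Gamma)\le\lambda_{k_0}(\Gamma)$, i.e. the assumption of Theorem \ref{pendant} with $r=2$.

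Next I would perform the surgery in two gluings starting from the disjoint union $\check\Gamma=\Gamma\sqcup\hat\Gamma$, whose spectrum is the union of the two spectra. Gluing the endpoint $w_1$ of $\hat\Gamma$ to $v$ produces an intermediate graph $\check\Gamma_1$, and then gluing $w_2$ to $w$ produces exactly $\tilde\Gamma$, the merged vertices retaining the conditions of $v$ and $w$ respectively (the endpoint strengths are zero, so strengths are preserved). Each step is an instance of Theorem \ref{gluing}, giving $\lambda_j(\check\Gamma_1)\le\lambda_{j+c_1}(\check\Gamma)$ and $\lambda_j(\tilde\Gamma)\le\lambda_{j+c_2}(\check\Gamma_1)$, where $c=1$ in classification I and $c=2$ in classification II; chaining these yields $\lambda_j(\tilde\Gamma)\le\lambda_{j+c_1+c_2}(\check\Gamma)$. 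Because $\lambda_2(\hat\Gamma)\le\lambda_{k_0}(\Gamma)\le\lambda_{k_0+i}(\Gamma)$, the value $\lambda_{k_0+i}(\Gamma)$ occupies some position $m_i\ge k_0+i+2$ in $\check\Gamma$ (the two lowest modes of $\hat\Gamma$ precede it), exactly as in the counting step of Theorem \ref{pendant}. Taking $j=m_i-c_1-c_2$ and using that eigenvalues increase in the index gives $\lambda_{(k_0+i)-(c_1+c_2-2)}(\tilde\Gamma)\le\lambda_{k_0+i}(\Gamma)$; writing $s:=c_1+c_2-2$ and $k=k_0+i-s$, this is $\lambda_k(\tilde\Gamma)\le\lambda_{k+s}(\Gamma)$ for every $k$ with $k+s\ge k_0$.

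It then remains to read off $s$ from the vertex data, and this is where the bulk of the case-checking lives. The endpoints of $\hat\Gamma$ carry \eqref{vc2}$\,\equiv\,$\eqref{gvc1}, so whenever $v$ (or $w$) carries \eqref{vc1}, \eqref{gvc1} or \eqref{vc2} the corresponding merge can be arranged to preserve that condition within classification I of Theorem \ref{gluing}: reading the endpoint as \eqref{gvc1} gives \eqref{vc1} with \eqref{gvc1}$\to$\eqref{vc1} and \eqref{gvc1} with \eqref{gvc1}$\to$\eqref{gvc1}, and reading it as \eqref{vc2} gives \eqref{vc2} with \eqref{vc2}$\to$\eqref{vc2}; in all of these $c=1$. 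When $v$ (or $w$) carries \eqref{gvc2}, the only compatible merge preserving \eqref{gvc2} is \eqref{gvc2} with \eqref{vc2}, which is classification II and gives $c=2$. Hence $s=c_1+c_2-2$ equals the number of the two vertices $v,w$ that carry \eqref{gvc2}: $s=0$ in cases (1)–(7), $s=1$ in cases (8)–(10), and $s=2$ in case (11), matching the three displayed conclusions and their index restrictions. The main obstacle is precisely this classification bookkeeping — verifying for each listed combination that the merged vertex can be assigned the prescribed condition at $v$ (resp. $w$) via the degree-one equivalences \eqref{vc2}$\,\equiv\,$\eqref{gvc1} and \eqref{vc1}$\,\equiv\,$\eqref{gvc2}, and that one is never forced into classification III (whose rank, and hence shift, would involve the vertex degree). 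Once the interpretations above are fixed, the spectral input $\lambda_2(\hat\Gamma)=(\pi/\ell)^4$ and the index arithmetic are routine.
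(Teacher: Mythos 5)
Your proposal is correct and takes essentially the same route as the paper's own proof: the paper likewise uses the single edge of length $\ell$ with zero-strength \eqref{vc2} conditions at both ends (so $\lambda_1(\hat\Gamma)=0$, $\lambda_2(\hat\Gamma)=(\pi/\ell)^4$), attaches one endpoint to $v$ via Theorem \ref{pendant} with $r=2$ (which is precisely your disjoint-union counting step), then glues the other endpoint to $w$ via Theorem \ref{gluing}, with the index shift $0$, $1$, or $2$ determined by whether zero, one, or both of $v,w$ carry \eqref{gvc2} and hence force classification II gluings. The only difference is presentational: you inline the pendant-theorem counting and organize the case bookkeeping by counting \eqref{gvc2} vertices, whereas the paper cites Theorem \ref{pendant} directly and spells out representative cases.
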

\begin{figure}[H]
    \centering
    \includegraphics[scale=0.88]{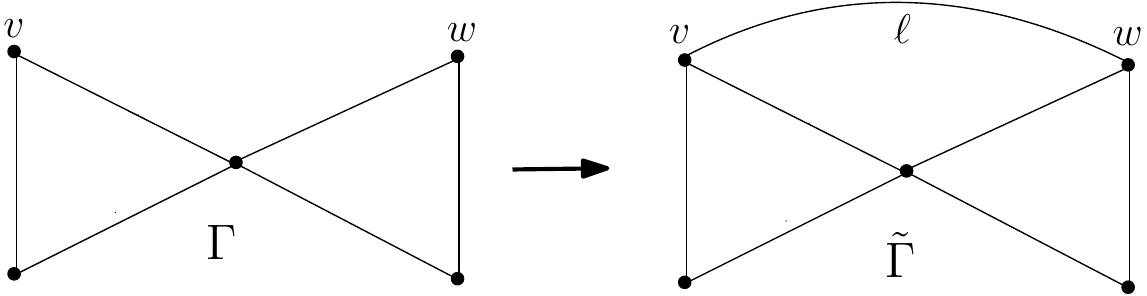}
    \caption{Inserting an edge of length $\ell$ between vertices $v$ and $w$ of $\Gamma$ (Left), we obtain the graph $\tilde\Gamma$ (right).}
\end{figure}
\begin{proof}
\begin{enumerate}
    \item Suppose the vertices $v$ and $w$ of $\Gamma$ are equipped with conditions $\eqref{vc1}$ and consider the operator $\hat{H}=\frac{d^4}{dx^4}$ on a graph $\hat{\Gamma}$ of single edge of length $\ell$ satisfying conditions \eqref{vc2} with $\alpha=0$ on both ends. The lowest eigenvalue $\lambda_1(\hat{\Gamma})$ is equal to zero and $\lambda_2(\hat \Gamma)=\left(\pi/\ell\right)^4$. By gluing one end point of $\hat\Gamma$ with vertex $v$ of $\Gamma$ such that the new vertex also satisfy conditions \eqref{vc1} with the same delta interaction strength as $v$. We denote this new graph by $\check{\Gamma}$. \textcolor{blue}{Theorem} \eqref{pendant} with $r=2$ implies $\lambda_{k+1}(\check \Gamma)\leq \lambda_k(\Gamma)$, for all $k\geq k_0$. Now we glue $w$ and the other endpoint of $\hat \Gamma$ to obtain $\tilde \Gamma$. Part (1) of \textcolor{blue}{Theorem} \eqref{gluing} implies $\lambda_k(\tilde \Gamma)\leq\lambda_{k+1}(\check \Gamma)$ for all $k$. This proves the claim. Parts (2-6) of this theorem can be proved using the same argument.
    \\
    \item[8.] Suppose the vertices $v$ and $w$ of $\Gamma$ are equipped with conditions \eqref{vc1} and \eqref{gvc2}, respectively. As before, we glue one end point of $\hat\Gamma$ and vertex $v$ of $\Gamma$ to obtain $\check\Gamma$. \textcolor{blue}{Theorem} \eqref{pendant} implies $\lambda_{k+1}(\check \Gamma)\leq \lambda_k(\Gamma)$, for all $k\geq k_0$. This can be rewritten as $\lambda_{k+2}(\check \Gamma)\leq \lambda_{k+1}(\Gamma)$, for all $k+1\geq k_0$. Gluing vertex $w$ and the other end point of $\hat\Gamma$ and applying part (2) of \textcolor{blue}{Theorem} \eqref{gluing} we get $\lambda_k(\tilde \Gamma)\leq \lambda_{k+2}(\check \Gamma)$ for all $k$. Proofs of parts $(9)$ and $(10)$ are similar.
    \\
    \item[11.] Suppose both vertices $v$ and $w$ of $\Gamma$ are equipped with conditions \eqref{gvc2} and let us glue $v$ and one end point of $\hat \Gamma$. \textcolor{blue}{Theorem} \eqref{pendant} yields $\lambda_{k+2}(\check \Gamma)\leq \lambda_{k+2}(\Gamma)$, for all $k+2\geq k_0$. Finally, gluing vertex $w$ and the other end point of $\hat\Gamma$ and applying part (2) of \textcolor{blue}{Theorem} \eqref{gluing} we get $\lambda_k(\tilde \Gamma)\leq \lambda_{k+2}(\check \Gamma)$ for all $k$. 
\end{enumerate}
\end{proof}
\section{Bounds on the lowest non-zero eigenvalue}
In this section, we will present some upper and lower bounds on the non-zero lowest eigenvalue. First, we obtained a few upper bounds using trial functions from the domain of the quadratic form. And then, a bound is also provided using the Eulerian cycle approach for a general graph. And for a star graph, a simple lower estimate is expressed involving the maximal length of an edge.
\subsection{Upper bounds}
{ Since $\varphi \equiv 1$ is in the domain of the quadratic form of a graph whose vertices are
equipped with vertex conditions either \eqref{vc1}, \eqref{gvc1}, \eqref{gvc2}, or \eqref{vc2}, therefore, we have a simple upper estimate on the first eigenvalue given by the following inequality.
\begin{equation} \label{ub_1}
    \lambda_1(\Gamma) \leq \frac{\sum\limits_{m=1}^{|V|} \alpha_m}{\mathcal{L}(\Gamma)} .
\end{equation}
Similarly, the function $\varphi(x)=\cos \left(\frac{2 \pi x}{\ell_i} \right)$ on each edge $e_i=[0,\ell_i]$ is also in the domain, so we can use this function to get another upper bound on the lowest eigenvalue.
\begin{equation*}
    \lambda_1(\Gamma) \leq \frac{2}{\mathcal{L}(\Gamma)} \left( \sum\limits_{i=1}^{|E|}  \frac{8 \pi^4}{\ell^3_i}+\sum\limits_{m=1}^{|V|} \alpha_m \right).
\end{equation*}
Since $\varphi(x)$ does not satisfy the vertex condition at $v_m$ unless $\alpha_{m}=0$. Therefore, it can not be an eigenfunction, so we must have strict inequality.
\begin{equation} \label{up2}
    \lambda_1(\Gamma) < \frac{2}{\mathcal{L}(\Gamma)} \left( \sum\limits_{i=1}^{|E|}  \frac{8 \pi^4}{\ell^3_i}+\sum\limits_{m=1}^{|V|} \alpha_m \right).
\end{equation}
Clearly, it can be seen that the upper bound in \eqref{ub_1} is a better estimate than \eqref{up2}. We obtain equality in \eqref{ub_1} if and only if all interaction strengths are zero. 
The function $\varphi(x)\equiv 1 $ is an eigenfunction for any graph $\Gamma$ whose vertices are either equipped with conditions \eqref{vc1}, \eqref{gvc1}, \eqref{gvc2}, or \eqref{vc2} with interaction strengths $\alpha_m=0, m=1,2,\cdots,|V|$, and corresponding eigenvalue is zero. Assume that $\varphi(x)$ is an eigenfunction corresponding to an eigenvalue $\lambda=0$. Then,
\begin{equation*}
    \varphi^{(4)}(x)=0 \implies \varphi(x)=a_i+b_i x+c_i x^2+d_i x^3, \quad x \in e_i,
\end{equation*}
along each edge $e_i$. Since this function also minimizes Rayleigh quotient, this implies that  $\int_\Gamma |\varphi''(x)|^2dx=0$, and thus $\varphi''(x)=0$ implies $c_i=d_i=0$. The eigenfunction is reduced to $\varphi(x)=a_i+b_i x, \ x \in e_i$. When the conditions \eqref{vc2} are imposed at each vertex of $\Gamma$ then the function $\varphi$ satisfies second equation in \eqref{vc2} if and only if $b_i=0$, and the first equation in \eqref{vc2} implies that all $a_i$ are equal and function is constant on whole graph. Furthermore, if a graph is connected, then the eigenfunction $\varphi(x)=1$ is unique corresponding to eigenvalue $\lambda=0$ for graph $\Gamma$ with vertices equipped with either \eqref{gvc1} or \eqref{vc2}. When either \eqref{vc1}, or \eqref{gvc2} conditions with $\alpha_m=0$ are imposed at each vertex of $\Gamma$, we believe that this eigenfunction is not unique and depends on geometry and parametrization of edges.
For example, if we consider a special graph (tree or star graph) such that length of each edge is same and we parameterise each edge in a appropriate direction such that $\varphi$ is continuous on $\Gamma$, and the vertices are equipped with condition \eqref{gvc2} with $\sigma^{v_m}_{x_i}=\sigma^{v_m}_{x_j}$, then the function $\varphi(x)=a+bx$ is an eigenfunction corresponding to eigenvalue $\lambda=0$.
\\

By using the variational principle, We present upper bounds for the spectral gap. From now onward, we assume that the interaction strengths $\alpha_m$ for all conditions at each vertex are zero unless specified. The following upper bound for the conditions \eqref{vc2} with zero potential at all vertices was given in \cite{KM}, but the same result can be obtained for the other three vertex conditions.}  
\begin{proposition}
Let $\Gamma$ be a finite compact connected metric graph equipped with vertex condition \eqref{vc1}, \eqref{gvc1}, (\ref{gvc2}), or \eqref{vc2} at each vertex and let $\alpha_m = 0$ for all $m$. Let $\mathcal{L}(\Gamma)$ be total length of $\Gamma$ and $\ell_i$ be the length of each edge $e_i=[0,\ell_i]$. Then
\begin{equation*}
    \lambda_2(\Gamma) \leq \frac{2}{\mathcal{L}(\Gamma)} \left( \sum\limits_{i=1}^{|E|}  \frac{8 \pi^4}{\ell^3_i}\right).
\end{equation*}
If $\Gamma$ is a bipartite graph, then
\begin{equation*}
    \lambda_2(\Gamma) \leq \left(\frac{\pi}{\mathcal{L}(\Gamma)} \right)^4 \sum\limits_{i=1}^{|E|} \left (\frac{\mathcal{L}(\Gamma)}{\ell_i} \right)^3.
\end{equation*}
\end{proposition}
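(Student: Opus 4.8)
The plan is to bound $\lambda_2(\Gamma)$ through the min--max principle of Proposition~\ref{minmax}, specifically the characterisation \eqref{minmax1}, using a well-chosen two-dimensional trial space. Since all strengths vanish, the constant function $\mathbf 1$ lies in every one of the four form domains and satisfies $h[\mathbf 1]=0$, so it is a ground state with $\lambda_1(\Gamma)=0$. I would therefore take $X_2=\operatorname{span}\{\mathbf 1,\varphi\}$ for a suitable $\varphi$ and estimate $M(h,X_2)$. The key simplification is that, because $\mathbf 1''\equiv 0$, one has $h[a\mathbf 1+b\varphi]=|b|^2\,h[\varphi]$, and if $\varphi$ is additionally chosen orthogonal to $\mathbf 1$ in $L^2(\Gamma)$ then $\|a\mathbf 1+b\varphi\|^2=|a|^2\|\mathbf 1\|^2+|b|^2\|\varphi\|^2$. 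Consequently the Rayleigh quotient over $X_2$ is maximised at $a=0$, yielding the clean identity $M(h,X_2)=h[\varphi]/\|\varphi\|^2$ and hence $\lambda_2(\Gamma)\le h[\varphi]/\|\varphi\|^2$.

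For the first (general) bound I would reuse the trial function $\varphi(x)=\cos(2\pi x/\ell_i)$ on each edge $e_i=[0,\ell_i]$ already appearing in \eqref{up2}. Three facts must be checked, all routine: (i) $\varphi$ equals $1$ at every endpoint, so it is continuous across vertices; (ii) $\varphi'$ vanishes at both endpoints of every edge, so, together with continuity, $\varphi$ lies in each of the four form domains --- the derivative conditions of $D(h^2),D(h^3),D(h^4)$, namely $\sum\overline{\sigma}\,\partial\varphi=0$, $\partial\varphi(x_i)/\sigma_{x_i}=\partial\varphi(x_j)/\sigma_{x_j}$, and $\partial\varphi=0$, are all satisfied when every $\partial\varphi$ is zero, while $D(h^1)$ requires only continuity; and (iii) $\int_0^{\ell_i}\cos(2\pi x/\ell_i)\,dx=0$, so $\varphi\perp\mathbf 1$. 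A direct computation gives $h[\varphi]=\sum_i 8\pi^4/\ell_i^3$ and $\|\varphi\|^2=\mathcal L(\Gamma)/2$, and substituting into the identity above yields the stated bound $\lambda_2(\Gamma)\le \tfrac{2}{\mathcal L(\Gamma)}\sum_i 8\pi^4/\ell_i^3$.

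For the bipartite bound I would exploit a $2$-colouring $V=V_+\cup V_-$ to orient each edge from its $V_+$ endpoint ($x=0$) to its $V_-$ endpoint ($x=\ell_i$), and take the half-period cosine $\varphi(x)=\cos(\pi x/\ell_i)$. Then $\varphi=+1$ at all $V_+$ vertices and $\varphi=-1$ at all $V_-$ vertices, and it is precisely the bipartite structure that makes these endpoint values consistent across every vertex, so $\varphi$ is continuous on $\Gamma$. As before $\varphi'$ vanishes at both endpoints (placing $\varphi$ in all four domains) and $\int_0^{\ell_i}\cos(\pi x/\ell_i)\,dx=0$ (so $\varphi\perp\mathbf 1$). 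The computation now gives $h[\varphi]=\sum_i \pi^4/(2\ell_i^3)$ with the same $\|\varphi\|^2=\mathcal L(\Gamma)/2$, and the identity produces $\lambda_2(\Gamma)\le \tfrac{\pi^4}{\mathcal L(\Gamma)}\sum_i \ell_i^{-3}$, which is exactly $(\pi/\mathcal L(\Gamma))^4\sum_i(\mathcal L(\Gamma)/\ell_i)^3$ and a factor of $16$ sharper than the general estimate.

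The only genuine subtlety, and hence the point deserving care, is the continuity of the bipartite trial function: a half-period cosine flips sign once across each edge, so assigning it a single global value at each vertex requires a consistent two-colouring, which exists exactly when $\Gamma$ is bipartite. In a graph containing an odd cycle this gluing fails, which is why one is forced back to the full-period cosine and the weaker constant $8\pi^4$ in the general case. Everything else --- the derivative checks placing $\varphi$ in all four form domains, the orthogonality integrals, and the elementary $\cos^2$ integrals --- is mechanical.
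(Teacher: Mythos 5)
Your proposal is correct and follows essentially the same route as the paper: the same trial functions $\cos(2\pi x/\ell_i)$ (general case) and $\cos(\pi x/\ell_i)$ (bipartite case, glued consistently via the two-colouring), orthogonality to the constant function, and the min--max principle applied to the span of the constant and the trial function. You merely spell out the details the paper leaves implicit --- the vanishing first derivatives placing the trial functions in all four form domains, the clean identity $M(h,X_2)=h[\varphi]/\|\varphi\|^2$, and the explicit $\cos^2$ integrals --- so there is nothing to correct.
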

\begin{proof}
Since $\varphi(x)=\cos(\frac{2 \pi x}{\ell_i}), \ x \in  e_i=[0,\ell_i]$ belongs to the domain of quadratic form; moreover, this function is also orthogonal to constant function. So, using it as test function and with the help of a min-max description of eigenvalues, we arrive at the required result.\\
For second inequality we consider a test function defined by $\varphi(x)=\cos(\frac{ \pi x}{\ell_i})$ on $e_i$, parameterised just like that $\varphi$ is continuous on $\Gamma$, using the same technique, we get required inequality. 
\end{proof}
If the graph $\Gamma$ has certain edges having small lengths, then the bounds from above proposition are reasonably bad. In that case, one can find better bounds by comparing with eigenvalues of other beam operators, as introduced in \textcolor{red}{\cite{KM}}.

\begin{corollary}
If $\Gamma$ is an equilateral graph and length of each edge is $\ell$, then 
\begin{equation*}
    \lambda_2(\Gamma) \leq \left(\frac{2 \pi}{\ell}\right)^4.
\end{equation*}
The number $\left(\frac{2 \pi}{\ell}\right)^4$ is also an eigenvalue, it can either be first non zero eigenvalue, second, or any other.
\end{corollary}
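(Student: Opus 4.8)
The plan is to obtain the inequality as an immediate specialization of the preceding Proposition and then to exhibit an explicit eigenfunction realizing the bound. First I would substitute the equilateral data $\ell_i=\ell$ for every $i$ and $\mathcal{L}(\Gamma)=|E|\ell$ into the first estimate of that Proposition. Since there are $|E|$ identical summands, the right-hand side collapses to $\frac{2}{|E|\ell}\cdot|E|\cdot\frac{8\pi^4}{\ell^3}=\frac{16\pi^4}{\ell^4}=\left(\frac{2\pi}{\ell}\right)^4$, which is exactly the claimed upper bound on $\lambda_2(\Gamma)$. This step is purely arithmetic and requires no new idea.

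For the second assertion I would produce a function $\varphi$ on $\Gamma$ satisfying $\varphi^{(4)}=\left(\frac{2\pi}{\ell}\right)^4\varphi$ edgewise together with the relevant vertex conditions, which forces $\left(\frac{2\pi}{\ell}\right)^4$ into the spectrum. The natural candidate is the same trial function used in the Proposition, namely $\varphi(x)=\cos\!\left(\frac{2\pi x}{\ell}\right)$ on each edge $e_i=[0,\ell]$. Writing $\mu=2\pi/\ell$, a direct differentiation gives $\varphi^{(4)}=\mu^4\varphi$, and evaluating at the endpoints yields $\varphi(0)=\varphi(\ell)=1$, $\varphi'(0)=\varphi'(\ell)=0$, and $\varphi'''(0)=\varphi'''(\ell)=0$. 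Hence at every vertex $\varphi$ is continuous with common value $1$, and all first and third normal derivatives vanish. Reading off \eqref{vc2}, the continuity condition, $\partial\varphi(x_j)=0$, and $\sum_{x_j\in v_m}\partial^3\varphi(x_j)=0$ (recall $\alpha_m=0$) are all satisfied, so $\varphi$ is a genuine eigenfunction and $\left(\frac{2\pi}{\ell}\right)^4\in\spec\Gamma$ for condition \eqref{vc2}; the same computation also handles \eqref{gvc1} and \eqref{gvc2} when the weights satisfy $\sigma^{v_m}_{x_i}=\sigma^{v_m}_{x_j}$, respectively $\sum_{x_j\in v_m}\overline{\sigma^{v_m}_{x_j}}=0$.

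The genuinely delicate step is the case of condition \eqref{vc1}, for which $\cos(\mu x)$ fails because $\partial^2\varphi=\varphi''=-\mu^2\neq0$ at the endpoints. Here I would instead take $\varphi(x)=\sin(\mu x)$ on each edge: it solves the ODE, vanishes at every endpoint (so continuity holds with $\varphi(v_m)=0$ and the second-derivative condition $\partial^2\varphi(x_j)=\varphi''(x_j)=0$ holds automatically), and the only remaining requirement is $\sum_{x_j\in v_m}\partial^3\varphi(x_j)=0$. Because $\partial^3\sin(\mu x)$ contributes $-\mu^3$ at a ``left'' endpoint and $+\mu^3$ at a ``right'' endpoint, I would exploit the freedom to place $\pm\sin(\mu x)$ on each edge (all endpoint values remaining $0$, so continuity is untouched) and choose the signs so that these contributions cancel at every vertex. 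I expect this sign-balancing to be the main obstacle, since it is exactly where the geometry of the equilateral graph enters and it need not be possible without a compatible orientation; once a balancing choice is fixed the verification is routine. Finally, combining $\lambda_2(\Gamma)\le(2\pi/\ell)^4$ with the fact that $(2\pi/\ell)^4$ is some eigenvalue $\lambda_j(\Gamma)$ forces $j\ge2$, which is precisely the stated conclusion that it is the first non-zero, the second, or a later eigenvalue.
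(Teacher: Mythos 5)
Your first two paragraphs reproduce exactly what the paper intends: the corollary is stated there without any proof, and the implicit argument is precisely your arithmetic specialization of the preceding Proposition (the bound collapses to $\left(2\pi/\ell\right)^4$ when $\ell_i=\ell$ and $\mathcal{L}=|E|\ell$) together with the observation that the trial function $\cos\left(2\pi x/\ell\right)$ is not merely a test function but a genuine eigenfunction, since its first and third derivatives vanish at all endpoints. So for \eqref{vc2}, and for \eqref{gvc1}, \eqref{gvc2} under the weight conditions you state, your proof and the paper's (implicit) proof coincide.

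Your third paragraph goes beyond the paper, and your caution there is well founded: the sign-balancing you describe for \eqref{vc1} is a genuine obstruction, not a routine step. The contributions $\pm\mu^3$ of $\pm\sin(\mu x)$ at the vertices assemble into the signed incidence matrix of the graph acting on the vector of edge coefficients, whose kernel is the cycle space, of dimension $\beta=|E|-|V|+1$; hence your construction succeeds if and only if $\Gamma$ contains a cycle, and it must fail on trees. Indeed the corollary's second assertion is simply false for \eqref{vc1} on a single edge with $\alpha=0$ (the free--free beam): there the eigenvalues $\mu^4$ satisfy $\cos(\mu\ell)\cosh(\mu\ell)=1$, and $\mu\ell=2\pi$ gives $\cosh(2\pi)\neq 1$, so $\left(2\pi/\ell\right)^4$ is not in the spectrum even though the upper bound $\lambda_2\leq\left(2\pi/\ell\right)^4$ still holds. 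So your proposal is correct where the paper's claim is correct, and your flagged obstacle pinpoints exactly where the paper's unproved claim needs the restriction to \eqref{vc2} (or to the weighted cases you identify, or to graphs with $\beta\geq 1$ for \eqref{vc1}).
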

\subsection{Lower bounds}
Generally, the Rayleigh quotient is used to obtain upper estimates on eigenvalues of a metric graph. However, we now provide a lower bound on the lowest non-zero eigenvalue of a star graph with the help of the Rayleigh quotient. The following estimate on the lowest non-zero eigenvalue is derived by following the idea presented in \textcolor{red}{\cite{ZS}}. 
\\
Let $\Gamma$ be an interval of length $\ell$, parameterized by $[0,\ell]$, and each endpoint of $\Gamma$ is equipped with vertex conditions either \eqref{gvc1}, or \eqref{vc2} with zero strengths at all vertices. The functions $\varphi_k(x)=\cos \left( \frac{  k \pi}{\ell} \right)$ are eigenfunctions and  $\lambda_{k}=\left(\frac{ k \pi}{\ell} \right)^4, k=1,2,\cdots$ are corresponding eigenvalues.\\
Consider a star graph $\Gamma$ whose vertices are equipped with vertex conditions \eqref{gvc1} or \eqref{vc2}. The above example shows $\left( \frac{ \pi}{\ell} \right)^4 $ is the first non zero eigenvalue of an edge $e$ of length $\ell$. Assume that $\varphi_1$ is an eigenfunction corresponding to eigenvalue $\lambda_1(\Gamma)$, and now we use the first non-zero eigenvalue of the Bi-Laplacian operator acting on each edge to estimate $\lambda_1(\Gamma)$. Since  $\left( \frac{ \pi}{\ell_i} \right)^4 $ is the first non zero eigenvalue on Bi-Laplacian on each edge $e_i$ of a star graph $\Gamma$, therefore we have 
\begin{equation*}
    \int_{e_i} |\varphi_1''(x)|^2 dx \geq \left( \frac{ \pi}{\ell_i} \right)^4 \int_{e_i} |\varphi_1(x)|^2dx \geq \left( \frac{ \pi}{\ell_{max}} \right)^4 \int_{e_i} |\varphi_1(x)|^2dx,
\end{equation*}
$$  \frac{\int_\Gamma |\varphi''_1(x)|^2dx}{\int_\Gamma |\varphi_1(x)|^2dx} \geq  \left( \frac{ \pi}{\ell_{max}} \right)^4, $$
and now, by the min-max description of eigenvalues, we obtain the following estimate.
\begin{equation*}
    \left( \frac{ \pi}{\ell_{max}} \right)^4 \leq \lambda_1(\Gamma).
\end{equation*}\\

Now we present a lower bound on the spectral gap in terms of the total length of the graph. For standard Laplacian, in \cite{F}  L. Friedlander uses the symmetrization technique, which involves the co-area formula, and showed that the spectral gap is minimum for an interval among all the graphs of the same total length. To the best of our knowledge, we believe the co-area formula does not hold for higher-order derivatives and hence, the symmetrization technique can't be easily generalized for higher-order differential operators. On the other hand, Nicaise in \cite{N} and Kurasov and Naboko in \cite{KN}  provided the same estimates for the spectral gap of a given graph in terms of the spectral gap for a single interval (of the same length) using the Eulerian cycle approach. This approach allows deforming the original graph into various steps leading to graphs with smaller and smaller spectral gaps. We will use this second approach for fourth-order operators to obtain a lower bound on the spectral gap.   Moreover, in \cite{KKK}, it was shown that the lowest eigenvalue of Schr\"odinger operator on compact metric graphs with $\delta$-conditions is bounded below by the lowest eigenvalue of Laplace operator on an interval of the same length with the following robin conditions at the two ends  
$$\varphi'(0)=I_{+} \varphi(0), \quad \varphi'(\mathcal{L})=I_{-} \varphi(\mathcal{L}). $$
Where $I_{+}$ and $I_{-}$ are the total positive and total negative strengths. The proof was based on the Eulerian cycle approach presented in \cite{KN}, in which the following two bounds were established for standard Laplacian on a metric graph $\Gamma$,
$$\lambda_2(\Gamma ) \geq \left(\frac{\pi}{\mathcal{L}} \right)^2.$$
If the degree of each vertex is even, then the above estimate can be improved as,
$$\lambda_2(\Gamma ) \geq 4\left(\frac{\pi}{\mathcal{L}} \right)^2.$$
Similar bounds were proved in \cite{KM} for the lowest non-zero eigenvalue of the beam operator on a metric graph equipped with zero strength condition \eqref{vc2} with zero strengths.Using the same approach a similar estimate can be obtained for lowest non-zero eigenvalue of $\delta$ beam operator on metric graphs equipped with conditions \eqref{gvc2} with $\sigma_{x_i}^{v_m}=\sigma_{x_j}^{v_m}$.\\
We recall vertex conditions \eqref{gvc2} such that
 \begin{equation}\label{eightspecial}
     \sigma_{x_i}^{v_m}=\sigma_{x_j}^{v_m},\quad x_i,x_j\in v_m\quad \mbox{and}\quad  \sigma_{x_i}^{v_m}\in\mathbb{C}, \end{equation}
can be written as
\begin{equation}\label{specialgvc2} \begin{cases} \varphi(x_i)=\varphi(x_j)\equiv\varphi(v_m),\quad x_i,x_j\in v_m,\\
\partial\varphi(x_i)=\partial\varphi(x_j),
\\
\sum_{x_j\in v_m}\partial^2\varphi(x_j)=0,\\
\sum_{x_j\in v_m}\partial^3\varphi(x_j)=-\alpha_m\varphi(v_m).
\end{cases}\end{equation}

\begin{theorem}\label{thmeulerian}
Consider the operator $H_\Gamma=\frac{d^4}{dx^4}$ on a finite, compact and connected metric graph $\Gamma$ of length $\mathcal{L}$ equipped with conditions \eqref{specialgvc2} at all vertices. Also assume that the strengths of delta interactions, $\alpha_m$, at all vertices has the same sign. 
Then, the lowest non-zero eigenvalue $\lambda_0(H_\Gamma)$ is bounded from below by the lowest non-zero eigenvalue of the operator $\hat{H}=\frac{d^4}{dx^4}$ on the loop $\mathcal{C}_{2\mathcal{L}}^v$, which is parameterized by the interval $[0, 2\mathcal{L}]$, of total length $2\mathcal{L}$ and single vertex $v=\{0,2\mathcal{L}\}$ satisfying the conditions

\begin{equation*}\label{intervalconditions}
\begin{cases}\varphi(0)=\varphi(2\mathcal{L}), \\
\varphi'(0)=\varphi'(2\mathcal{L}), \\
\varphi''(0)=\varphi''(2\mathcal{L}),\\
\varphi'''(0)-\varphi'''(2\mathcal{L})=-2\alpha\varphi(0).
\end{cases}
\end{equation*}
\end{theorem}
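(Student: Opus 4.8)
The plan is to follow the Eulerian-cycle (graph-doubling) strategy of Kurasov--Naboko and Kurasov--Muller, transplanting a minimiser of the Rayleigh quotient on $\Gamma$ onto the comparison loop $\mathcal{C}_{2\mathcal{L}}^v$ and invoking the min-max characterisation (Proposition \ref{minmax}). Let $\psi$ be an eigenfunction realising $\lambda_0(H_\Gamma)$, so that
\[
\lambda_0(H_\Gamma)=\frac{\int_\Gamma|\psi''|^2\,dx+\sum_{m}\alpha_m|\psi(v_m)|^2}{\int_\Gamma|\psi|^2\,dx}.
\]
First I would pass to the doubled graph $\Gamma_2$, obtained by replacing every edge of $\Gamma$ by two parallel copies and assigning strength $2\alpha_m$ at each vertex. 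Placing $\psi$ on both copies leaves the Rayleigh quotient unchanged, since numerator and denominator each double; moreover every vertex of $\Gamma_2$ now has even degree, so $\Gamma_2$ carries an Eulerian circuit of total length $2\mathcal{L}$, which I would parameterise by $[0,2\mathcal{L}]$.

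Next I would read $\psi$ off along this circuit to produce a candidate $\phi$ on $\mathcal{C}_{2\mathcal{L}}^v$. The decisive admissibility check is that $\phi$ lies in the form domain of $\hat H$, i.e.\ that $\phi\in W^2_2$ at every point where the circuit passes through a vertex. Continuity of $\phi$ there is immediate from the continuity of $\psi$ in \eqref{specialgvc2}; continuity of $\phi'$ must be extracted from the equality of the normal derivatives $\partial\psi(x_i)=\partial\psi(x_j)$ at each vertex, read against the traversal orientation of the two incident segments. Granting this, the energies match edge by edge, giving $\int_0^{2\mathcal{L}}|\phi''|^2=2\int_\Gamma|\psi''|^2$ and $\|\phi\|^2=2\|\psi\|^2$, because each edge of $\Gamma$ is traversed exactly twice.

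For the interaction term I would choose the circuit to begin and end at the vertex $v_0$ at which $|\psi|$ is extremal -- minimal if the common sign of the $\alpha_m$ is positive, maximal if negative -- so that the single vertex $v$ of the loop inherits $\phi(v)=\psi(v_0)$ and carries the total strength $2\alpha$ with $\alpha=\sum_m\alpha_m$. Since the interior passages carry no interaction on the loop, one gets
\[
\hat h[\phi]=\int_0^{2\mathcal{L}}|\phi''|^2+2\alpha|\psi(v_0)|^2=2\Big(\int_\Gamma|\psi''|^2+\alpha|\psi(v_0)|^2\Big),
\]
and the extremal choice of $v_0$ yields $\alpha|\psi(v_0)|^2\le\sum_m\alpha_m|\psi(v_m)|^2$ precisely because all $\alpha_m$ share one sign; this is the single point where the sign hypothesis is essential. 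Hence $\hat h[\phi]\le 2\lambda_0(H_\Gamma)\|\psi\|^2=\lambda_0(H_\Gamma)\|\phi\|^2$. When all strengths vanish, $\psi\perp 1$ forces $\int_0^{2\mathcal{L}}\phi=2\int_\Gamma\psi=0$, so $\phi$ avoids the constant (lowest) eigenspace of $\hat H$; combining this with the min-max description of the lowest non-zero eigenvalue of $\hat H$ gives $\lambda_0(\hat H)\le\hat h[\phi]/\|\phi\|^2\le\lambda_0(H_\Gamma)$, which is the claim.

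The hard part will be exactly the admissibility check of the second paragraph: ensuring that the transplanted function is genuinely of class $W^2_2$ at the interior passages, for otherwise $\phi'$ acquires jumps, $\phi''$ develops singular parts, and $\int|\phi''|^2$ ceases to equal the edge-by-edge value. For condition \eqref{vc2} this is automatic, since there the normal derivatives vanish and $\phi'=0$ at every passage; for \eqref{specialgvc2} the normal derivatives are only equal, not zero, so the first derivative matches only up to orientation and the unfolding must be organised -- through the doubling and the freedom in the Eulerian circuit -- so that no spurious jump in $\phi'$ survives. Reconciling this derivative matching with the simultaneous continuity of $\phi$ is, I expect, the delicate technical heart of the proof, and it is what genuinely distinguishes the present case from the vanishing-derivative setting of \cite{KM}.
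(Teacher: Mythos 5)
Your skeleton (doubling the graph, running an Eulerian circuit of length $2\mathcal{L}$, concentrating the total strength $2\alpha$ at an extremal vertex, then min--max) is the same as the paper's, and your treatment of the interaction term at the extremal vertex is sound. The genuine gap is exactly the step you defer: the $W^2_2$-admissibility of the transplanted function $\phi$, and as you have set it up this step does not merely need care --- it fails. At a vertex, condition \eqref{specialgvc2} equates the \emph{inward} normal derivatives, $\partial\psi(x_i)=\partial\psi(x_j)$. When the circuit passes through that vertex, entering along the edge-end $x_i$ and leaving along $x_j$, the running derivative of the transplanted function satisfies $\phi'(s^-)=-\partial\psi(x_i)$ and $\phi'(s^+)=+\partial\psi(x_j)$, so continuity of $\phi'$ would require $\partial\psi(x_i)+\partial\psi(x_j)=0$, which is the \eqref{gvc1}-type (interior-point) condition, not \eqref{specialgvc2}. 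Hence $\phi'$ flips sign at \emph{every} passage, whichever Eulerian circuit of $\Gamma_2$ you choose: the flip occurs at each crossing separately, so the combinatorial freedom you invoke cannot cancel it. Multiplying $\psi$ by $-1$ on alternate segments repairs $\phi'$ but destroys continuity of $\phi$ at every vertex where $\psi(v_m)\neq 0$. Your construction is genuinely valid only when all first derivatives vanish at the vertices, i.e.\ for \eqref{vc2} --- the setting of \cite{KM} that this theorem is meant to extend.

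The paper does something structurally different at this point, and that difference is the substance of its proof: it never transplants a concrete function onto the loop. After doubling, it applies the splitting principle (Theorem \ref{splittingtheorem}) to cut $\Gamma_2$ open into the Eulerian trail graph $\mathcal{T}_{2\mathcal{L}}$, a cyclic chain of $2|E|$ edges whose degree-two vertices retain conditions \eqref{specialgvc2}; this yields $\lambda_0(\Gamma_2)\geq\lambda_0(\mathcal{T}_{2\mathcal{L}})$ by pure form-domain inclusion, with no smoothness check on any particular function. It then moves all the strength to a vertex minimizing $|\varphi_0|$ (your extremality argument, but applied to the trail's eigenfunction rather than to $\psi$), and only at the very end identifies the resulting quadratic form with that of the loop, by parameterizing the edges alternately so that consecutive vertices read $\{0,0\}$ and $\{\ell_n,\ell_{n+1}\}$ and ``equal normal derivatives'' becomes ``equal coordinate derivatives''. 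Note that this final identification is your sign problem in disguise: laying alternately oriented edges down on $[0,2\mathcal{L}]$ still reverses the sense of every second edge, and for instance a trail of two equal edges with conditions \eqref{specialgvc2} decouples into a clamped beam plus a free beam, whose spectrum is not that of the periodic loop. So you have correctly located the crux of the theorem; but deferring it leaves your argument incomplete, and the specific mechanism you sketch for resolving it cannot work for \eqref{specialgvc2}.
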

\begin{proof}
Let $|E|$ and $|V|$, respectively, be the total number of edges and vertices of the metric graph $\Gamma$ and denote by $E_n$ its $n^{th}$ edge between vertices $v_m$ and $v_m'$. We construct a graph $\Gamma_2$ from $\Gamma$ by attaching a new edge $E_n'$ of the same length as $E_n$ between vertices $v_m$ and $v_m'$ for all $n=1,\cdots,|E|$. Consider the operator $H_{\Gamma_2}=\frac{d^4}{dx^4}$ on $\Gamma_2$ subject to the same vertex conditions, \eqref{specialgvc2}, but with strengths of delta interactions equal to $2\alpha_m$ in all vertices. Without loss of generality we can assume that $\alpha_m\geq 0$, $m=1,\cdots,|V|$. We denote by $\alpha:=\sum_{m=1}^M\alpha_m$ the total delta interaction strength.

It is easy to observe that each eigenvalue of $H_\Gamma$ is also an eigenvalue of $H_{\Gamma_2}$. The converse, however, is not true. Indeed: let $\varphi$ be an eigenfunction of $H_\Gamma$ corresponding to eigenvalue $\lambda(\Gamma)$. The eigenfunction $\varphi$ can be extended to $\Gamma_2$ by letting it assume the same values on the new edges $E_n'$ as $E_n$. This newly constructed function $\varphi_2$ on $\Gamma_2$ satisfy the eigenvalue equation on $\Gamma_2$ for the same eigenvalue and the vertex conditions. This, in particular, implies
\begin{equation}\label{gamma2}
    \lambda_0(\Gamma)\geq \lambda_0({\Gamma_2}).
\end{equation}

Every vertex in $\Gamma_2$ has an even degree and therefore there exists a closed Eulerian trail $\mathcal{T}_{2\mathcal{L}}$, of length $2\mathcal{L}$, that traverses each edge precisely once. We can obtain this closed trail by cutting through certain vertices in $\Gamma_2$. We assume that the cutting or splitting of vertices is performed in a way that the nature of vertex conditions and sign of delta interaction strengths are preserved. The closed trail $\mathcal{T}_{2\mathcal{L}}$ has $2|E|$ number of edges  and vertices and each vertex is of degree two. As \eqref{specialgvc2} represent conditions \eqref{gvc2} when 
$$     \sigma_{x_i}^{v_m}=\sigma_{x_j}^{v_m},\quad x_i,x_j\in v_m\quad \mbox{and}\quad  \sigma_{x_i}^{v_m}\in\mathbb{C}\backslash\{0\}$$ and conditions \eqref{vc2} when
$$
     \sigma_{x_i}^{v_m}=\sigma_{x_j}^{v_m}=0,\quad x_i,x_j\in v_m
$$
therefore, \textcolor{blue}{Theorem}  \eqref{splittingtheorem} gives
\begin{equation}\label{trail}
\lambda_0({\Gamma_2})\geq \lambda_0({\mathcal{T}_{2\mathcal{L}}}).
\end{equation}

Let $\varphi_0$ be the eigenfunction corresponding to the eigenvalue $\lambda_0(H_{\mathcal{T}_{2\mathcal{L}}})$. There exists a vertex $v_{\min}$ of $\mathcal{T}_{2\mathcal{L}}$ such that $|\varphi_0(v_{\min})|\leq |\varphi_0(v)|$. Here, $v$ represents an arbitrary vertex of $\mathcal{T}_{2\mathcal{L}}$. Let $\tilde{\mathcal{T}}_{2\mathcal{L}}$ is obtained from $\mathcal{T}_{2\mathcal{L}}$ by assuming zero delta interaction at all vertices  of $\mathcal{T}_{2\mathcal{L}}$ except at $v_{\min}$ where we assume that the strength of delta interaction is equal to $2\alpha$. The eigenfunction $\varphi_0$ belongs to the domain of the quadratic form of $H_{\tilde{\mathcal{T}}_{2\mathcal{L}}}$ and therefore it can be used to estimate $\lambda_0({\tilde{\mathcal{T}}_{2\mathcal{L}}})$:
\begin{equation}\label{zerodeltatrail}
    \lambda_0({\mathcal{T}_{2\mathcal{L}}})\geq\frac{\int_{\mathcal{T}_{2\mathcal{L}}}|\varphi''(x)|^2  dx + 2\alpha|\varphi(v_{\min})|^2}{\int_{\mathcal{T}_{2\mathcal{L}}}|\varphi(x)|^2 dx}\geq \lambda_0({\tilde{\mathcal{T}}_{2\mathcal{L}}}).
\end{equation}
The domain of the quadratic form associated with the operator $H_{\tilde{\mathcal{T}}_{2\mathcal{L}}}$ contains all the functions from the space $W_2^2(\tilde{\mathcal{T}}_{2\mathcal{L}}\backslash V)$ satisfying the following two conditions at all vertices
$$
\begin{cases}\varphi(x_i)=\varphi(x_j), \\
\partial \varphi(x_i)= \partial\varphi(x_j).
\end{cases}
$$
Number of edges and vertices of $\tilde{\mathcal{T}}_{2\mathcal{L}}$ are $2 |E|$ and each vertex is of degree two. Therefore, we can identify $n^{th}$ edge by the interval $[0,\ell_n]$ in such a way that any two consecutive vertices are given by the set of end points $\{0,0\}$ and $\{\ell_n,\ell_{n+1}\}$. This arrangement implies that the condition $\partial \varphi(x_i)= \partial\varphi(x_j)$ is equivalent to the condition $\varphi'(x_i)= \varphi'(x_j)$.

Consider the operator $\hat{H}$ on the loop $\mathcal{C}^v_{2\mathcal{L}}$ of total length $2\mathcal{L}$ and a single vertex $v$. Identify the loop with the interval $[0, 2 \mathcal{L}]$ and consider the following conditions at the single vertex $v$ on the functions from the domain $W_2^4(\mathcal{C}^v_{2\mathcal{L}}\backslash \{v\})$
$$
\begin{cases}\varphi(0)=\varphi(2\mathcal{L}), \\
\varphi'(0)= \varphi'(2\mathcal{L}),\\
\varphi''(0)=\varphi''(2\mathcal{L}),\\
\varphi'''(0)-\varphi'''(2\mathcal{L})=-2\alpha\varphi(0).
\end{cases}
$$
The quadratic form, and its domain, associated with the operator $\hat{H}$ coincides with the quadratic form, and its domain, associated with the operator $H_{\tilde{\mathcal{T}}_{2\mathcal{L}}}$. Therefore,
\begin{equation}\label{hhat}
\lambda_0({\tilde{\mathcal{T}}_{2\mathcal{L}}})=\lambda_0(\mathcal{C}^v_{2\mathcal{L}}).
\end{equation}
Equation \eqref{hhat} along with inequalities \eqref{gamma2}, \eqref{trail} and \eqref{zerodeltatrail}  imply the result.
\end{proof}

In \textcolor{blue}{Theorem}  \eqref{thmeulerian} if the original graph $\Gamma$ is Eulerian, \textit{i.e.} all vertices are of even degree then we can skip the step of constructing $\Gamma_2$ in the proof and can choose an Eulerian trail $\mathcal{T}_{\mathcal{L}}$ on the original graph $\Gamma$ of length $\mathcal{L}$. The lowest non-zero eigenvalue of $H_\Gamma$ is then bounded from below by the lowest non-zero eigenvalue of the operator $\hat{H}$ on the loop $\mathcal{C}_{\mathcal{L}}^v$ of length $\mathcal{L}$ and a single vertex $v$.
\begin{corollary}
     Let degrees of all vertices of $\Gamma$ are even, and all other assumptions of \textcolor{blue}{Theorem}  \eqref{thmeulerian} be satisfied. Then the lowest non-zero eigenvalue of $H_\Gamma$ is bounded from below by the lowest non-zero eigenvalue of the operator $\hat{H}$ on the loop $\mathcal{C}_{\mathcal{L}}^v$ of length $\mathcal{L}$ and a single vertex $v$ equipped with conditions
$$
\begin{cases}\varphi(0)=\varphi(\mathcal{L}), \\
\varphi'(0)= \varphi'(\mathcal{L}),\\
\varphi''(0)=\varphi''(\mathcal{L}),\\
\varphi'''(0)-\varphi'''(\mathcal{L})=-\alpha\varphi(0).
\end{cases}
$$
\end{corollary}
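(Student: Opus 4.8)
The plan is to repeat the argument of Theorem~\eqref{thmeulerian} almost verbatim, simply deleting the edge-doubling step, which becomes superfluous once $\Gamma$ is assumed Eulerian. The point is that a connected graph all of whose vertices have even degree possesses a closed Eulerian trail by the classical theorem of Euler; there is therefore no need to pass to the doubled graph $\Gamma_2$, and one may operate directly on $\Gamma$, retaining the original strengths $\alpha_m$ and the original length $\mathcal{L}$.

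First I would produce a closed Eulerian trail $\mathcal{T}_{\mathcal{L}}$ of length $\mathcal{L}$ traversing each edge of $\Gamma$ exactly once, obtained by splitting the vertices of $\Gamma$ into degree-two vertices while preserving the type of each vertex condition and the sign of each delta strength, exactly as in the proof of Theorem~\eqref{thmeulerian}. Because splitting enlarges the domain of the quadratic form, Theorem~\eqref{splittingtheorem} gives $\lambda_0(\Gamma)\geq\lambda_0(\mathcal{T}_{\mathcal{L}})$. Since no doubling has taken place, the total delta strength carried by $\mathcal{T}_{\mathcal{L}}$ is $\alpha=\sum_m\alpha_m$ rather than $2\alpha$.

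Next I would concentrate the interaction at one vertex. Taking an eigenfunction $\varphi_0$ for $\lambda_0(\mathcal{T}_{\mathcal{L}})$ and a vertex $v_{\min}$ at which $|\varphi_0|$ is smallest, I replace all strengths by a single strength $\alpha$ at $v_{\min}$, producing $\tilde{\mathcal{T}}_{\mathcal{L}}$. As all strengths share one sign and $|\varphi_0(v_{\min})|\le|\varphi_0(v)|$ at every vertex $v$, the concentrated term satisfies $\alpha|\varphi_0(v_{\min})|^2\le\sum_j\beta_j|\varphi_0(v_j)|^2$, where the $\beta_j$ are the strengths on $\mathcal{T}_{\mathcal{L}}$ with $\sum_j\beta_j=\alpha$; the Rayleigh-quotient estimate of \eqref{zerodeltatrail} then yields $\lambda_0(\mathcal{T}_{\mathcal{L}})\geq\lambda_0(\tilde{\mathcal{T}}_{\mathcal{L}})$. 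Finally, as in \eqref{hhat}, the degree-two vertices of $\tilde{\mathcal{T}}_{\mathcal{L}}$ enforce only continuity of $\varphi$ and $\varphi'$, so its quadratic form coincides with that of $\hat{H}$ on the loop $\mathcal{C}^v_{\mathcal{L}}$ under the listed conditions, the single strength $\alpha$ becoming the term $-\alpha\varphi(0)$ in the third-derivative jump; hence $\lambda_0(\tilde{\mathcal{T}}_{\mathcal{L}})=\lambda_0(\mathcal{C}^v_{\mathcal{L}})$. Chaining the three inequalities closes the argument.

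The only genuinely new ingredient, and the step I would flag, is the existence of the Eulerian trail on $\Gamma$ itself; this is immediate from the even-degree hypothesis. Everything else is bookkeeping---most importantly confirming that, absent the doubling, the concentrated strength is $\alpha$ and not $2\alpha$, which is exactly what matches the jump condition $\varphi'''(0)-\varphi'''(\mathcal{L})=-\alpha\varphi(0)$ on $\mathcal{C}^v_{\mathcal{L}}$.
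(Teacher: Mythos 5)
Your proposal is correct and follows essentially the same route as the paper, which proves the corollary precisely by noting that for an Eulerian graph the doubling step in Theorem \eqref{thmeulerian} can be skipped, an Eulerian trail $\mathcal{T}_{\mathcal{L}}$ chosen on $\Gamma$ itself, and the remaining steps (splitting, concentrating the total strength $\alpha$ at $v_{\min}$, and identifying the resulting form with that of the loop $\mathcal{C}^v_{\mathcal{L}}$) carried out unchanged. Your bookkeeping of the strength ($\alpha$ rather than $2\alpha$) and the length ($\mathcal{L}$ rather than $2\mathcal{L}$) matches the paper exactly.
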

\begin{corollary}\label{pavelcoro}
 If $\alpha_m=0$ for $m=1,\cdots, |V|$ then 
\begin{equation}\label{pavelineq1}
\lambda_0(H_\Gamma)\geq \left(\frac{\pi}{ \mathcal{L}}\right)^4.
\end{equation}
In addition, if $\Gamma$ is Eulerian then
\begin{equation}\label{pavelineq2}
\lambda_0(H_\Gamma)\geq 16 \left(\frac{\pi}{\mathcal{L}}\right)^4.
\end{equation}
\end{corollary}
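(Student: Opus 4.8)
The plan is to specialize Theorem \ref{thmeulerian} (and the Eulerian corollary immediately preceding this one) to the case of vanishing interaction strengths and then read off the lowest non-zero eigenvalue of the resulting comparison operator on a loop. Since $\alpha_m=0$ for every $m$, the total strength $\alpha=\sum_{m}\alpha_m$ vanishes as well, so the fourth vertex condition defining $\hat H$ on $\mathcal{C}^v_{2\mathcal{L}}$ degenerates from $\varphi'''(0)-\varphi'''(2\mathcal{L})=-2\alpha\varphi(0)$ to $\varphi'''(0)=\varphi'''(2\mathcal{L})$. Thus $\hat H=\frac{d^4}{dx^4}$ on the loop parameterized by $[0,2\mathcal{L}]$ is subject to the purely periodic conditions $\varphi^{(j)}(0)=\varphi^{(j)}(2\mathcal{L})$ for $j=0,1,2,3$, and Theorem \ref{thmeulerian} reduces \eqref{pavelineq1} to identifying the first non-zero eigenvalue of $\hat H$.

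First I would compute the spectrum of $\frac{d^4}{dx^4}$ on a circle of circumference $L$ with periodic conditions. Because this operator commutes with translations, it is diagonalized by the Fourier modes $e^{2\pi i n x/L}$, $n\in\mathbb{Z}$, yielding eigenvalues $(2\pi n/L)^4$; equivalently, substituting the ansatz $\varphi=A\cos(kx)+B\sin(kx)+C\cosh(kx)+D\sinh(kx)$ with $\lambda=k^4$ into the four periodic conditions decouples them into a trigonometric block in $(A,B)$ and a hyperbolic block in $(C,D)$. The hyperbolic block forces $C=D=0$, since its coefficient determinant equals $2(\cosh(kL)-1)>0$ for $k>0$, whereas the trigonometric block admits a nontrivial (two-dimensional) solution exactly when $\cos(kL)=1$, i.e. $kL=2\pi n$. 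Hence the non-zero spectrum is $\{(2\pi n/L)^4:n\geq 1\}$ and the lowest non-zero eigenvalue is $(2\pi/L)^4$. Taking $L=2\mathcal{L}$ gives $\lambda_0(\hat H)=(2\pi/2\mathcal{L})^4=(\pi/\mathcal{L})^4$, which together with Theorem \ref{thmeulerian} yields \eqref{pavelineq1}.

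For the Eulerian refinement \eqref{pavelineq2}, I would instead invoke the corollary preceding this one: when every vertex of $\Gamma$ has even degree, $\lambda_0(H_\Gamma)$ is bounded below by the lowest non-zero eigenvalue of $\frac{d^4}{dx^4}$ on the loop $\mathcal{C}^v_{\mathcal{L}}$ of length $\mathcal{L}$, again carrying periodic conditions once $\alpha=0$. Repeating the computation above with $L=\mathcal{L}$ gives lowest non-zero eigenvalue $(2\pi/\mathcal{L})^4=16(\pi/\mathcal{L})^4$, establishing \eqref{pavelineq2}.

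The only point requiring genuine care, and hence the main (if mild) obstacle, is verifying that the exponentially growing solutions $\cosh$ and $\sinh$ contribute no eigenvalues, so that the loop spectrum really is the pure Fourier spectrum $(2\pi n/L)^4$ and not something smaller that would weaken the bound; this is exactly the observation that the hyperbolic determinant $2(\cosh(kL)-1)$ is strictly positive for $k>0$. Once that is settled, both inequalities follow immediately by reading off the first non-zero Fourier eigenvalue for the loops of circumference $2\mathcal{L}$ and $\mathcal{L}$, respectively.
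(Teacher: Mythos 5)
Your proposal is correct and follows essentially the same route as the paper: the corollary is exactly Theorem \ref{thmeulerian} (respectively its Eulerian corollary) specialized to $\alpha=0$, combined with the fact that the non-zero spectrum of $\frac{d^4}{dx^4}$ on a loop of circumference $L$ with periodic conditions is $\{(2\pi n/L)^4\}_{n\geq 1}$ — a computation the paper itself records in its loop example (the hyperbolic block being excluded since $\cosh(kL)\neq 1$ for $k>0$). Reading off the first non-zero eigenvalue for $L=2\mathcal{L}$ and $L=\mathcal{L}$ gives $(\pi/\mathcal{L})^4$ and $16(\pi/\mathcal{L})^4$ exactly as you argue.
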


\noindent \textbf{Remark.} Inequalities \eqref{pavelineq1} and \eqref{pavelineq2} were proved in \textcolor{blue}{Theorem (4)}  of \textcolor{red}{\cite{KM}} for the standard vertex conditions \eqref{vc2} (with $\alpha_m=0$) which require functions first derivative to be equal to zero at vertices. \textcolor{blue}{Corollary}  \eqref{pavelcoro}, on the other hand, does not require that the first derivative of functions from the domain are equal to zero at vertices. Instead, we assume a general condition of continuity of the normal first derivative at the vertices, i.e.,  $$\partial\varphi(x_i)=\partial\varphi(x_j),\quad  x_i, x_j\in v_m.$$
\section{Bounds on higher eigenvalues}
 In this section we discuss bounds on higher eigenvalues values of $\delta$ beam operators on finite compact and connected metric graphs. These bounds are expressed in terms of total length, number of edges $|E|$, number of vertices $|V|$, and first Betti number of a graph. Most of these bounds are for the eigenvalues of a graph with zero interaction strengths at all vertices. 
\begin{definition}
For a finite connected graph $\Gamma$ with $|E|$ and $|V| $ number of edges and vertices, respectively, the Betti number $\beta$ is given by $$\beta=|E|-|V|+1$$
It is the minimum number of edges required to remove from $\Gamma$ to obtain a tree. Equivalently, the Betti number is the total number of independent cycles in $\Gamma$. 
\end{definition}
The following Theorem provides an upper bound on eigenvalues in terms of the length of a single edge. The length of other edges can also be used to obtain a bound. But, we will use a longest edge with a maximum length $\ell_{max}$ to get a better estimate.

\begin{theorem}
Let $\Gamma$ be finite compact metric graph and let $\ell_{max}$ and $\beta$ be the length of a longest edge in $\Gamma$ and the Betti number of the graph, respectively. If each vertex of $\Gamma$ is equipped with vertex conditions either \eqref{gvc1} or \eqref{vc2} with $\alpha_m=0, m=1,2,\cdots,|V|$. Then, 
\begin{equation}\label{ub2}
    \lambda_k(\Gamma) \leq  \left(\frac{(k+\beta) \pi}{\ell_{max}} \right)^4.
\end{equation}
\end{theorem}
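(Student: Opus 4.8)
The plan is to combine a direct variational estimate on trees with the gluing interlacing inequalities, letting the Betti number enter through the number of gluings needed to rebuild $\Gamma$ from a spanning tree. Write $\ell:=\ell_{\max}$ and let $e$ be a longest edge, identified with $[0,\ell]$ and having endpoints $a,b$. The decisive feature of the functions $\psi_j(x)=\cos(j\pi x/\ell)$, $j=0,1,2,\dots$, is that $\psi_j'(0)=\psi_j'(\ell)=0$ while $\psi_j(0)=1$ and $\psi_j(\ell)=(-1)^j$. Because every first derivative entering a vertex condition of type \eqref{gvc1} or \eqref{vc2} vanishes, constant continuations of $\psi_j$ to the rest of the graph are admissible for both condition types.

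First I would treat the case $\beta=0$, where $\Gamma=\mathcal{T}$ is a tree. Removing the interior of $e$ disconnects $\mathcal{T}$ into two subtrees $T_a\ni a$ and $T_b\ni b$. Define $\varphi_j$ to equal $\psi_j$ on $e$, the constant $1$ on $T_a$, and the constant $(-1)^j$ on $T_b$. Continuity holds at $a,b$ (the matched values are $1$ and $(-1)^j$), all vertex derivative conditions hold since every relevant first derivative is zero, and off $e$ the function is constant, so $\varphi_j\in D(h)$ with $h[\varphi_j]=\int_e|\psi_j''|^2$. For $\varphi=\sum_{j=0}^{m}c_j\varphi_j$ the second derivative is supported on $e$, and orthogonality of the cosines on $[0,\ell]$ gives
\[
h[\varphi]=\sum_{j\ge 1}|c_j|^2(j\pi/\ell)^4\,\tfrac{\ell}{2}\le (m\pi/\ell)^4\sum_{j\ge1}|c_j|^2\,\tfrac{\ell}{2}\le (m\pi/\ell)^4\,\|\varphi\|_{L^2(\Gamma)}^2 ,
\]
where the last step uses $\|\varphi\|_{L^2(\Gamma)}^2\ge\int_e|\varphi|^2\ge\sum_{j\ge1}|c_j|^2\tfrac{\ell}{2}$ and discards the non-negative off-edge mass. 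Since $\mathrm{span}\{\varphi_0,\dots,\varphi_m\}$ has dimension $m+1$, the min-max characterisation \eqref{minmax1} yields $\lambda_{m+1}(\mathcal{T})\le(m\pi/\ell)^4$, which is the desired estimate for a tree.

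For general $\Gamma$ I would reduce to a tree by surgery. Splitting $\beta$ suitably chosen vertices destroys all independent cycles and produces a spanning tree $\mathcal{T}'$ with the same edge set, hence the same $\ell_{\max}$; conversely $\Gamma$ is recovered from $\mathcal{T}'$ by $\beta$ successive vertex gluings, each keeping the common vertex condition (\eqref{gvc1} glued to \eqref{gvc1}, or \eqref{vc2} glued to \eqref{vc2}) and therefore falling under classification I, i.e.\ a rank-one perturbation. Applying Theorem \eqref{gluing}(1) (equivalently Theorem \eqref{joining}) $\beta$ times and iterating the interlacing $\lambda_k(\,\cdot\,)\le\lambda_{k+1}(\mathcal{T}')$ gives $\lambda_k(\Gamma)\le\lambda_{k+\beta}(\mathcal{T}')$. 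Combining with the tree estimate applied to $\mathcal{T}'$ at index $k+\beta$ gives $\lambda_k(\Gamma)\le\lambda_{k+\beta}(\mathcal{T}')\le((k+\beta)\pi/\ell_{\max})^4$, as claimed.

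The main obstacle is the bookkeeping in the surgery step rather than the variational estimate. I must verify that the $\beta$ gluings rebuilding $\Gamma$ from $\mathcal{T}'$ can always be arranged to stay inside the same condition class, so that each is genuinely a rank-one, classification-I perturbation and the total index shift is exactly $\beta$ and not larger; for \eqref{vc2} this is immediate since splitting \eqref{vc2} into two \eqref{vc2} vertices is covered by Theorem \eqref{splittingtheorem}(3) and the reverse gluing is classification I, whereas for \eqref{gvc1} one must check that the weights $\sigma^{v}_{x_j}$ can be partitioned consistently across the split so that regluing recovers the original vertex under item I of the classification. The variational step itself is routine once admissibility of the constant continuations is confirmed; the only subtlety there, namely that the $\varphi_j$ are not mutually $L^2(\Gamma)$-orthogonal, is handled exactly as above by retaining only the on-edge contribution to $\|\varphi\|^2$.
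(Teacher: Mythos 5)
Your proposal is correct, and its second half coincides with the paper's argument: the paper also rebuilds $\Gamma$ from a tree by $\beta$ classification-I gluings and iterates the interlacing of Theorem \eqref{gluing}(1) to shift the index by $\beta$. Where you genuinely diverge is the tree estimate. The paper starts from the longest edge $e_1$ alone, quotes its spectrum as $(k\pi/\ell_{max})^4$, and grows a tree $T$ with the same edge set as $\Gamma$ by repeatedly attaching pendant edges, invoking Theorem \eqref{pendant} with $r=1$ at each step to get $\lambda_k(T)\leq\lambda_k(e_1)$; you instead prove the tree bound in one stroke via \eqref{minmax1}, using the $(m+1)$-dimensional test space of cosines $\cos(j\pi x/\ell_{max})$ on $e$ extended by the constants $1$ and $(-1)^j$ to the two subtrees, admissible for both \eqref{gvc1} and \eqref{vc2} exactly because every first derivative vanishes at the vertices. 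Your route buys three things: it is self-contained (no reliance on the pendant theorem and its $r,k_0$ bookkeeping, which itself rests on disjoint unions plus gluing); it yields the slightly sharper tree bound $\lambda_{m+1}(\mathcal{T})\leq(m\pi/\ell_{max})^4$, which also repairs an indexing wobble in the paper's line $\lambda_k(e_1)=(k\pi/\ell_{max})^4$ (literally true only if the zero eigenvalue of the free interval is counted as $\lambda_0$); and, as you rightly note, you need only the gluing direction of the surgery, so the fact that Theorem \eqref{splittingtheorem} does not cover splitting a \eqref{gvc1} vertex into two \eqref{gvc1} vertices is harmless --- the splitting is purely combinatorial scaffolding, and the same is implicitly true in the paper's proof. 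Your residual caveat about the weights is easily discharged: all $\sigma^{v}_{x_j}$ are nonzero by the definition of \eqref{gvc1}, so any partition of the edge-ends across the split yields valid \eqref{gvc1} conditions at the descendants, and the classification I(5) gluing with the union of the weight vectors restores the original vertex (with $\alpha_0=\alpha_1+\alpha_2=0$). What the paper's pendant approach buys in exchange is uniformity --- the identical surgery chain recurs in Sections 6 and 8 --- whereas your explicit test functions exploit the special feature of these two condition classes that constants lie in the form domain.
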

\begin{proof}
First, we select the longest edge $e_1$ with length $\ell_{max}$ and add pendants edges to the vertices of $e_1$, producing new boundary vertices and then add more pendant edges to new vertices, repeat this process until we obtain a tree $T$, which has equal number of edges as $\Gamma$. By \textcolor{blue}{Theorem} \eqref{pendant} this  surgical transformations, at each step, lowers all eigenvalues. Second, we join the vertices together to obtain the graph $\Gamma$, and each gluing produces a cycle in the graph. By \textcolor{blue}{Theorem} \eqref{gluing}. The bound is obtained by the following inequalities
$$ \left(\frac{k \pi}{\ell_{max}} \right)^4 =\lambda_k(e_1) \geq \lambda_k(T) \geq \lambda_{k-\beta}(\Gamma).$$
\end{proof}
If the graph is equilateral with $\ell=\ell(e)$ for each edge $e \in E$, then
$$\lambda_k(\Gamma) \leq \left(\frac{(k+\beta)|E| \pi}{\mathcal{L}} \right)^4.$$

\begin{lemma}\label{lemma}
Let $T$ be a finite, compact and connected metric tree and $\Gamma$ be a graph obtained by attaching pendant graphs at some vertices of $T$. Let $e_1$ in $T$ has a maximum length, say $\ell_{max}$. If the vertices of $\Gamma$ are equipped with conditions \eqref{gvc1}  or \eqref{vc2} with $\alpha_m=0$ at all vertices, then
\begin{equation} \label{ub1}
    \lambda_k(\Gamma') \leq \lambda_k(\Gamma) \leq \left (\frac{k \pi}{\ell_{max}} \right)^4.
\end{equation}
\end{lemma}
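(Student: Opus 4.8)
The plan is to realise both $\Gamma$ and $\Gamma'$ as successive \emph{pendant decorations} of the single longest edge $e_1$, exploiting that each pendant attachment with zero-strength conditions lowers every eigenvalue without an index shift, and to anchor the estimate on the explicitly known spectrum of $e_1$.

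First I would record the spectrum of $e_1=[0,\ell_{max}]$. Viewing its two endpoints as degree-one vertices carrying \eqref{gvc1} or \eqref{vc2} with zero strength (these conditions coincide at a boundary vertex), the functions $\varphi_j(x)=\cos(j\pi x/\ell_{max})$ satisfy $\varphi_j'=\varphi_j'''=0$ at both ends and $\varphi_j^{(4)}=(j\pi/\ell_{max})^4\varphi_j$, exactly as in the interval example preceding the statement. Hence $\lambda_k(e_1)=\left(\frac{k\pi}{\ell_{max}}\right)^4$.

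Next I would build up the graphs in a chain: $T$ is obtained from $e_1$ by attaching pendant edges at its vertices, $\Gamma$ is obtained from $T$ by attaching the prescribed pendant graphs, and $\Gamma'$ is obtained from $\Gamma$ by attaching still further pendant graphs. At every attachment the two glued vertices both carry \eqref{gvc1} (resp. \eqref{vc2}), so the glued vertex again carries \eqref{gvc1} (resp. \eqref{vc2}); this is a gluing of classification I, a rank-one perturbation. Crucially, every pendant graph $\hat\Gamma$ carries zero-strength conditions, so $\varphi\equiv 1\in D(\hat h)$ forces $\lambda_1(\hat\Gamma)=0$; since $\frac{d^4}{dx^4}$ with these conditions is non-negative, $\lambda_1(\hat\Gamma)=0\le \lambda_1$ of the current graph, which lets me invoke Theorem \ref{pendant}(1) with $r=1$ and $k_0=1$. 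With $r=1$ the index shift $r-1$ collapses to zero, so each step is spectrally non-increasing: $\lambda_k(T)\le\lambda_k(e_1)$, then $\lambda_k(\Gamma)\le\lambda_k(T)$, then $\lambda_k(\Gamma')\le\lambda_k(\Gamma)$. Chaining these with the spectrum of $e_1$ gives
\[
\lambda_k(\Gamma')\le\lambda_k(\Gamma)\le\lambda_k(T)\le\lambda_k(e_1)=\left(\frac{k\pi}{\ell_{max}}\right)^4,
\]
which is the assertion.

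The main obstacle I anticipate is the bookkeeping on the index shift in Theorem \ref{pendant}: one must verify at every attachment that $r=1$ is admissible, i.e. that the ground-state energy $0$ of the pendant never exceeds the relevant eigenvalue of the growing graph, for otherwise a shift $r-1>0$ would degrade the bound. The non-negativity of the operator together with the vanishing strengths removes precisely this obstacle. The only remaining care is to confirm that each gluing stays in classification I (rank one) rather than II or III, which it does exactly because the conditions at both glued vertices, and hence at their image, remain within the matched pair \eqref{gvc1}/\eqref{gvc1} or \eqref{vc2}/\eqref{vc2}.
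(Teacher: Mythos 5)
Your proposal is correct and follows essentially the same route as the paper's proof: anchor on the explicitly known spectrum of the longest edge $e_1$, then rebuild $T$, then $\Gamma$, then $\Gamma'$ by successive pendant attachments, each of which lowers every eigenvalue without an index shift, so that $\lambda_k(\Gamma')\leq\lambda_k(\Gamma)\leq\lambda_k(T)\leq\lambda_k(e_1)=\left(\frac{k\pi}{\ell_{max}}\right)^4$. Your write-up merely makes explicit two points the paper leaves implicit, namely that $r=1$ is admissible in Theorem \ref{pendant} (since zero-strength pendants have ground state $0$ and the operator is non-negative) and that each gluing stays in classification I.
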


\begin{proof}
Suppose we want to find an upper estimate for the graph $\Gamma'$, as shown in the figure below.
\begin{figure}[H]
    \centering
    \includegraphics[scale=1.3 ]{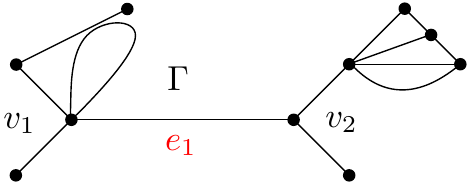}
\end{figure}
The following figure illustrates the idea of the proof, and at each step, we use a surgical toolkit to obtain the inequality.
 \begin{figure}[H]
    \centering
    \includegraphics[scale=1]{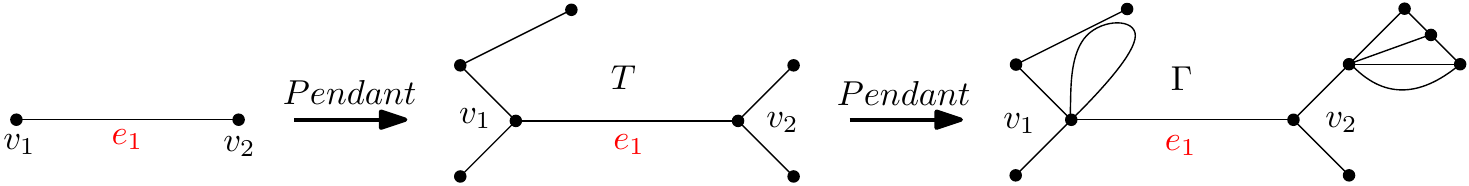}
    \caption{Schematic illustration of the proof of Lemma \ref{lemma}.}
\end{figure}

First, we choose an edge $e_1$ with length $\ell_{max}$ from $T$. Now, we want to reconstruct the graphs $T$ and $\Gamma$ from $e_1$ by using surgery toolkit step by step.\\
At second step, we add pendant edges to vertices $v_1$ and $v_2$ producing new vertices $v^*_i$. Third, we add pendant edges at $v^*_i$, if any. Thus, we have recreated the tree $T$. Finally, we attach pendant graphs at some vertices of $T$ to obtain $\Gamma$. At each step, the surgical transformation lowers all eigenvalues, and we get that the eigenvalues of a graph $\Gamma$ are bounded above by the eigenvalues of a maximal edge $e_1$ of $T$.
\end{proof}

The upper bound \eqref{ub2} in the above theorem can also be used to estimate eigenvalues for the class of graph described in the previous lemma. However, this estimate can be applied to more general graphs compared to \eqref{ub1} but \eqref{ub1} gives better estimate for graphs with large number of cycles. The following two theorems give an estimate of the eigenvalues in terms of the total length of a graph in contrast to length of a single long edge.

\begin{theorem}
Let $\Gamma$ be finite compact metric graph with the total length $\mathcal{L}$ and $\beta$ be the first Betti number of the graph $\Gamma$. If each vertex of $\Gamma$ is equipped with vertex condition either \eqref{gvc1}, or \eqref{vc2} with $\alpha_m=0, m=1,2,\cdots, |V|$. Then, 
\begin{equation}
    \lambda_k(\Gamma) \leq  \left(\frac{(k-3+3|E|+\beta) \pi}{\mathcal{L}} \right)^4.
\end{equation}
\end{theorem}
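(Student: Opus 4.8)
The plan is to estimate $\lambda_k(\Gamma)$ by the spectrum of a single interval of length $\mathcal{L}$, pushing all of the graph's complexity into a controlled shift of the eigenvalue index. The reference object is the interval $I$ of length $\mathcal{L}$ carrying \eqref{gvc1} or \eqref{vc2} conditions of zero strength at its two endpoints, whose eigenvalues are $\lambda_j(I)=(j\pi/\mathcal{L})^4$, as recorded in the example preceding \textcolor{blue}{Theorem} \eqref{thmeulerian}. Before starting I would reduce to the most constrained situation: since replacing a \eqref{gvc1} vertex by a \eqref{vc2} vertex only shrinks the form domain (the condition $\partial\varphi(x_j)=0$ at every incident end forces $\sum_{x_j}\overline{\sigma^{v_m}_{x_j}}\partial\varphi(x_j)=0$), it can only raise the eigenvalues, so it suffices to prove the bound when every vertex carries \eqref{vc2}; the general estimate then follows with the same index $k$.

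First I would eliminate the cycles. Exactly as in the proof of the preceding theorem, I realize $\Gamma$ from an auxiliary tree $T$ carrying the same $|E|$ edges, and hence the same total length $\mathcal{L}$, by identifying $\beta=|E|-|V|+1$ pairs of vertices. Each identification is a gluing in the sense of Section 5, so \textcolor{blue}{Theorem} \eqref{gluing} applies and, iterating the interlacing over the $\beta$ gluings, gives $\lambda_k(\Gamma)\le\lambda_{k+\beta}(T)$. Because gluing never lowers eigenvalues this is the correct direction for an upper estimate, and it isolates the Betti-number contribution $\beta$.

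It then remains to bound a tree of total length $\mathcal{L}$. Here I would decouple $T$ into the disjoint union $\Gamma_0$ of its $|E|$ edges, each carrying the sliding \eqref{vc2} conditions, whose spectrum is the union of the one-edge spectra $\{(j\pi/\ell_i)^4:j\ge 0\}$. Recovering $T$ from $\Gamma_0$ amounts to reimposing the vertex conditions, a positive finite-rank perturbation of the quadratic form to which \textcolor{blue}{Theorem} \eqref{gluing} and \textcolor{blue}{Theorem} \eqref{splittingtheorem} apply; tracking the rank of each reconnection together with the spurious zero modes created by the decoupling (one per edge) converts the per-edge indices into a single index measured against the total length and produces the shift $3(|E|-1)$, that is $\lambda_j(T)\le\bigl((j+3(|E|-1))\pi/\mathcal{L}\bigr)^4$. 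Combining the two steps yields $\lambda_k(\Gamma)\le\lambda_{k+\beta}(T)\le\bigl((k+\beta+3|E|-3)\pi/\mathcal{L}\bigr)^4$, which is the assertion.

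The main obstacle is precisely this last quantitative bookkeeping. One must check that each reconnection is a genuine positive finite-rank perturbation of the prescribed rank, uniformly over the admissible vertex types; the delicate point, already flagged in Section 5, is that splitting is \emph{not} the exact inverse of gluing for \eqref{gvc1}, which is exactly why the preliminary reduction to \eqref{vc2} vertices—for which \textcolor{blue}{Theorem} \eqref{splittingtheorem} is available—is essential. Counting how many eigenvalues of the decoupled graph lie below a threshold $(M\pi/\mathcal{L})^4$, through estimates of the form $\sum_i(\lfloor M\ell_i/\mathcal{L}\rfloor+1)$ and using $\sum_i\ell_i=\mathcal{L}$, is the step that pins down the constant $3|E|-3$ and is the only genuinely computational part of the argument.
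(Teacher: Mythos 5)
Your overall architecture is sound, and your final step (recovering $\Gamma$ from a tree $T$ with the same edges by $\beta$ pairwise gluings, giving $\lambda_k(\Gamma)\le\lambda_{k+\beta}(T)$ via Theorem \eqref{gluing}) is exactly the paper's last step. Where you genuinely diverge is in bounding the tree. The paper starts from the interval $I$ of length $\mathcal{L}$ with \eqref{vc2} endpoints, marks $|E|-1$ interior points as degree-two \eqref{gvc1} vertices (free, by Remark \eqref{remarkk}), changes those vertices from \eqref{gvc1} to \eqref{vc2} (this is where it pays an interlacing shift, charged as $2|E|-2$), splits them by Theorem \eqref{splittingtheorem} into the disjoint union of the $|E|$ edges (eigenvalues only decrease), and then glues up to $T$ (shift $|E|-1$); the total $3|E|-3+\beta$ is the sum of these charges. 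You instead bound the decoupled union $\Gamma_0$ of edges directly by counting. That route is legitimate and in fact sharper: each edge contributes $\lfloor M\ell_i/\mathcal{L}\rfloor+1$ eigenvalues not exceeding $(M\pi/\mathcal{L})^4$, and since $\sum_i\lfloor M\ell_i/\mathcal{L}\rfloor\ge M-|E|+1$, the union has at least $M+1$ eigenvalues below that threshold --- the $|E|$ zero modes exactly offset the floor defects --- so $\Gamma_0$ obeys the \emph{same} bound as the single interval, with no index shift at all. Feeding this into the $|E|-1+\beta$ class-I gluings gives $\lambda_k(\Gamma)\le\bigl((k+|E|-1+\beta)\pi/\mathcal{L}\bigr)^4$, which is stronger than the stated bound and implies it.

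The genuine weakness is that you never carry out this computation: you assert that the bookkeeping ``produces the shift $3(|E|-1)$'' and flag it yourself as the main obstacle. That constant does not come out of your construction --- it is the paper's constant, produced by the paper's different route (condition change plus gluing), and in your write-up it reads as reverse-engineered from the statement rather than derived. Since the honest count gives the smaller shift $|E|-1$, your strategy does prove the theorem, but the proposal as written is incomplete at precisely its decisive step; to close it you need only the floor-function estimate above. Two further remarks. First, Theorem \eqref{splittingtheorem} plays no role in your argument: your construction only ever glues, never splits, so your concern about splitting not inverting gluing is moot for your route (it is the paper that splits, which is exactly why it first converts interior vertices to \eqref{vc2}). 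Second, your preliminary reduction to all-\eqref{vc2} vertices via $D(h^4)\subset D(h^2)$ is correct and worth keeping, though for a different reason than you give: the gluing constructions (yours and the paper's) naturally produce only \eqref{vc2} vertices, so this monotonicity is what actually covers the \eqref{gvc1} case allowed in the statement --- a point the paper's own proof passes over in silence.
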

\begin{proof}
Consider an interval $I$ of total length $\mathcal{L}(\Gamma)$ with end points equipped with condition \eqref{vc2} with strengths zero. Obtain a path graph $P$ by creating the vertex of degree two at interior points of $I$ in accordance with remark \eqref{remarkk} in such a way that the total number of edges in $P$ along with their length coincides with edges in $\Gamma$. Now, we replace the vertex conditions at interior vertices from \eqref{gvc1} to \eqref{vc2} using \textcolor{blue}{Theorem} \eqref{changeofcondition}, call this graph $P'$, and then apply \textcolor{blue}{Theorem}  \eqref{splittingtheorem} to split the path graph $P'$ into disjoint sub-intervals $I_i$ for $i=1,2,\cdots,|E|$. Let $\Gamma'$ be the graph with disconnected components $I_i$, and glue all the intervals $I_i$ together using \textcolor{blue}{Theorem} \eqref{gluing}   to obtain the tree graph $T$, the gluing is performed in such a way that the graph $\Gamma$ can be obtained by pairwise gluing of $\beta$ pairs of vertices of tree $T$.     
\begin{equation*}
    \left(\frac{ k \pi }{\mathcal{L}(\Gamma)} \right)^4=\lambda_k(I)=\lambda_k(P) \geq \lambda_{k-2|E|+2}(P') \geq \lambda_{k-2|E|+2}(\Gamma') \geq \lambda_{k-3|E|+3}(T) \geq \lambda_{k-3|E|+3-\beta}(\Gamma).
\end{equation*}
\end{proof}

The following theorem provides upper and lower estimates on general eigenvalues of a quantum graph $\Gamma$ with vertices equipped with condition \eqref{vc1}. A similar result for  Neumann Laplacian acting on the edges of a metric graph $\Gamma$ was proved in \cite{BKS}. There, they used the rank of the resolvent difference of  Dirichlet and Neumann Laplacian to obtain bounds on the corresponding eigenvalue counting function. In contrast, we have used the interlacing of eigenvalues to obtain the bounds on corresponding counting functions.\\

For any fixed real number $\lambda$, the eigenvalue counting function $N_{\Gamma}(\lambda)$ is defined as the number of eigenvalues of a graph $\Gamma$ smaller than $\lambda$. Since the quantum graph $\Gamma$ is finite compact, and the operator is self-adjoint with a discrete spectrum bounded from below. Therefore the value of function $N_{\Gamma}(\lambda)$ is finite. 
$$ N(\lambda)= \# \{\lambda_i \in \sigma(\Gamma) : \lambda_i \leq \lambda\}. $$

\begin{theorem} \label{eigenvaluecount}
Let $\Gamma$ be finite compact graph of total length $\mathcal{L}$ with $|V|$ number of vertices and $|E|$ number of edges. Let each vertex of $\Gamma$ be equipped with condition \eqref{vc1} with $\alpha_m=0, m=1,2,\cdots,|V|$. Then 
\begin{equation} 
    \left(\frac{\pi}{\mathcal{L}} \right)^4 (k-|V|)^4 \leq \lambda_k(\Gamma) \leq \left(\frac{\pi}{\mathcal{L}} \right)^4 (k+|E|-1)^4.
\end{equation}
\end{theorem}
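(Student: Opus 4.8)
The plan is to compare $H$ with the one decoupling whose single-edge spectrum is completely explicit, namely the \emph{pinned} (simply supported) operator $S$ obtained by imposing the extended version of \eqref{vc1} (strength $\alpha_m=\infty$) at every vertex. As explained in Section~2, sending $\alpha_m\to\infty$ in \eqref{vc1} forces $\varphi=0$ together with $\partial^2\varphi=0$ at every endpoint and disconnects all edges, so $S$ is the orthogonal sum over the edges $e_i=[0,\ell_i]$ of $\tfrac{d^4}{dx^4}$ with $\varphi(0)=\varphi(\ell_i)=0$, $\varphi''(0)=\varphi''(\ell_i)=0$. On each edge the eigenfunctions are $\sin(n\pi x/\ell_i)$, so the spectrum is exactly $\{(n\pi/\ell_i)^4:n\ge 1\}$ with no zero mode. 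This is the key simplification: it avoids the transcendental free/clamped-beam spectra that would appear if one decoupled using \eqref{vc1} directly.

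First I would record the form relation. The forms of $S$ and of $H$ both equal $\int_\Gamma|\varphi''|^2\,dx$ (all strengths vanish), and $D(h_S)=\{\varphi\in W_2^2(\Gamma\setminus V):\varphi(x_j)=0 \text{ at every endpoint}\}$ is precisely the subspace of $D(h)=\{\varphi\in W_2^2(\Gamma\setminus V):\varphi\text{ continuous}\}$ cut out by the $|V|$ conditions $\varphi(v_m)=0$, $m=1,\dots,|V|$ (given continuity, the map $\varphi\mapsto(\varphi(v_1),\dots,\varphi(v_{|V|}))$ is onto, so the codimension is exactly $|V|$). Hence $h_S$ is a positive rank-$|V|$ perturbation of $h$, and the rank-$n$ form of \textcolor{blue}{Theorem} \eqref{berk1} gives the two-sided interlacing
\[
\lambda_k(\Gamma)\le \lambda_k(S)\le \lambda_{k+|V|}(\Gamma),\qquad k\ge 1.
\]

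Next I would convert the explicit spectrum of $S$ into bounds on its counting function $N_S(\lambda)=\#\{j:\lambda_j(S)\le\lambda\}$. With $\mu=\lambda^{1/4}$, edge $e_i$ contributes $\lfloor \ell_i\mu/\pi\rfloor$ eigenvalues, so $N_S(\lambda)=\sum_{i=1}^{|E|}\lfloor \ell_i\mu/\pi\rfloor$, and since $\sum_i\ell_i=\mathcal{L}$ and $y-1<\lfloor y\rfloor\le y$,
\[
\frac{\mathcal{L}\mu}{\pi}-|E| < N_S(\lambda)\le \frac{\mathcal{L}\mu}{\pi}.
\]
Inverting the right inequality gives $\lambda_j(S)\ge (\pi j/\mathcal{L})^4$; inverting the left one at $\mu=\pi(j+|E|-1)/\mathcal{L}$, where $\sum_i \ell_i\mu/\pi=j+|E|-1$ and the $|E|$ fractional parts sum to strictly less than $|E|$, forces $N_S\ge j$ and hence $\lambda_j(S)\le (\pi(j+|E|-1)/\mathcal{L})^4$.

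Finally I would combine the two ingredients. The left half of the interlacing with the upper estimate for $S$ yields $\lambda_k(\Gamma)\le\lambda_k(S)\le(\pi(k+|E|-1)/\mathcal{L})^4$; the right half, rewritten as $\lambda_k(\Gamma)\ge\lambda_{k-|V|}(S)$ for $k>|V|$, with the lower estimate for $S$ yields $\lambda_k(\Gamma)\ge(\pi(k-|V|)/\mathcal{L})^4$ (for $k\le|V|$ the lower claim is vacuous, as the $|V|$-dimensional space of continuous piecewise-linear functions makes $\lambda_k(\Gamma)=0$). I expect the main obstacle to be conceptual rather than computational: one must recognise that the correct comparison operator is the $\alpha=\infty$ pinned decoupling and not the naive \eqref{vc1}-decoupling, so that both the single-edge spectrum is the clean $(n\pi/\ell_i)^4$ and the perturbation has rank \emph{exactly} $|V|$ (one vanishing condition per vertex, on top of continuity), which is what produces the sharp shift $|V|$ in the lower bound. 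The only delicate quantitative point is the floor-sum estimate that improves the naive shift $|E|$ to $|E|-1$ in the upper bound.
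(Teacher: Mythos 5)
Your proposal is correct and takes essentially the same route as the paper's own proof: the paper also compares with the $\alpha_m=\infty$ pinned decoupling (its $\Gamma_\infty$, with explicit edge spectra $(n\pi/\ell_i)^4$), uses the codimension-$|V|$, rank-$|V|$ interlacing $\lambda_k(\Gamma)\leq\lambda_k(\Gamma_\infty)\leq\lambda_{k+|V|}(\Gamma)$, and applies the same floor-sum counting estimate losing at most $|E|-1$; your only (cosmetic) difference is inverting the counting function of $S$ directly and then interlacing eigenvalues, rather than transferring the interlacing to $N_\Gamma(\lambda)$ first as the paper does. Your parenthetical about $k\leq|V|$ corresponds to the paper's own remark that the lower estimate is only of interest for $k>|V|$ --- though note that your observation $\lambda_k(\Gamma)=0$ for $k\leq|V|$ shows the literal bound with the fourth power actually \emph{fails} for $k<|V|$ rather than being vacuous, so the estimate must indeed be read as a statement for $k>|V|$, exactly as in the paper.
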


\begin{proof}
Since the interaction strengths are zero at each vertex, therefore the quadratic form is non-negative, and thus the eigenvalues $\lambda \geq 0$. 
Hence the lower estimate in \eqref{eigenvaluecount}
is interesting only if $k > |V|$.\\
Consider the operator $\frac{d^4}{dx^4}$ acting on a interval of length $\ell$ , and assume that the end points are equipped with condition \eqref{vc1} with $\alpha_m=\infty, m=1,2,\cdots,|V|$. The eigenvalues are $\lambda_k=\left(  \frac{k \pi}{\ell}\right)^4$, for $\lambda \geq 0$ the value of eigencounting  function is $N_{[0,\ell]}(\lambda)=\left[\frac{\sqrt[4] {\lambda }}{\pi} \ell \right].$ Where square brackets mean to take the integer part of the argument. Let $\Gamma_\infty$ be graph obtained from $\Gamma$ by imposing the condition \eqref{vc1} with $\alpha_m=\infty$ at each vertex of $\Gamma$, we shall call them extended condition. Since these conditions imposed at a vertex of degree two or more does not connect the individual function living on the incident edge in any way.  The extended conditions have the effect of disconnecting the vertex of degree $d_v$ into $d_v$ vertices of degree one, so the graph $\Gamma_\infty$ is now decoupled into a set of intervals, and the set of eigenvalues is just the union of eigenvalues of each interval (counting multiplicities). Let  for some $\lambda \in \mathbb{R}$, $N_{\Gamma}(\lambda)$ and $N_{\Gamma_{\infty}}(\lambda)$ denotes the eigenvalue counting functions for the graphs $\Gamma$ and $\Gamma_{\infty}$, respectively. The counting function is given by 
\begin{equation} \label{counting1}
    N_{\Gamma_{\infty}}(\lambda)= \sum\limits_{i=1}^{|E|}N_{[0,\ell_i]}(\Gamma) =\left[\frac{\sqrt[4] {\lambda }}{\pi} \ell_1 \right] +\left[\frac{\sqrt[4] {\lambda }}{\pi} \ell_2 \right] + \cdots+ \left[\frac{\sqrt[4] {\lambda }}{\pi} \ell_{|E|} \right] \leq \left[\frac{\sqrt[4] {\lambda }}{\pi} \mathcal{L} \right].
\end{equation}
 Since for any $a$ and $b$, $ [a]+[b] \leq [a+b]$, therefore taking integer part of the sum of the terms is increased at most by the number of terms minus one as compared to adding integer parts only. As the number of terms in \eqref{counting1} are equal to number of edges $|E|$, thus 
 $$\left[\frac{\sqrt[4] {\lambda }}{\pi} \mathcal{L} \right] =\left[\frac{\sqrt[4] {\lambda }}{\pi} (\ell_1 +\ell_2+\cdots+\ell_{|E|})\right] \leq   \left[\frac{\sqrt[4] {\lambda }}{\pi} \ell_1 \right] +\left[\frac{\sqrt[4] {\lambda }}{\pi} \ell_2 \right] + \cdots+ \left[\frac{\sqrt[4] {\lambda }}{\pi} \ell_{|E|} \right] +|E|-1, $$
 and we have 
 \begin{equation} \label{counting2}
     \left[\frac{\sqrt[4] {\lambda }}{\pi} \mathcal{L} \right] -|E|+1 \leq N_{\Gamma_{\infty}}(\lambda).
 \end{equation}
 The formula \eqref{counting1} and \eqref{counting2}
gives bounds for the eigenvalues of the graph $\Gamma_{\infty}$. 
Let $h$ and $h_{\infty}$ denotes the quadratic form of the graphs $\Gamma$ and $\Gamma_{\infty}$ with domains $D(h)$ and $D(h_{\infty})$, respectively. Since the expression of the quadratic forms $h$ and $h_{\infty}$ are same; moreover, if a function $\varphi$ satisfy the extended conditions at some vertex $v$, then it also satisfies \eqref{vc1} at $v$. Therefore, the domain $D(h_{\infty})$ is a subspace of $D(h)$ and the quadratic forms $h$ and $h_{\infty}$ agree on $D(h_{\infty})$, thus minimizing over a smaller domain results in large eigenvalues. Furthermore, the domain $D(h_{\infty})$ is a co-dimension $|V|$ subspace of $D(h)$. Thus, by the rank $|V|$ nature perturbation, the following interlacing inequalities holds.
 $$\lambda_k(\Gamma) \leq \lambda_k(\Gamma_{\infty}) \leq \lambda_{k+|V|}(\Gamma)$$
  Then the above interlacing inequality implies the following relation between eigenvalue counting functions.
 \begin{equation} \label{cfrelation}
     N_{\Gamma_{\infty}}(\lambda) \leq N_{\Gamma}(\lambda) \leq N_{\Gamma_{\infty}}(\lambda) +|V|.
 \end{equation}
Thus,  
\begin{equation*}
     \left[\frac{\sqrt[4] {\lambda }}{\pi} \mathcal{L}(\Gamma) \right] -|E|+1 \leq N_{\Gamma_{\infty}}(\lambda) \leq N_{\Gamma}(\lambda) \leq N_{\Gamma_{\infty}}(\lambda) +|V| \leq \left[\frac{\sqrt[4] {\lambda }}{\pi} \mathcal{L}(\Gamma) \right]+|V|.
 \end{equation*}
 Setting $\lambda=\left (\frac{\pi k} {\mathcal{L}} \right)^4$ we get 
 $$k-|E|+1 \leq N_{\Gamma}\ \left( \frac{\pi^4 k^4}{\mathcal{L}^4} \right) \leq k+|V|,$$ so
 $$\lambda_{k - |E|+1} \leq \frac{\pi^4}{\mathcal{L}^4} k^4 \leq \lambda_{k+|V|}.$$
 This estimate implies that the multiplicity of the eigenvalues is uniformly bounded by $|E|+|V|$.
 setting $\tilde{k}=k-|E|+1$ we get $\lambda_{\tilde{k}} \leq \frac{\pi^4}{\mathcal{L}^4} (\tilde{k}+|E|-1)^4$ and similarly setting $\tilde{k}=k+|V|$ we get $\frac{\pi^4}{\mathcal{L}^4} (\tilde{k}-|V|)^4 \leq \lambda_{\tilde{k}}$. Since $\tilde{k}$ is a dummy subscript so we can replace it with $k$ and finally, we get,
 $$\left(\frac{\pi}{\mathcal{L}} \right)^4 (k-|V|)^4 \leq \lambda_k(\Gamma) \leq \left(\frac{\pi}{\mathcal{L}} \right)^4 (k+|E|-1)^4.$$
\end{proof}

Using estimates obtained in the previous theorem, we now provide some bounds on the eigenvalues of the same underlying metric graph but equipped with the other three conditions. Here we only proved the result for \eqref{gvc1}, however, this can proved for conditions \eqref{gvc2} and \eqref{vc2} similarly but yields different estimates.  

\begin{theorem} \label{sameconditions}
Let $\Gamma^1$ and $\Gamma^2$ be same underlying metric graphs equipped with conditions \eqref{vc1} and \eqref{gvc1}, respectively, with $\alpha_m=0$ at all vertices. Then
\begin{equation}
\left(\frac{\pi}{\mathcal{L}} \right)^4 (k-|V|)^4 \leq \lambda_k(\Gamma^2) \leq \left(\frac{\pi}{\mathcal{L}} \right)^4 (k+|E|+|V|-1)^4.
\end{equation}
\end{theorem}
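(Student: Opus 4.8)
The plan is to read off both inequalities from Theorem \ref{eigenvaluecount}, which already furnishes the two-sided estimate for the graph $\Gamma^1$ carrying conditions \eqref{vc1}, and then to track how the spectrum shifts when every vertex condition is strengthened from \eqref{vc1} to \eqref{gvc1}. The whole argument reduces to combining the quantitative bounds for $\Gamma^1$ with the qualitative interlacing already established in Section 4.

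For the lower bound I would first recall from the form-domain inclusions in \eqref{inclusions} that $D(h^2)\subset D(h^1)$: passing from \eqref{vc1} to \eqref{gvc1} imposes, at each vertex $v_m$, the one extra scalar condition $\sum_{x_j\in v_m}\overline{\sigma^{v_m}_{x_j}}\partial\varphi(x_j)=0$, while the quadratic form expression is unchanged. Minimizing the same Rayleigh quotient over the smaller space $D(h^2)$ can only raise eigenvalues, so $\lambda_k(\Gamma^1)\le\lambda_k(\Gamma^2)$, as recorded in \eqref{inequalities}. Chaining this with the lower estimate of Theorem \ref{eigenvaluecount} gives
\begin{equation*}
\left(\frac{\pi}{\mathcal{L}}\right)^4(k-|V|)^4\le\lambda_k(\Gamma^1)\le\lambda_k(\Gamma^2),
\end{equation*}
which is precisely the left-hand inequality claimed.

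For the upper bound I would use that $D(h^2)$ is in fact a codimension-$|V|$ subspace of $D(h^1)$, one linear constraint per vertex, so that the perturbation is of rank $|V|$. Applying the ``furthermore'' statement of Proposition \ref{changeofcondition}(1) (change of all $|V|$ vertex conditions from \eqref{vc1} to \eqref{gvc1}) yields $\lambda_k(\Gamma^2)\le\lambda_{k+|V|}(\Gamma^1)$. It then remains to substitute the index $k+|V|$ into the upper estimate of Theorem \ref{eigenvaluecount}, obtaining
\begin{equation*}
\lambda_k(\Gamma^2)\le\lambda_{k+|V|}(\Gamma^1)\le\left(\frac{\pi}{\mathcal{L}}\right)^4\big((k+|V|)+|E|-1\big)^4=\left(\frac{\pi}{\mathcal{L}}\right)^4(k+|E|+|V|-1)^4,
\end{equation*}
which matches the right-hand side exactly.

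There is no genuine analytic obstacle here; the proof is essentially bookkeeping on indices. The one point deserving care, which I would state explicitly, is the claim that the codimension of $D(h^2)$ in $D(h^1)$ equals $|V|$ rather than something smaller, so that the index shift is exactly $+|V|$: this holds because the $|V|$ scalar conditions are linearly independent on $D(h^1)$, each involving only the normal derivatives at its own vertex, so no degeneracy reduces the rank. Granting this, the two displayed chains close the proof, and I would note in passing that the same scheme reproduces the analogous (but numerically different) estimates for \eqref{gvc2} and \eqref{vc2} by using the corresponding codimension counts from Proposition \ref{changeofcondition}.
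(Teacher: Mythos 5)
Your proof is correct and follows essentially the same route as the paper: the paper's proof is exactly the chain $\left(\frac{\pi}{\mathcal{L}}\right)^4(k-|V|)^4 \leq \lambda_k(\Gamma^1) \leq \lambda_k(\Gamma^2) \leq \lambda_{k+|V|}(\Gamma^1) \leq \left(\frac{\pi}{\mathcal{L}}\right)^4(k+|E|+|V|-1)^4$, obtained by combining the Section 4 monotonicity and interlacing (Proposition \ref{changeofcondition}) with Theorem \ref{eigenvaluecount}, precisely as you do. Your explicit justification that the codimension of $D(h^2)$ in $D(h^1)$ is exactly $|V|$ (linear independence of the per-vertex constraints) is a detail the paper leaves implicit, and is a worthwhile addition.
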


\begin{proof}
The proof directly follows from the set of inequalities, as discussed in section $(4)$, and \textcolor{blue}{Theorem}  \eqref{changeofcondition}, along with \textcolor{blue}{Theorem}  \eqref{eigenvaluecount}.
\begin{align*}
    \left(\frac{\pi}{\mathcal{L}} \right)^4 (k-|V|)^4  \leq \lambda_k(\Gamma^1) \leq \lambda_k(\Gamma^2) \leq \lambda_{k+|V|}(\Gamma^1) \leq \left(\frac{\pi}{\mathcal{L}} \right)^4 (k+|E|+|V|-1)^4.
\end{align*}
\end{proof}




\end{document}